\newtheorem{theorem}{Theorem}
\newtheorem{lemma}{Lemma}
\newtheorem{remark}{Remark}
\newtheorem{claim}{Claim}
\begin{document}
%
\title{Is Non-Unique Decoding Necessary?}
%
%
%

\author{Shirin~Saeedi~Bidokhti,~\IEEEmembership{ Member,~IEEE,}
        and Vinod~M.~Prabhakaran,~\IEEEmembership{Member,~IEEE}
\thanks{S. Saeedi Bidokhti was with the School of Computer and Communication Sciences, Ecole Polytechnique F\'ed\'eral de Lausanne. She is now with the Institute for Communications Engineering, Technische Universit\"{a}t M\"{u}nchen
 (e-mail: shirin.saeedi@tum.de).}
\thanks{V. Prabhakaran is with the School of Technology and Computer Science,
Tata Institute of Fundamental Research
(e-mail: vinodmp@tifr.res.in).}
\thanks{S. Saeedi Bidokhti was
partially supported by the Swiss National Science Foundation fellowship no. 146617. Vinod M. Prabhakaran was partially supported by
a Ramanujan Fellowship from the Department of Science and Technology,
Government of India.}
\thanks{The material in this paper
was presented in part at the 2012 IEEE International Symposium on Information
Theory, Boston, MA.}
\thanks{Copyright (c) 2013 IEEE. Personal use of this material is permitted.  However, permission to use this material for any other purposes must be obtained from the IEEE by sending a request to pubs-permissions@ieee.org.}
}

\maketitle

\begin{abstract}
In multi-terminal
communication systems, signals carrying messages meant for different
destinations are often observed together at any given destination
receiver. Han and Kobayashi (1981) proposed a receiving strategy which
performs a joint unique decoding of messages of interest along with a subset
of messages which are not of interest.  It is now well-known that this
provides an achievable region which is, in general, larger than if the
receiver treats all messages not of interest as noise.  Nair and El
Gamal (2009) and Chong, Motani, Garg, and El Gamal (2008)
independently proposed a generalization called indirect or non-unique
decoding where the receiver uses the codebook structure of the messages to uniquely decode only its messages of interest. 
Non-unique decoding has since been used in various scenarios.

{The main result in this paper is to provide an interpretation and a systematic proof technique for why non-unique decoding, in all known cases where it
has been employed, can be replaced by a particularly designed joint unique decoding strategy, without any penalty from a rate region viewpoint.}
\end{abstract}

\begin{IEEEkeywords}
broadcast channel, joint decoding, non-unique decoding, indirect decoding.
\end{IEEEkeywords}

%
\IEEEpeerreviewmaketitle


\section{Introduction}

Coding schemes for multi-terminal systems with many information sources and many
destinations try to exploit the broadcast and interference nature of the
communication media. A consequence of this is that in many schemes the signals
received at a destination carry information, not only about messages that are
expected to be decoded at the destination (\textit{messages of interest}), but
also about messages that are not of interest to that destination.  

{Standard methods in
(random) code design (at the encoder) are rate splitting, superposition coding and
Marton's coding \cite{ElGamalKim, Marton79}. On the other hand,
standard decoding techniques are successive decoding and joint 
decoding \cite{ElGamalKim,HanKobayashi81}. In
\cite{HanKobayashi81}, Han and Kobayashi proposed a receiving strategy
which performs a \textit{joint decoding} of messages of interest along with a
subset of messages which are not of interest. We will refer to this receiving strategy as joint \textit{unique} decoding (and to the decoders as joint \textit{unique} decoders) to emphasize the fact that it seeks a unique choice not only for the messages of interest, but also for the rest of the messages being jointly decoded. It is now well-known
that employing such a joint unique decoder in the code design provides an achievable region which is, in general, larger
than if the receiver decodes the messages of interest while treating all messages not of interest as
noise.  
Recently, Nair and El Gamal \cite{NairElGamal09} and Chong,
Motani, Garg, and El Gamal \cite{ChongMotaniGargElGamal08}
independently proposed a generalization called \textit{indirect or non-unique
decoding} where the decoder looks for the unique messages of interest while using the codebook structure of all the messages (including the ones not of interest). Unlike the joint unique decoder, such a decoder does not necessarily uniquely decode messages not of interest, though it might narrow them down to a smaller list. 
We refer to such a decoder as a {non-unique decoder}. With such a distinction, non-unique decoders perform at least as well as joint-unique decoders. 
Coding schemes which employ non-unique decoders have since played a role in
achievability schemes in different multi-terminal problems such as
\cite{NNC11,ZaidiPiantanidaShamai12,NairElGamal10,BandemerElGamal11,ChiaElGamal11}.
It is of interest, therefore, to see if they can achieve  higher reliable transmission rates compared to codes that employ joint unique decoders.

In \cite{NairElGamal09}, the idea of non-unique (indirect) decoding is studied in
the context of broadcast channels with degraded message sets.  Nair
and El Gamal consider a $3$-receiver general broadcast channel where a
source communicates a common message $M_0$ to three receivers $Y_1$,
$Y_2$, and $Y_3$ and a private message $M_1$ only to one of the
receivers, $Y_1$ (Fig.~\ref{jrnl1-fig}).  \begin{figure}[h]
\setlength{\unitlength}{0.1in} 
\centering 
\begin{picture}(27,7) 
\put(8.75,0){\framebox(8.5,6){$p(y_1,y_2,y_3|x)$}}
\put(1.25,1.5){\framebox(5,3){Encoder}}
\put(19.75,0.25){\framebox(6,1.5){Decoder}}
\put(19.75,2.25){\framebox(6,1.5){Decoder}}
\put(19.75,4.25){\framebox(6,1.5){Decoder}}
\put(25.75,1){\vector(1,0){4}}
\put(25.75,3){\vector(1,0){4}}
 \put(25.75,5){\vector(1,0){4}}
\put(-2.75,3){\vector(1,0){4}}
\put(6.25,3){\vector(1,0){2.5}}
\put(17.25,1){\vector(1,0){2.5}}
\put(17.25,3){\vector(1,0){2.5}}
 \put(17.25,5){\vector(1,0){2.5}}
\put(6.75,4) {$X$}
\put(-3.75,4) {$M_0,M_1$}
\put(17.75,1.5) {$Y_3$} 
\put(17.75,3.5){$Y_2$}
\put(17.75,5.5) {$Y_1$}
\put(26.25,1.5) {$M_0$} 
\put(26.25,3.5){$M_0$}
\put(26.25,5.5) {$M_0,M_1$}
\end{picture}
\caption{The 3-receiver broadcast channel with two degraded message sets: message $M_0$ is destined to all receivers and message $M_1$ is destined to receiver $Y_1$.} 
\label{jrnl1-fig} 
\end{figure}
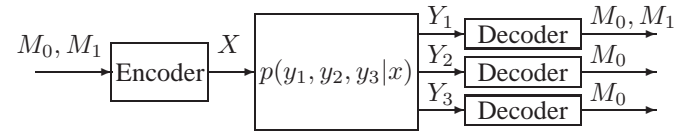
 They
characterize an inner-bound to the capacity region of this problem
using non-unique decoding and show its tightness for some special cases.
It turns out that the same inner-bound of \cite{NairElGamal09} can be
achieved using a joint unique decoding strategy at all receivers.  The
equivalence of the rate region achievable by non-unique decoding and
that of joint unique decoding was observed in \cite{NairElGamal09}, but it
was arrived at by comparing single letter expressions for the two rate
regions. A similar equivalence was also noticed
  in~\cite{ChongMotaniGargElGamal08}, again by comparing single-letter
  expressions. For noisy network coding \cite{NNC11}, it has been shown that the same rate region can be obtained using joint unique decoding and without the use of non-unique decoding \cite{WuXie10, WuXieAllerton10, KramerITW11, HouKramerISIT12}.
{{It was also observed in \cite{ZaidiPiantanidaShamai12} that non-unique decoding is not essential to achieve the capacity region of certain state-dependent multiple access channels and joint unique decoding suffices.}}

In this paper, we will provide a proof technique which systematically shows an equivalence between the rate region achievable through non-unique decoders and joint unique decoders in several examples. In particular, our line of argument is applicable to all known instances where non-unique decoding has been employed in the literature as we discuss in Section \ref{jrnl1-examples}. Our technique is based on designing a special auxiliary joint unique decoder which replaces the non-unique decoder and sheds some light on why this equivalence holds. However, we would like to note that analysis using non-unique decoding is often simpler and gives a more compact representation of the rate-region -- a fact observed in \cite{NairElGamal09,ChongMotaniGargElGamal08} -- which still makes it a valuable tool for analysis. 

Three remarks follow.
{
\begin{remark}
The reader might wonder if such an equivalence holds on the rate-regions of schemes employing joint unique decoders and non-unique decoders more generally. While our proof technique is systematic and general, it is coupled with the random nature of the codebook generation and the encoder design. Indeed, any decoding scheme is coupled with the encoding scheme and therefore asking for a more general equivalence (for any encoding scheme) seems to be a challenging problem (even to properly pose).
\end{remark}

\begin{remark}
Non-unique decoders are usually easier to work with (analytically), and they capture the correct error events (conceptually). One might wonder what the advantages of joint unique decoders are.  It is generally interesting to know if certain messages may be uniquely decoded at a receiver at no rate-cost. In principle, such messages may be exploited to improve the encoding schemes.  We refer the interested reader to \cite{SaeediPrabhakaranDiggavi13} where an application of using joint unique decoders is illustrated in designing a  block Markov encoding scheme for the broadcast channel with degraded messages. 
\end{remark}
\begin{remark}
In a related line of research, \cite{BandemerElGamalKim12} proves optimality of non-unique decoding for general discrete memoryless interference channels, when encoding is restricted to randomly generated codebooks, superposition coding, and time sharing. The result of this paper and the techniques we develop indicate that the same performance can  be achieved by employing joint unique decoding, and that joint unique decoding is also optimal in the sense discussed in  \cite{BandemerElGamalKim12}.
\end{remark}
}

In Section \ref{jrnl1-problemstatement}, we develop our proof technique in the context of \cite{NairElGamal09}.
While much of the discussion in this paper is
confined to this framework,  we show in Section \ref{jrnl1-examples}  that the technique applies more generally.

\section{Why Joint Unique Decoding Suffices in the Inner-Bound of Nair
and El Gamal in \cite{NairElGamal09}} \label{jrnl1-problemstatement}

We start this section by briefly reviewing the work of \cite{NairElGamal09}
where inner and outer bounds are derived for the capacity region of a
$3$-receiver broadcast channel with degraded message sets.  In particular,
we consider the case where a source communicates a common message (of rate $R_0$) to all
receivers, and a private message (of rate $R_1$) only to one of the receivers. 
A coding scheme is a sequence of $((2^{nR_0},2^{nR_1}),n)$ codes consisting of an encoder and a decoder
and is said to achieve a rate-tuple $(R_0,R_1)$ if the probability of error at the decoders decays to zero as $n$ grows large.
\paragraph*{Joint unique decoder vs. non-unique decoder}
We consider joint typical set decoding. A decoder at a certain destination may, in general, {\em examine} a subset of messages which includes, but is not necessarily limited to, the messages of interest to that destination. By the term examine, we mean that the decoder will try to make use of the structure (of the codebook) associated with the messages it examines. We say a coding scheme employs a {\em joint unique decoder} if the decoder tries to uniquely decode all the messages it considers (and declares an error if there is ambiguity in any of the messages, irrespective of whether such messages are of interest to the destination or not). In contrast, we say that a coding scheme employs a {\em non-unique decoder} if the decoder tries to decode uniquely only the messages of interest to the destination and tolerates ambiguity in messages which are not of interest.}

{{{Within this framework, Proposition
$5$ of \cite{NairElGamal09} establishes an achievable rate 
region for the problem of $3$-receiver broadcast channel with degraded message sets. The achievability is  through}}} a coding scheme that employs a non-unique decoder.
It turns out that employing a joint unique decoder, one can still
achieve the same inner-bound of \cite{NairElGamal09}.  
In this section, we develop a new proof technique to show this equivalence systematically. The same technique allows us to show the equivalence in all the examples considered in Section \ref{jrnl1-examples}. 


\subsection{Non-unique decoding in the achievable scheme of Nair  and El Gamal}
\label{jrnl1-indirectcodedesign}
The main problem studied in \cite{NairElGamal09} is that of sending two
messages over a $3$-receiver discrete memoryless broadcast channel
$p(y_1,y_2,y_3|x)$.  The source intends to communicate messages $M_0$ and
$M_1$ to receiver $1$ and message $M_0$ to receivers $2$ and $3$. Rates of
messages $M_0$ and $M_1$ are denoted by $R_0$ and $R_1$, respectively.
In \cite{NairElGamal09} an inner-bound to the capacity region is proved
using a standard encoding scheme based on superposition coding and Marton's
coding, and a non-unique decoding scheme called indirect decoding.
We briefly review this scheme.  

\subsubsection{Random codebook generation and encoding}  To design the
codebook, split the private message  $M_1$ into four independent parts $M_{10}$,
$M_{11}$, $M_{12}$, and $M_{13}$ of non-negative rates $S_0,S_1,S_2,S_3$,
respectively.  Let $R_1=S_0+S_1+S_2+S_3$, $T_2\geq S_2$ and $T_3\geq S_3$.
Fix a joint probability distribution $p(u,v_2,v_3,x)$.
Randomly and independently generate $2^{n(R_0+S_0)}$ sequences
$U^n(m_0,s_0)$, $m_0\in[1:2^{nR_0}]$ and $s_0\in[1:2^{nS_0}]$, each
distributed according to $\prod_ip_{U}(u_i)$.  For each
sequence $U^n(m_0,s_0)$, generate randomly and conditionally independently
($i$) $2^{nT_2}$ sequences $V_2^n(m_0,s_0,t_2)$, $t_2\in[1:2^{nT_2}]$, each
 according to $\prod_ip_{V_2|U}(v_{2i}|u_i)$, and ($ii$) $2^{nT_3}$ sequences $V_3^n(m_0,s_0,t_3)$,
$t_3\in[1:2^{nT_3}]$, each distributed according to $\prod_ip_{V_3|U}(v_{3i}|u_i)$. Randomly partition 
sequences $V^n_2(m_0,s_0,t_2)$ into $2^{nS_2}$ bins
$\mathcal{B}_2(m_0,s_0,s_2)$ and 
sequences
$V^n_3(m_0,s_0,t_3)$ into $2^{nS_3}$ bins $\mathcal{B}_3(m_0,s_0,s_3)$.  In
each product bin
$\mathcal{B}_2(m_0,s_0,s_2)\times\mathcal{B}_3(m_0,s_0,s_3)$, choose one
(random) jointly typical sequence pair
$(V^n_2(m_0,s_0,t_2),V^n_3(m_0,s_0,t_3))$. If there is no such pair,
declare an error whenever the message $(m_0,s_0,s_2,s_3)$ is to be transmitted.
Finally for each chosen jointly typical pair
$(V^n_2(m_0,s_0,t_2),V^n_3(m_0,s_0,t_3))$ in each product bin $(s_2,s_3)$,
randomly and conditionally independently generate $2^{nS_1}$ sequences
$X^n(m_0,s_0,s_2,s_3,s_1)$, $s_1\in[1:2^{nS_1}]$, each distributed
according to $\prod_ip_{X|UV_2V_3}(x_{i}|u_i,v_{2i},v_{3i})$.
To send the message pair $(m_0,m_1)$, where $m_1$ is expressed as
$(s_0,s_1,s_2,s_3)$, the encoder sends the codeword
$X^n(m_0,s_0,s_2,s_3,s_1)$.

\subsubsection {Non-unique decoding}
Receiver $Y_1$ jointly uniquely decodes all messages $M_0$, $M_{10}$, $M_{11}$,
$M_{12}$, and $M_{13}$. Receivers $Y_2$ and $Y_3$, however, decode $M_0$
indirectly, through a non-unique decoding scheme. More precisely,
\begin{itemize}
\item {}
Receiver $Y_1$ declares that the message tuple $(m_0,s_0,s_2,s_3,s_1)$ was
sent if it is the unique quintuple such that the received signal $Y^n_1$ is
jointly typical with $(U^n(m_0,s_0),V^n_2(m_0,s_0,t_2),V^n_3(m_0,s_0,t_3),$
$X^n(m_0,s_0,s_2,s_3,s_1))$, where  $s_2$ is the bin index of
$V_2^n(m_0,s_0,t_2)$ and  $s_3$ is the bin index of
$V_3^n(m_0,s_0,t_3)$.

\item{}
Receiver $Y_2$ declares that the message pair $(M_0,M_{10})=(m_0,s_0)$ was
sent if it finds a unique pair of indices $(m_0,s_0)$ for which the
received signal $Y^n_2$ is jointly typical with
$(U^n(m_0,s_0),V^n_2(m_0,s_0,t_2))$ for some index $t_2\in[1:2^{nT_2}]$. 
\item{}
Receiver $Y_3$ is similar to receiver $Y_2$ with $V_3$ and $t_3$,
respectively, instead of $V_2$ and $t_2$.
\end{itemize}

The above encoding/decoding scheme achieves rate pairs $(R_0,R_1)$ for
which inequalities \eqref{jrnl1-innerbound1} to \eqref{jrnl1-indirectY3}
below are satisfied for a joint distribution $p(u,v_2,v_3,x)$. The reader is referred to \cite{NairElGamal09} for the analysis of the
error probabilities.
\begin{eqnarray}
&&\hspace{-1.1cm}\text{Rate splitting constraints:}\nonumber\\
&& R_1=S_0+S_1+S_2+S_3\label{jrnl1-innerbound1}\\
&&T_2\geq S_2\\ 
&&T_3\geq S_3\\
&& S_0,S_1,S_2,S_3\geq 0 \\
&&\hspace{-1.1cm}\text{Encoding constraint:}\nonumber\\
&& T_2+T_3\geq S_2+S_3+I(V_2;V_3|U)\\
&&\hspace{-1.1cm}\text{Joint unique decoding constraints at receiver $Y_1$:}\nonumber\\
&&S_1\leq I(X;Y_1|U,V_2,V_3)\\
&&S_1+S_2\leq I(X;Y_1|UV_3)\\
&&S_1+S_3\leq I(X;Y_1|UV_2)\\
&&S_1+S_2+S_3\leq I(X;Y_1|U)\\
&&R_0+S_0+S_1+S_2+S_3\leq I(X;Y_1)\\
&&\hspace{-1.1cm}\text{Non-unique decoding constraint at receiver $Y_2$:}\nonumber\\
&&R_0+S_0+T_2\leq I(UV_2;Y_2)\label{jrnl1_indirectY2}\\
&&\hspace{-1.1cm}\text{Non-unique decoding constraint at receiver $Y_3$:}\nonumber\\
&&R_0+S_0+T_3\leq I(UV_3;Y_3).\label{jrnl1-indirectY3}
\end{eqnarray}
\subsection{Joint unique decoding suffices in the achievable scheme of Nair and El Gamal in \cite{NairElGamal09}}
\label{jrnl1_directdecoding}
Fix the codebook generation and encoding scheme to be that of Section
\ref{jrnl1-indirectcodedesign}. We will demonstrate {how a joint unique decoding scheme 
suffices by following these steps}:
\begin{itemize}
\item[(1)] We first analyze the non-unique decoder to characterize regimes
where it uniquely decodes all the messages it considers and regimes where it
decodes some of the messages non-uniquely.
\item[(2)] For each of the regimes, we deduce that the non-unique decoder may
be replaced by a joint unique decoder.
\end{itemize}
For the rest of this section, we only consider decoding schemes at receiver
$Y_2$. Similar arguments are valid for receiver $Y_3$ due to the symmetry
of the problem.  We refer to inequality \eqref{jrnl1_indirectY2}, which is shown in \cite{NairElGamal09} to ensure reliability of the non-unique decoder at receiver $Y_2$, as the non-unique decoding 
constraint \eqref{jrnl1_indirectY2}.

Let the rate pair $(R_0,R_1)$ be such that the non-unique decoder of receiver
$Y_2$ decodes message $M_0$ with high probability; i.e., the non-unique
decoding constraint \eqref{jrnl1_indirectY2} is satisfied. Consider the
following two regimes:
\begin{itemize}
\item [(a)]
 $R_0+S_0< I(U;Y_2)$,
\item[(b)]
 $R_0+S_0> I(U;Y_2)$.
\end{itemize}

In regime (a), it is clear from the defining condition that a joint unique decoder
which decodes $(M_0,M_{10})=(m_0,s_0)$ by finding the unique sequence $U^n(m_0,s_0)$
such that $(U^n(m_0,s_0),Y^n_2)$ is jointly typical will succeed  with high
probability. This is the joint unique decoder we may use in place of the non-unique
decoder for this regime. Notice that in this regime, while the non-unique decoder obtains
$(m_0,s_0)$ uniquely with high probability, it may not necessarily
succeed in uniquely decoding $t_2$. Indeed, in this regime insisting on
joint unique decoding of $U^n(m_0,s_0)$, $V^n_2(m_0,s_0,t_2)$ could,
in some cases, result in a strictly smaller achievable region.

Regime~(b) is the more interesting regime. Here it is clear that simply
decoding for $(M_0,M_{10})$ and treating all other messages as noise will not
work. Non-unique decoding must indeed be taking advantage of the codeword $V_2^n$ as
well. The non-unique decoder looks for a unique pair of messages $(m_0,s_0)$
such that there exists some $t_2$ for which
$(U^n(m_0,s_0),V^n(m_0,s_0,t_2),Y_2^n)$ is jointly typical. One may, in
general, expect that there could be several choices of $t_2$ even in this
regime. An important observation is that, in this regime, there is (with
high probability) only one choice for $t_2$. In other words, \emph{in this
regime, receiver 2 decodes $t_2$ uniquely along with $m_0$ and
$s_0$.} To see this, notice that using inequality \eqref{jrnl1_indirectY2} and (b) above, we have 
\begin{eqnarray}
T_2< I(V_2;Y_2|U).\label{jrnl1_indirectY2direct}
\end{eqnarray}
Inequalities \eqref{jrnl1_indirectY2} and \eqref{jrnl1_indirectY2direct}
together guarantee that a joint unique decoder can decode
messages $M_{0},M_{10}$, and $M_{12}$ with high probability. Note that condition \eqref{jrnl1_indirectY2} makes the  probability of an incorrect estimate for $(M_0,M_{10})$ vanish; and condition on $M_0,M_{10}$ being correctly estimated, inequality \eqref{jrnl1_indirectY2direct} drives the probability of an incorrect estimate for $M_{12}$ to zero. In other
words, in regime (b) the non-unique decoder ends up with a unique decoding of
the satellite codeword $V_2^n(m_0,s_0,t_2)$ with high probability; i.e., we
may replace the non-unique decoder with a joint unique decoder for messages $M_0$,
$M_{10}$, $M_{12}$. To summarize loosely, whenever the non-unique decoder is
forced to derive information from the codeword $V_2^n$ (i.e., when treating
$V_2^n$ as noise will not result in correct decoding), the non-unique decoder
will recover this codeword also uniquely. We make this loose intuition more
concrete in Section~\ref{onlinedecoder}.

The same argument goes through for receiver $Y_3$ and this shows that insisting on 
jointly uniquely decoding at all receivers is not restrictive in this problem. Thus,
we arrive at the following:
{\begin{theorem}
\label{jrnl1_theorem1}
For every rate pair $(R_0,R_1)$ satisfying the inner-bound of
\eqref{jrnl1-innerbound1}-\eqref{jrnl1-indirectY3}, there exists a coding scheme 
employing joint unique decoders which achieves the same rate pair.
\end{theorem}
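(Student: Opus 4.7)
The plan is to retain the codebook generation and encoder from Section~\ref{jrnl1-indirectcodedesign} verbatim and to substitute, at receivers $Y_2$ and $Y_3$, joint unique decoders whose precise definition depends on the particular rate point $(R_0,R_1)$ and the auxiliary parameters $(S_0,S_1,S_2,S_3,T_2,T_3)$ that witness its membership in the Nair--El~Gamal region. Receiver $Y_1$ already employs a joint unique decoder, so no change is needed there. By the symmetry noted in the excerpt, I will describe only the construction at $Y_2$; the construction at $Y_3$ is identical with $(V_3,T_3,S_3)$ in place of $(V_2,T_2,S_2)$.

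First I would split into the two regimes introduced just before \eqref{jrnl1_indirectY2direct}. In regime~(a), where $R_0+S_0 < I(U;Y_2)$, the replacement decoder seeks the unique pair $(m_0,s_0)$ for which $(U^n(m_0,s_0),Y_2^n)$ is jointly typical. A textbook packing-lemma argument under the strict inequality $R_0+S_0 < I(U;Y_2)$ gives vanishing error probability; this decoder examines (and jointly uniquely decodes) exactly the messages $M_0,M_{10}$, which are precisely the messages of interest at $Y_2$.

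Next, in regime~(b), where $R_0+S_0 > I(U;Y_2)$, the replacement decoder looks for the unique triple $(m_0,s_0,t_2)$ such that $(U^n(m_0,s_0),V_2^n(m_0,s_0,t_2),Y_2^n)$ is jointly typical. A standard error-event decomposition yields two binding constraints: $T_2 \leq I(V_2;Y_2|U)$ (for error events in which $(m_0,s_0)$ is correct but $t_2$ is not) and $R_0+S_0+T_2 \leq I(UV_2;Y_2)$ (for events in which $(m_0,s_0)$ is incorrect, noting that the superposition structure forces an incorrect $U^n$ to be paired with an independently generated $V_2^n$). The second constraint is exactly the non-unique decoding constraint \eqref{jrnl1_indirectY2}. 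The first follows from \eqref{jrnl1_indirectY2} via the chain rule $I(UV_2;Y_2)=I(U;Y_2)+I(V_2;Y_2|U)$ together with the regime hypothesis $R_0+S_0 > I(U;Y_2)$; this is precisely the derivation giving \eqref{jrnl1_indirectY2direct}.

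The only loose end is the knife-edge case $R_0+S_0=I(U;Y_2)$, which I would absorb by a standard closure argument: the Nair--El~Gamal region is closed, achievability on a dense subset extends to the boundary, and arbitrarily small perturbations of $S_0$ push the parameters into one of the two open regimes. The main obstacle, and in my view the conceptual core, is the derivation in regime~(b) that the inner-bound constraint \eqref{jrnl1_indirectY2} together with the regime hypothesis automatically yields $T_2 < I(V_2;Y_2|U)$; once this implication is in hand, swapping the non-unique decoder for a joint unique decoder on $(U^n,V_2^n)$ is routine, and the argument at $Y_3$ is entirely analogous.
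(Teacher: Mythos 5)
Your proposal is correct and follows essentially the same route as the paper's primary proof of Theorem~\ref{jrnl1_theorem1} in Section~\ref{jrnl1_directdecoding}: split on the sign of $R_0+S_0-I(U;Y_2)$, use a joint unique decoder for $(m_0,s_0)$ in regime~(a), derive $T_2<I(V_2;Y_2|U)$ from \eqref{jrnl1_indirectY2} and the regime hypothesis via the chain rule to justify a joint unique decoder for $(m_0,s_0,t_2)$ in regime~(b), and dismiss the knife-edge case as measure zero. One caveat is worth flagging: your claim that ``a standard error-event decomposition'' and ``a textbook packing-lemma argument'' dispose of regime~(b) glosses over a genuine technicality that the paper itself only resolves in its alternative auxiliary-decoder proof (Section~\ref{onlinedecoder} and Appendix~\ref{app:typicalitylemma}). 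Because the encoder selects a jointly typical pair $(V_2^n,V_3^n)$ inside each product bin, the codebook is \emph{not} independent of the selected indices; conditioned on the transmitted indices being all ones, $Y_2^n$ need not be independent of a competing codeword $V_2^n(1,1,\hat t_2)$ in the same bin, so the error event with correct $(m_0,s_0)$ but wrong $t_2$ is not covered by the off-the-shelf packing lemma. The fix is to show $p(\hat v_2^n\mid u^n,v_2^n,v_3^n,\mathbf{1})\leq 2^{n\delta(\epsilon)}\,p(\hat v_2^n\mid u^n)$, which costs only a vanishing exponent but requires the Chernoff-type counting argument of Appendix~\ref{app:typicalitylemma}; without some such statement your regime-(b) bound is incomplete, though the overall architecture of the argument is exactly the paper's.
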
}

The idea behind the proof of Theorem~\ref{jrnl1_theorem1} was simple and
general. Consider a non-unique decoder which is decoding some messages of
interest. The message of interest in our problem is $M_0$.
Along with this message of interest, the decoder might also decode certain
other messages, $M_{10}$ and $M_{12}$ for example. The two main steps of
the proof is then as follows.
\begin{itemize}
 \item [($1$)] Analyze the non-unique decoder to determine what messages it
decodes uniquely. Depending on the regime of operation, the non-unique
decoder ends up uniquely decoding a subset of its intended messages, and
non-uniquely the rest of its intended messages. For example in regime (a)
above, the non-unique
decoder uniquely decodes only $M_0$ and $M_{10}$ and it might not be able
to settle on $M_{12}$. While in regime (b), the non-unique decoder ends up decoding all of its three messages
$M_0$, $M_{10}$, and $M_{12}$ uniquely.
\item [($2$)] In each regime of operation characterized in step ($1$), use
a joint unique decoder to only decode the messages that the non-unique decoder
uniquely decodes. In the above proof, this would be a joint unique decoder that
decodes $M_0$ and $M_{10}$ in regime (a) and a joint unique decoder
that decodes messages $M_0$, $M_{10}$, and $M_{12}$ in regime (b). Verify that the
resulting joint unique decoder does support the corresponding part of the 
rate region achieved by the non-unique decoding scheme. 
\end{itemize}

Though the idea is generalizable, analyzing the non-unique decoder in step
(1) is a tedious task.  Even for this very specific problem, it may not be
entirely clear how the condition dividing cases (a) and (b) can be derived.
Next, we try to resolve this using an approach which generalizes more
easily.

\subsection{An alternative proof to Theorem \ref{jrnl1_theorem1}: an auxiliary decoder}
\label{onlinedecoder}

We take an alternative approach in this section to prove Theorem \ref{jrnl1_theorem1}. 
The proof technique we present here has the same spirit as the proof in
Section~\ref{jrnl1_directdecoding}, but the task of determining which
subset of messages should be decoded in what regimes will be implicit
rather than explicit as before. To this end, we introduce an auxiliary
decoder which serves as a tool to help us develop the proof ideas. We do
not propose this more complicated auxiliary decoder as a new decoding
technique, but only as a proof technique to show sufficiency of joint unique 
decoding in the problem of \cite{NairElGamal09}. 
%
We analyze {the error probability of the} auxiliary decoder at receiver $Y_2$ and show that under the
random coding experiment, it decodes correctly with high probability if the
non-unique decoding constraint \eqref{jrnl1_indirectY2} holds. From this
auxiliary decoder and its performance, we will then be able to conclude
that there exists a joint unique decoding scheme that succeeds with high
probability.
  

We now define the auxiliary decoder. The auxiliary decoder at
receiver $Y_2$ is a more involved decoder which has access to two component
(joint unique) decoders:
\begin{itemize}
 \item  Decoder $1$ is a joint unique decoder which decodes messages $M_0$ and
$M_{10}$. It finds $M_0$, and $M_{10}$ by looking for
the unique sequence $U^n(m_0,s_0)$ for which the pair
$(U^n(m_0,s_0),Y^n_2)$ is jointly typical, and declares an error if there
exists no such unique sequence.
\item Decoder $2$ is a joint unique decoder which decodes messages $M_0$, $M_{10}$,
$M_{12}$. It finds $M_0$, $M_{10}$, $M_{12}$
by looking for the unique sequences $U^n(m_0,s_0)$ and $V_2^n(m_0,s_0,t_2)$
such that triple $(U^n(m_0,s_0),V_2^n(m_0,s_0,t_2),Y^n_2)$ is jointly
typical, and declares an error when such sequences do not exist.
\end{itemize}
The auxiliary decoder declares an error if either (a) both component decoders declare
errors, or (b) if both of them decode, but their decoded $(M_0,M_{10})$ messages
do not match. In all other cases it declares the $(M_0,M_{10})$ output of the
component decoder which did not declare an error as the decoded message. 

We analyze the error probability under the random coding experiment 
of such an auxiliary decoder at receiver $Y_2$ and prove that for any
$\epsilon>0$, there is a large enough $n$ such that
\begin{align}
 \Pr&(\text{error at the auxiliary decoder})\nonumber\\&\qquad\qquad\leq \epsilon+2^{n(R_0+S_0+T_2-I(UV_2;Y_2)+\gamma(\epsilon))}\!,\label{jrnl1-indirecterror}
\end{align}
where $\gamma(\epsilon)\to 0$ as $\epsilon\to 0$.
Inequality \eqref{jrnl1-indirecterror} shows that for large enough $n$ and
under the non-unique decoding constraint \eqref{jrnl1_indirectY2}, the
auxiliary decoder has an arbitrary small probability of failure.

{
{ 
To prove \eqref{jrnl1-indirecterror}, assume that $(m_0,s_0,s_1,s_2,s_3)=(1,1,1,1,1)$ is sent and indices
$t_1$ and $t_2$ in the encoding procedure are $(t_2,t_3)=(1,1)$. This assumption causes no loss of generality due to the symmetry of the codebook construction. We denote the random variables corresponding to these indices by $\mathcal{I}_{m_0}$, $\mathcal{I}_{s_0}$,$\ldots,$ $\mathcal{I}_{t_3}$ and we refer to the tuple $(\mathcal{I}_{m_0},\mathcal{I}_{s_0},\mathcal{I}_{s_1},\mathcal{I}_{s_2},\mathcal{I}_{s_3},\mathcal{I}_{t_2},\mathcal{I}_{t_3})$ by $\mathcal{I}$.
In the rest of this section, we assume $\mathcal{I}=\mathbf{1}$, the all $1$'s vector, and analyze the probability that receiver $Y_2$
declares $M_0\neq 1$. 
Receiver
$Y_2$ makes an error in decoding $M_0$ only if at least one of the
following events occur:
\begin{itemize}
 \item [$\mathcal{E}_1$:]\textit{The channel and/or the encoder is atypical:} the triple
$(U^n(1,1),V^n_2(1,1,1),V_3^n(1,1,1),$ $Y^n_2)$ is not jointly typical.
 \item [$\mathcal{E}_2$:]\textit{Both decoders fail to decode uniquely and declare
errors:} there are at least two distinct pairs $(\tilde{m}_0,\tilde{s}_0)$ and $(\breve{m}_0,\breve{s}_0)$
such that both pairs $(U^n(\tilde{m}_0,\tilde{s}_0),Y^n_2)$ and $(U^n(\breve{m}_0,\breve{s}_0),Y^n_2)$ are jointly typical; 
and similarly there are at least two distinct triples $(\hat{m}_0,\hat{s}_0,\hat{t}_2)$ and $(\check{m}_0,\check{s}_0,\check{t}_2)$ such that both 
triples $(U^n(\hat{m}_0,\hat{s}_0),V^n_2(\hat{m}_0,\hat{s}_0,\hat{t}_2),$ $Y^n_2)$ and $(U^n(\check{m}_0,\check{s}_0),V^n_2(\check{m}_0,\check{s}_0,\check{t}_2),$ $Y^n_2)$ are jointly typical.
\end{itemize}
Therefore, the probability that receiver $Y_2$ makes an error is upper-bounded in terms of the above events.
\begin{align}
\Pr&(\text{error at the auxiliary decoder}|\mathcal{I}=\mathbf{1})\nonumber\\
&\qquad\qquad\leq\Pr(\mathcal{E}_1|\mathcal{I}=\mathbf{1})+\Pr(\mathcal{E}_2\cap\bar{\mathcal{E}}_1|\mathcal{I}=\mathbf{1})\nonumber\\
& \qquad\qquad\leq\epsilon+\Pr(\mathcal{E}_2\cap\bar{\mathcal{E}}_1|\mathcal{I}=\mathbf{1}).\label{jrnl1-pe-bound}
\end{align}
where \eqref{jrnl1-pe-bound} follows because $\Pr(\mathcal{E}_1|\mathcal{I}=\mathbf{1})=\Pr((U^n(1,1),V^n_2(1,1,1),V_3^n(1,1,1),Y^n_2)\notin A^n_\epsilon|\mathcal{I}=\mathbf{1})\leq \epsilon$ (ensured by the encoding and 
the Asymptotic Equipartition 
Property). To upper-bound $\Pr(\mathcal{E}_2\cap\bar{\mathcal{E}}_1|\mathcal{I}=\mathbf{1})$, we write
\begin{align}
\nonumber\\
&\Pr(\mathcal{E}_2\cap\bar{\mathcal{E}}_1|\mathcal{I}=\mathbf{1})\\
&\stackrel{(a)}{\leq}\Pr\!\!\left(\hspace{-.3cm} \left.\begin{array}{c} ( U^n (1,1),V_2^n (1,1,1),V_3^n (1,1,1),Y^n_2\!)\!\in\! A^n_\epsilon ,\\\text{ and}\\(U^n(\tilde{m}_0,\tilde{s}_0),Y^n_2)\in A^n_\epsilon\\ \quad\quad  \text{for some }(\tilde{m}_0,\tilde{s}_0)\neq(1,1),\\\text{ and } \\(U^n(\hat{m}_0,\hat{s}_0),V^n_2(\hat{m}_0,\hat{s}_0,\hat{t}_2),Y^n_2)\!\in\! A^n_\epsilon\\ \quad\quad \text{for some }(\hat{m}_0,\hat{s}_0,\hat{t}_2)\neq(1,1,1)\end{array}\hspace{-.2cm}\right|\mathcal{I}\!=\!\mathbf{1} \hspace{-.15cm}\right)\nonumber\\\label{jrnl1-errorauxiliary}\\
&\leq\Pr\!\!\left(\hspace{-.3cm}\left.\begin{array}{c} ( U^n (1,1),V_2^n (1,1,1),V_3^n (1,1,1),Y^n_2\!)\!\in\! A^n_\epsilon ,\\\text{ and}\\ (U^n(\tilde{m}_0,\tilde{s}_0),Y^n_2)\in A^n_\epsilon\\ \quad\quad \text{for some }(\tilde{m}_0,\tilde{s}_0)\neq(1,1),\\\text{ and } \\(U^n(\hat{m}_0,\hat{s}_0),V^n_2(\hat{m}_0,\hat{s}_0,\hat{t}_2),Y^n_2)\!\in\! A^n_\epsilon\\ \quad\quad \text{for some }(\hat{m}_0,\hat{s}_0)\neq(1,1)\text{ and }\ \hat{t}_2\end{array} \hspace{-.2cm}\right|\mathcal{I}\!=\!\mathbf{1} \hspace{-.15cm}\right)
\nonumber\\
&\quad+\Pr\!\!\left(\hspace{-.3cm}\left.\begin{array}{c} ( U^n (1,1),V_2^n (1,1,1),V_3^n (1,1,1),Y^n_2\!)\!\in\! A^n_\epsilon ,\\\text{ and}\\
(U^n(\tilde{m}_0,\tilde{s}_0),Y^n_2)\!\in\! A^n_\epsilon\\ \quad\quad \text{for some
}(\tilde{m}_0,\tilde{s}_0)\neq(1,1),\\\text{ and }\\ \text{all }(U^n(\hat{m}_0,\hat{s}_0),V^n_2(\hat{m}_0,\hat{s}_0,\hat{t}_2),Y^n_2)\!\in\!
A^n_\epsilon\\\text{are s.t. } (\hat{m}_0,\hat{s}_0)\!=\!(1,1)\\\text{ with at
least one s.t.}\, \hat{t}_2\!\neq\! 1\end{array}\hspace{-.2cm}\right|\mathcal{I}\!=\!\mathbf{1} \hspace{-.15cm}\right)\nonumber\\
\label{jrnl1-onlinedecode-prob}
\end{align}
In the above chain of inequalities, $(a)$ holds because event $\mathcal{E}_2\cap\bar{\mathcal{E}}_1$ is a
subset of the event on the right hand side.

It is worthwhile to interpret inequality \eqref{jrnl1-onlinedecode-prob}. The error event of interest, roughly speaking, is partitioned into the following two events:
\begin{itemize}
\item[(1)] The auxiliary decoder makes an error and the non-unique decoder of
Section \ref{jrnl1-indirectcodedesign} also makes an error. 
\item[(2)] The auxiliary decoder makes an error but the non-unique decoder of
Section \ref{jrnl1-indirectcodedesign} decodes correctly. We will show
that the probability of this event is small. Note that under this error
event, (a) component decoder 1 fails (i.e., it is not possible to decode
$(M_0,M_{10})$ by treating $V_2^n$ as noise), but still (b) non-unique decoder
succeeds (i.e., the non-unique decoder must be deriving useful information
by considering $V_2^n$). By showing that this error event has a small
probability, we in effect show that whenever (a) and (b) hold, it is
possible to jointly uniquely decode the $V_2^n$ codeword as well. This
makes the rough intuition from Section~\ref{jrnl1_directdecoding} more
concrete.
\end{itemize} 

To bound the error probability, we bound the two terms of inequality \eqref{jrnl1-onlinedecode-prob} separately. 
The first term of \eqref{jrnl1-onlinedecode-prob} {is} bounded by the probability of the non-unique decoder making an error:
\begin{eqnarray}
 &&\hspace{-.7cm}\Pr\!\!\left( \hspace{-.3cm} \left.\begin{array}{c}  ( U^n(1,1),V_2^n(1,1,1),V_3^n(1,1,1),Y^n_2\!)\in A^n_\epsilon,\\\text{ and}\\(U^n(\tilde{m}_0,\tilde{s}_0),Y^n_2)\in A^n_\epsilon\\  \text{for some }(\tilde{m}_0,\tilde{s}_0)\neq(1,1)\\\text{ and, }\\(U^n(\hat{m}_0,\hat{s}_0),V^n_2(\hat{m}_0,\hat{s}_0,\hat{t}_2),Y^n_2)\in A^n_\epsilon \\ \text{for some }(\hat{m}_0,\hat{s}_0)\neq(1,1) \text{ and } \hat{t}_2\end{array}\right|\mathcal{I}=\mathbf{1}\hspace{-.15cm} \right)\nonumber\\
&\leq&
\!\Pr\left( \hspace{-.3cm}\left.\begin{array}{c}(U^n(\hat{m}_0,\hat{s}_0),V^n_2(\hat{m}_0,\hat{s}_0,\hat{t}_2),Y^n_2)\in A^n_\epsilon \\ \text{for some }(\hat{m}_0,\hat{s}_0)\neq(1,1) \text{ and } \hat{t}_2\end{array}\right|\mathcal{I}=\mathbf{1}\hspace{-.1cm}\right)\nonumber\\
&\leq&\hspace{-.6cm}\sum_{\substack{(\hat{m}_0,\hat{s}_0)\neq(1,1)\\\hat{t}_2}}\hspace{-.6cm}\Pr\!\left(\hspace{-.25cm} \left.\begin{array}{l}( U^n (\hat{m}_0,\hat{s}_0),V^n_2 (\hat{m}_0,\hat{s}_0,\hat{t}_2),Y^n_2\!)\!\in\! A^n_\epsilon\end{array}\!\!\!\right|\!\mathcal{I}=\mathbf{1} \right)\nonumber\\
&\leq& 2^{nT_2}2^{n(R_0+S_0)}2^{-n(I(UV_2;Y_2)-\gamma_1(\epsilon))}.\label{jrnl1-pe-term1}
\end{eqnarray}

The second term of \eqref{jrnl1-onlinedecode-prob} {is} upper-bounded by the expression in \eqref{jrnl1-pe-term2}, as we elaborate.

\begin{align}
&\hspace{-.3cm}\Pr\left(\hspace{-.3cm}\left.\begin{array}{c} ( U^n(1,1),V_2^n(1,1,1),V_3^n(1,1,1),Y^n_2\!)\!\in\! A^n_\epsilon,\\\text{ and}\\
(U^n(\tilde{m}_0,\tilde{s}_0),Y^n_2)\!\in\! A^n_\epsilon \\ \text{for some
}(\tilde{m}_0,\tilde{s}_0)\neq(1,1),\\\text{ and}\\ \text{all }
( U^n(\hat{m}_0,\hat{s}_0),V^n_2(\hat{m}_0,\hat{s}_0,\hat{t}_2),Y^n_2\!)\!\in\!
A^n_\epsilon\\ \text{ are s.t.}\ (\hat{m}_0,\hat{s}_0)=(1,1)\\\text{ with at
least one s.t. } \hat{t}_2\neq 1\end{array}\!\right|\mathcal{I}=\mathbf{1} \hspace{-.15cm}\right)\nonumber\\
&\leq2^{n(R_0+S_0+T_2)}2^{-n(I(UV_2;Y_2)-\gamma_2(\epsilon)-\delta(\epsilon))}\label{jrnl1-pe-term2}
\end{align}
We derive the bound \eqref{jrnl1-pe-term2} as follows. First, we write the following chain of inequalities
.
\allowdisplaybreaks
\begin{align}
\nonumber\\
&\Pr\left(\hspace{-.2cm}\left.\begin{array}{c} ( U^n (1,1),V_2^n (1,1,1),V_3^n (1,1,1),Y^n_2\!)\!\in\! A^n_\epsilon,\\\text{ and}\\
(U^n(\tilde{m}_0,\tilde{s}_0),Y^n_2)\!\in\! A^n_\epsilon \\ \text{for some
}(\tilde{m}_0,\tilde{s}_0)\neq(1,1),\\\text{ and}\\ \text{all }
( U^n(\hat{m}_0,\hat{s}_0),V^n_2(\hat{m}_0,\hat{s}_0,\hat{t}_2),Y^n_2\!)\!\in\!
A^n_\epsilon\\ \text{ are s.t.}\ (\hat{m}_0,\hat{s}_0)=(1,1)\\\text{ with at
least one s.t. } \hat{t}_2\neq 1\end{array}\!\right|\mathcal{I}=\mathbf{1} \!\! \right)\nonumber\\
&\leq\Pr\!\left(\hspace{-.3cm}\left.\begin{array}{c} ( U^n (1,1),V_2^n (1,1,1),V_3^n (1,1,1),Y^n_2\!)\!\in\! A^n_\epsilon,\\\text{ and}\\
(U^n(\tilde{m}_0,\tilde{s}_0),Y^n_2)\in A^n_\epsilon \\ \text{for some
}(\tilde{m}_0,\tilde{s}_0)\neq(1,1),\\\text{ and }\\
(U^n(1,1),V^n_2(1,1,\hat{t}_2),Y^n_2)\in
A^n_\epsilon\\ \text{for some
}\hat{t}_2\neq 1\end{array}\!\! \right|\!\mathcal{I}=\mathbf{1}\hspace{-.15cm}\right)\nonumber\\
&\leq \sum_{\substack{(\tilde{m}_0,\tilde{s}_0)\neq(1,1)\\\hat{t}_2\neq 1}}\hspace{-.55cm}\Pr \left(\hspace{-.3cm}\left.\begin{array}{c}  ( U^n ( 1, 1 ),\!V_2^n ( 1, 1, 1 )\hspace{1cm}\\\hspace{1cm},\!V_3^n (1 , 1, 1 ),\!Y^n_2\!)\! \in\!  A^n_\epsilon ,\\\text{ and}\\(U^n(\tilde{m}_0,\tilde{s}_0),Y^n_2)\in A^n_\epsilon\\\text{ and}\\ (U^n(1,1),V^n_2(1,1,\hat{t}_2),Y^n_2)\in A^n_\epsilon,\end{array}\!\!\!\right|\!\mathcal{I}\!=\!\mathbf{1}\hspace{-.15cm} \right)\nonumber\\
\nonumber\\
&\leq 2^{n(R_0 + S_0 + T_2)}\nonumber\\
&\quad\times\Pr\!\left(\hspace{-.3cm}\left.\begin{array}{c}  ( U^n ( 1, 1 ), V_2^n ( 1, 1, 1 )\hspace{1cm}\\\hspace{1cm}, V_3^n ( 1, 1, 1 ), Y^n_2\!)\!\in\! A^n_\epsilon ,\\\text{ and}\\(U^n(\tilde{m}_0,\tilde{s}_0),Y^n_2)\in A^n_\epsilon ,\\\text{ and}\\(U^n(1,1),V^n_2(1,1,\hat{t}_2),Y^n_2)\in A^n_\epsilon\end{array}\!\!\!\right|\!\mathcal{I}\!=\!\mathbf{1}\!\!\right)\label{jrnl1-ref3}
\end{align}
where we have $(\tilde{m}_0,\tilde{s}_0)\neq 1$ and $\hat{t}_2\neq 1$ in
the event in inequality \eqref{jrnl1-ref3}.

Next, we bound the probability term in \eqref{jrnl1-ref3}.
In what follows, $U^n$, $V_2^n$, $V_3^n$, $\tilde{U}^n$, $\hat{V}^n_2$ denote $U^n(1,1)$, $V_2^n(1,1,1)$, $V_3^n(1,1,1)$, $U^n(\tilde{m}_0,\tilde{s}_0)$, $V^n_2(1,1,\hat{t}_2)$,  respectively. Also, $p_{U^n|\mathcal{I}}(u^n|\mathbf{1})$ denotes $\Pr(U^n=u^n|\mathcal{I}=\mathbf{1})$. We sometimes drop the subscripts of probabilities if there is no ambiguity; e.g.,  $p(u^n|\mathbf{1})$ is just $p_{U^n|\mathcal{I}}(u^n|\mathbf{1})$.

In order to bound the probability term in \eqref{jrnl1-ref3}, one should treat $p_{U^nV_2^nV_3^nY^n_2\tilde{U}^n\hat{V}_2^n|\mathcal{I}}(u^n,v_2^n,v_3^n,y_2^n,\tilde{u}^n,\hat{v}_2^n|\mathbf{1})$. This would have been a straightforward task 
if the generated codebook was independent of indices $\mathcal{I}$. Nonetheless, it is an important observation that this is not the case\footnote{This was pointed out to us by anonymous reviewers, to whom we are grateful. Similar observations are made in \cite{MineroLimKim13} and \cite{GroverWagnerSahai13} where proof techniques were developed to handle such technicalities.}. For example, given $U^n$ (and under the conditioning $\mathcal{I}=\mathbf{1}$), $Y^n_2$ may not be independent of $\hat{V}^n_2$. Interestingly however, almost the same result holds. We address this in the following. We follow the proof idea in \cite{MineroLimKim13} to address this technicality. 
\begin{align}
&\Pr\left(\hspace{-.2cm}\left.\begin{array}{l} (U^n,V_2^n,V_3^n,Y_2^n)\in\mathcal{A}_\epsilon^n \text{ and }\\(\tilde{U}^n,Y^n_2)\in A^n_\epsilon\text{ and } (U^n,\hat{V}^n_2,Y^n_2)\in A^n_\epsilon\end{array}\right|\mathcal{I}=\mathbf{1}\right)\nonumber\\
&=\!\!\!\!\!\! \! \sum_{ (\!u^n\!\!,v_2^n\!,v_3^n\!,y_2^n\!) \in\mathcal{A}_\epsilon^n}  \!\!  \sum_{\substack{\tilde{u}^n:\\  (\!\tilde{u}^n\!\!,y_2^n\!) \in\mathcal{A}^n_\epsilon}}  \sum_{\substack{\hat{v}_2^n:\\ (\!{u}^n\!\!,\hat{v}_2^n\!,y_2^n\!) \in\mathcal{A}^n_\epsilon}} \!\! p(u^n\!\!, v_2^n\! , v_3^n\! , y_2^n\! , \tilde{u}^n\!\!, \hat{v}_2^n|\mathbf{1}\!)\nonumber\\
&=\!\!\!\!\!\!\!  \sum_{ (\!u^n\!\!,v_2^n\!,v_3^n\!,y_2^n\!) \in\mathcal{A}_\epsilon^n}  \!\! \sum_{\substack{\tilde{u}^n:\\  (\!\tilde{u}^n\!\!,y_2^n\!) \in\mathcal{A}^n_\epsilon}}  \sum_{\substack{\hat{v}_2^n:\\ (\!{u}^n\!\!,\hat{v}_2^n\!,y_2^n\!) \in\mathcal{A}^n_\epsilon}} \hspace{-.6cm} \begin{array}{l}\\\\\!\left[\,p (\!u^n\!\!, v_2^n \!, v_3^n\! , y_2^n|\mathbf{1}\!)\right.\nonumber\\\hspace{.15cm} \left.\times p (\!\tilde{u}^n|u^n\!\!, v_2^n\! , v_3^n\! , y_2^n\! , \mathbf{1}\!)\right.\nonumber\\\hspace{.3cm}  \left.\times p (\!\hat{v}_2^n|u^n\!\!, v_2^n\! , v_3^n\! , y_2^n\! , \tilde{u}^n\!\!, \mathbf{1}\!)\right]\end{array}\nonumber\\
&=\!\!\!\!\!\!\!  \sum_{ (\!u^n\!\!,v_2^n\!,v_3^n\!,y_2^n\!) \in\mathcal{A}_\epsilon^n} \!\!  \sum_{\substack{\tilde{u}^n:\\  (\!\tilde{u}^n\!\!,y_2^n\!) \in\mathcal{A}^n_\epsilon}}  \sum_{\substack{\hat{v}_2^n:\\ (\!{u}^n\!\!,\hat{v}_2^n\!,y_2^n\!) \in\mathcal{A}^n_\epsilon}} \hspace{-.5cm} \begin{array}{l}\\\\\!\left[\,p (\!{u}^n\!\! , v_2^n\! , v_3^n\! , y_2^n|\mathbf{1}\!)\right.\nonumber\\\hspace{.15cm} \left.\times p_{U^n} (\!\tilde{u}^n\!)\right.\nonumber\\\hspace{.3cm} \left.\times p (\!\hat{v}_2^n|{u}^n\!\!, v_2^n\! , v_3^n\! , y_2^n\! , \mathbf{1}\!)\right]\end{array}\nonumber\\
&\stackrel{(a)}{=}\!\!\!\!\!\!\!  \sum_{ (\!u^n\!\!,v_2^n\!,v_3^n\!,y_2^n\!) \in\mathcal{A}_\epsilon^n}   \!\! \sum_{\substack{\tilde{u}^n:\\  (\!\tilde{u}^n\!\!,y_2^n\!) \in\mathcal{A}^n_\epsilon}}  \sum_{\substack{\hat{v}_2^n:\\  (\!{u}^n\!\!,\hat{v}_2^n\!,y_2^n\!) \in\mathcal{A}^n_\epsilon}} \hspace{-.5cm} \begin{array}{l}\\\\\left[p (\!{u}^n\!\!  , v_2^n\! , v_3^n\! , y_2^n|\mathbf{1}\!)\right.\nonumber\\\hspace{.15cm} \left.\times p_{U^n} (\!\tilde{u}^n\!)\right.\nonumber\\\hspace{.3cm}  \left.\times p (\!\hat{v}_2^n|{u}^n\!\!, v_2^n\! , v_3^n\! , \mathbf{1}\!)\right]\end{array}\nonumber\\
&\stackrel{(b)}{\leq}\!\!\!\!\!\! \sum_{ (\!u^n\!\!,v_2^n\!,v_3^n\!,y_2^n\!) \in\mathcal{A}_\epsilon^n}   \!\! \sum_{\substack{\tilde{u}^n:\\  (\!\tilde{u}^n\!\!,y_2^n\!) \in\mathcal{A}^n_\epsilon}}  \sum_{\substack{\hat{v}_2^n:\\\\  (\!{u}^n\!\!,\hat{v}_2^n\!,y_2^n\!) \in\mathcal{A}^n_\epsilon}} \hspace{-.5cm} \begin{array}{l}\\\vspace{.2cm}\\\!\left[\,p (\!{u}^n\!\!, v_2^n\! , v_3^n\! , y_2^n|\mathbf{1}\!)\vphantom{\times 2^{n\delta(\epsilon)}p_{\hat{V}_2^n|U^n} (\!\hat{v}_2^n|{u}^n\!)}\right.\nonumber\\\hspace{.15cm} \left.\times p_{{U}^n} (\!\tilde{u}^n\!)\right.\nonumber\\\hspace{.3cm} \left.\times 2^{n\delta(\epsilon)}p_{\hat{V}_2^n|U^n} (\!\hat{v}_2^n|{u}^n\!)\right]\end{array}\nonumber\\
\nonumber\\
&\leq2^{n\delta(\epsilon)}2^{-n(I(U;Y_2)-\gamma^\prime(\epsilon))}2^{-n(I(V_2;Y_2|U)-\gamma^\prime(\epsilon))}\nonumber\\&\quad\times \sum_{(u^n\!,v_2^n,v_3^n,y_2^n)\in\mathcal{A}_\epsilon^n}\!\!\!\!\begin{array}{l}p(u^n,v_2^n,v_3^n,y_2^n|\mathbf{1})\end{array}\nonumber\\
&\leq 2^{-n(I(UV_2;Y_2)-\gamma_2(\epsilon)-\delta(\epsilon))}\nonumber
\end{align}
Step $(a)$ follows from the fact that $\hat{V}_2^n-U^n,V_2^n,V_3^n,\mathcal{I}-Y_2^n$ forms a Markov chain. In order to prove step (b), we show that  conditioned on $U^n$, $\hat{V}^n_2$ is ``almost" independent of $V_2^n,V_3^n,\mathcal{I}$. More precisely, we use similar steps as in \cite[Lemma 1]{MineroLimKim13} and show in Appendix \ref{app:typicalitylemma} that for any jointly typical tuple $(u^n,v_2^n,v_3^n)\in\mathcal{A}_\epsilon^n$ and any {$\epsilon>0$}, there is a large enough $n$ such that
$p(\hat{v}^n_2|u^n,v_2^n,v_3^n,\mathbf{1})\leq 2^{n\delta(\epsilon)}p(\hat{v}^n_2|u^n)$, where $\delta(\epsilon)$ tends to zero as $\epsilon\to 0$.

We conclude the error probability analysis by putting together inequalities \eqref{jrnl1-pe-bound}, \eqref{jrnl1-onlinedecode-prob}, \eqref{jrnl1-pe-term1}, and \eqref{jrnl1-pe-term2} to obtain {that the error probability at the auxiliary decoder is bounded as in inequality \eqref{jrnl1-indirecterror}.}
So for large enough $n$, the auxiliary decoder succeeds with high probability
 if the non-unique decoding constraint \eqref{jrnl1_indirectY2} is satisfied{; i.e., when the non-unique decoder succeeds with high probability}.

One can now argue that if the auxiliary decoder succeeds with high
probability for an operating point, then there also exists a joint unique decoding
scheme that succeeds with high probability.  The idea is that for all
operating points (except in a subset of the rate region of measure zero),
each of the two component (joint unique) decoders $1$ and $2$ have either a high or
a low probability of success.  So, if the operating point is such that the
auxiliary decoder decodes correctly with high probability, then at least
one of the component decoders should also decode correctly with high
probability, giving us the joint unique decoding scheme we were looking for. This
is summarized in Lemma \ref{jrnl1-Lemmaexistence}, and the reader is
referred to Appendix \ref{app:Lemmaexistence} for the proof.

\begin{lemma}
\label{jrnl1-Lemmaexistence}
Given any operating point (except in a subset of the rate region of measure
zero), if the auxiliary decoder succeeds with high probability under the random coding
experiment, then there exists a joint unique decoding scheme that also succeeds
with high probability.
\end{lemma}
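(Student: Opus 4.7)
The plan is to reduce the lemma to a short case analysis by establishing a dichotomy for the failure probability of each of the two component decoders under the random coding experiment. Write $p_i^{(n)}$ for the probability that component decoder $i$ fails, where decoder $1$ is the joint unique decoder for $(M_0,M_{10})$ based on $U^n$ alone, and decoder $2$ is the joint unique decoder for $(M_0,M_{10},M_{12})$ based on $(U^n,V_2^n)$.

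First, I would establish the following dichotomy for each $i\in\{1,2\}$: either $p_i^{(n)}\to 0$ or $p_i^{(n)}\to 1$ as $n\to\infty$. The $\to 0$ direction is the usual packing bound, and gives $p_1^{(n)}\to 0$ when $R_0+S_0<I(U;Y_2)$ and $p_2^{(n)}\to 0$ when $R_0+S_0+T_2<I(UV_2;Y_2)$. The $\to 1$ direction is a concentration argument: when the reverse strict inequality holds, the expected number of incorrect codewords (respectively, codeword pairs) jointly typical with $Y_2^n$ grows exponentially in $n$, and a standard second moment calculation exploiting the (conditional) independence of the random codewords shows that this count concentrates on its mean. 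Hence with probability tending to one at least one spurious codeword is jointly typical with $Y_2^n$, forcing the joint unique decoder to declare an error. The set of operating points on which one of the two defining inequalities holds with equality is a finite union of hyperplanes and therefore has Lebesgue measure zero, matching the exclusion in the statement.

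Granted the dichotomy, the lemma follows by a case split. If $p_1^{(n)}\to 0$, the desired joint unique decoder is just decoder $1$; likewise if $p_2^{(n)}\to 0$, it is decoder $2$. The only remaining possibility is $p_1^{(n)}\to 1$ and $p_2^{(n)}\to 1$ simultaneously. In this case Fr\'echet's inequality gives
\[
\Pr(\text{both component decoders fail})\;\geq\;p_1^{(n)}+p_2^{(n)}-1\;\to\;1,
\]
and since the auxiliary decoder declares an error whenever both component decoders do, its error probability also tends to one, contradicting the standing hypothesis that the auxiliary decoder succeeds with high probability. Hence one of the first two cases must occur, producing a joint unique decoding scheme with vanishing error.

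I expect the $\to 1$ half of the dichotomy to be the main obstacle, since the $\to 0$ half is already implicit in the packing arguments used in Section~\ref{onlinedecoder}, whereas the $\to 1$ half genuinely requires a concentration/second moment estimate on the number of incorrect jointly typical codewords. For decoder $2$ some care is needed because incorrect triples $(m_0,s_0,t_2)$ sharing the same $(m_0,s_0)$ share the underlying $U^n$ codeword; a grouping of the incorrect indices in the variance calculation absorbs this dependence and still yields concentration. Once this estimate is in place, the remainder is just inclusion-exclusion and unpacking the definition of the auxiliary decoder.
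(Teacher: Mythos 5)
Your overall logic is sound and the case analysis would work, but you have taken on substantially more technical burden than the paper does, and one of your intermediate claims is stated incorrectly. The paper (Appendix~\ref{app:Lemmaexistence}) proves a $0$--$1$ dichotomy \emph{only for component decoder~$1$} (its Claim~\ref{jrnl1-claimconverse}): the ``fails with high probability'' direction is not a second-moment argument but a direct product bound --- conditioned on $Y_2^n=y_2^n$, the spurious codewords $U^n(\hat m_0,\hat s_0)$ are mutually independent and independent of $y_2^n$, so the probability that none is jointly typical factors into $\bigl(1-(1-\epsilon)2^{-n(I(U;Y_2)+2\epsilon)}\bigr)^{2^{n(R_0+S_0)}-1}$, which vanishes when $R_0+S_0>I(U;Y_2)+2\epsilon$. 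Crucially, the paper never establishes (or needs) the dichotomy for decoder~$2$: in the regime where decoder~$1$ fails w.h.p., it writes
$\Pr(\text{decoder }2\text{ errs})\le\Pr(\text{decoder }1\text{ succeeds})+\Pr(\text{decoders }1\text{ and }2\text{ both err})\le\delta+\Pr(\text{auxiliary decoder errs})$,
and both terms are small. Note that this is exactly your Fr\'echet inequality read in the contrapositive: rather than using $\Pr(\text{both fail})\ge p_1+p_2-1$ to derive a contradiction from $p_1,p_2\to 1$, use it to conclude $p_2\le \Pr(\text{aux errs})+(1-p_1)\to 0$ whenever $p_1\to 1$. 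That rearrangement makes your entire decoder-$2$ dichotomy --- the part you correctly identify as the main obstacle, since the spurious pairs sharing a $(\hat m_0,\hat s_0)$ share the $U^n$ codeword and the clean product/independence argument no longer applies --- completely unnecessary.

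Two concrete corrections to your intermediate claims. First, your stated threshold for decoder~$2$ is wrong: $R_0+S_0+T_2<I(UV_2;Y_2)$ alone does not give $p_2^{(n)}\to 0$, because decoder~$2$ also errs on spurious triples with $(\hat m_0,\hat s_0)=(1,1)$ and $\hat t_2\neq 1$; you additionally need $T_2<I(V_2;Y_2|U)$ (this is inequality \eqref{jrnl1_indirectY2direct}, which holds automatically in regime~(b) but not in general). This does not break your case split --- the dichotomy would still hold off a finite union of hyperplanes --- but the exceptional set and the packing conditions must be stated with both constraints. Second, if you do insist on proving the $\to 1$ half for decoder~$2$, the grouped second-moment calculation you gesture at is a genuine piece of work that is nowhere needed in the paper's route; the comparison to draw is that the paper's auxiliary-decoder hypothesis is doing precisely the job of certifying decoder~$2$ in the regime where decoder~$1$ is useless, so that only the simpler, fully-independent decoder-$1$ converse ever has to be proved.
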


A similar argument goes through for receiver $Y_3$. The random coding
argument for the joint unique decoding scheme can now be completed as usual.

\subsection{Discussion}
\begin{remark}
In Sections \ref{jrnl1_directdecoding} and \ref{onlinedecoder}, we did not
consider cases where $R_0+S_0=I(U;Y_2)$ or $R_0+S_0=I(U;Y_3)$ (i.e., a subset
of measure zero). This is enough since we may get arbitrarily close to such points. 
\end{remark}
\begin{remark}
\label{jrnl-remarkmessagestructure}
 In Sections \ref{jrnl1_directdecoding} and \ref{onlinedecoder}, we fixed
the encoding scheme to be that of \cite{NairElGamal09}. The message
splitting and the structure of the codebook is therefore a priori assumed
to be that of \cite{NairElGamal09}, even when $R_0+S_0< I(U;Y_2)$ and
message $M_{12}$ is not jointly decoded at $Y_2$. However, in such cases
this extra message structure is not required and one can consider message
$M_{12}$ as a part of message $M_{11}$.
\end{remark}

\section{More examples}
\label{jrnl1-examples}

We saw that joint unique decoding was sufficient to achieve the inner-bound of
\cite{NairElGamal09}. This is not coincidental and the same phenomenon can
be observed for example in the work of Chong, Motani, Garg and El~Gamal
\cite{ChongMotaniGargElGamal08} where the region obtained by non-unique
decoding turned out to be equivalent to that of Han and Kobayashi in
\cite{HanKobayashi81}. Similarly for noisy network coding \cite{NNC11}, it has been shown that the same rate region can be obtained employing joint unique decoding \cite{WuXie10, WuXieAllerton10, KramerITW11, HouKramerISIT12}.
It was also observed in \cite{ZaidiPiantanidaShamai12} that non-unique decoding is not essential to achieve the capacity region of certain state-dependent multiple access channels and joint unique decoding suffices.
Non-unique decoding schemes have appeared also in
\cite{NairElGamal10,BandemerElGamal11,ChiaElGamal11}. We consider these
three problems next and show that employing joint unique decoders, one can achieve the same proposed inner-bounds. To show such equivalence, we use the proof technique that we developed in Section \ref{onlinedecoder}


\subsection{Two-receiver compound channel with state noncausally available at the encoder}
\label{statenoncausally}
An inner-bound to the common message capacity region of a 2-receiver compound channel with discrete memoryless state noncausally available at the encoder is derived in \cite{NairElGamal10}. The inner-bound is established
using superposition coding, Marton's coding, and non-unique decoding schemes. { More precisely, the achievable scheme is as follows:
\subsubsection{Codebook generation}
Fix $p_{WUV}(w,u,v)$ and $f(w,u,v,s)$. For each message $m$, generate randomly and independently $2^{nT_0}$ sequences $W^n(m,l_0)$  according to $\prod_i p_W(w_i)$. For each $(m,l_0)$, generate  randomly and conditionally independently (i) $2^{nT_1}$ sequences $U^n(m,l_0,l_1)$  according to $\prod_i p_{U|W}(u_i|w_i)$ and (ii)  $2^{nT_2}$ sequences $V^n(m,l_0,l_2)$ according to $\prod_i p_{V|W}(v_i|w_i)$. 
\subsubsection{Encoding} Given message $m$ and state $s^n$, the encoder finds $l_0$ such that $(W^n(m,l_0),s^n)\in\mathcal{A}^n_\epsilon$. If there is more than one such index,  one is chosen uniformly at random\footnote{We allow a small modification to \cite{NairElGamal10} in randomly choosing the index $l_0$ whenever there is not a unique choice.}. If there is no such index, a random index is chosen among $\{1,\ldots,2^{nT_0}\}$. Next, the encoder finds $l_1$ and $l_2$ such that $(W^n(m,l_0),s^n,U^n(m,l_0,l_1),V^n(m,l_0,l_2))\in \mathcal{A}^n_\epsilon$.  If there is more than one such index pair, one pair is chosen uniformly at random. If there is none, a random index pair $(l_1,l_2)$ is chosen among $\{1\ldots,2^{nT_1}\}\times\{1,\ldots,2^{nT_2}\}$. The encoder transmits $x^n$, $x_i=f(w_i,u_i,v_i,s_i)$, where $w^n=W^n(m,l_0)$, $u^n=U^n(m,l_0,l_1)$, and $v^n=V^n(m,l_0,l_2)$.
\subsubsection{Decoding}
Receiver $Y_1$ declares message $M$ to be the unique index
$m$ for which $(W^n(m,l_0 ),U^n (m,l_0,l_1 ),Y^n_1 )$ is jointly typical for some $l_0\in\{1,\ldots,2^{nT_0}\}$ and $l_1\in\{1,\ldots,2^{nT_1}\}$. Receiver $Y_2$ follows a similar scheme. 

In this problem,
we show that employing joint unique decoders lets us achieve the same inner-bound of Theorem $1$ of \cite{NairElGamal10}. We outline the proof which is built on the proof technique of Subsection \ref{onlinedecoder}.
Define the auxiliary decoder (at receiver $Y_1$) to have access to two component (joint unique) decoders: one jointly uniquely decoding indices
$m_0,l_0$, and one jointly uniquely decoding indices $m_0,l_0,l_1$. The auxiliary decoder declares an error if either (a) both component decoders declare an error or (b) neither of them declare an error but they do not agree on their decoded $m_0$ and $l_0$ indices.

We now analyze the error probability. Assume, without any loss of generality, that the originally sent indices were $(m,l_0,l_1,l_2)=(1,1,1,1)$. We denote this event by $\mathcal{I}=\mathbf{1}$.  Proceeding as in Section \ref{onlinedecoder}, the error probability of the auxiliary decoder is bounded  by the following probability term.
\begin{align}
&\Pr(\text{error}|\mathcal{I}=\mathbf{1})\nonumber\\
&\leq\epsilon + \Pr \left(    \left.\begin{array}{c} ( W^n ( 1, 1 ), S^n,U^n ( 1, 1, 1 )\hspace{1cm}\\\hspace{1cm}, V^n ( 1 , 1 , 1 ),Y_1^n ) \in \mathcal{A}_\epsilon^n ,\\ \text{and}\\(W^n(\tilde{m},\tilde{l}_0),Y^n_1)\in A^n_\epsilon\\  {\text{for some }(\tilde{m},\tilde{l}_0)\neq(1,1),}\\ \text{and}\\ (W^n(\hat{m},\hat{l}_0),U^n(\hat{m},\hat{l}_0,\hat{l}_1),Y^n_1)\in A^n_\epsilon\\  {\text{for some }(\hat{m},\hat{l}_0,\hat{l}_1)\neq(1,1,1)}\end{array}    \right|\mathcal{I}=\mathbf{1} \! \right)\nonumber\\\label{jrnl-example2}
\end{align}
The probability term on the right hand side of inequality \eqref{jrnl-example2} is very similar to what we obtained in inequality  \eqref{jrnl1-errorauxiliary} and is analyzed in the same manner (with the subtle difference that $W^n(m,l_0)$ is indexed not only by the message but also by the state, which asks for a more careful treatment). See Appendix~\ref{appendixnoncausally}. 
We follow similar steps to conclude that the auxiliary decoder performs reliably under the \textit{non-unique decoding constraints} of \cite{NairElGamal10}. 
So, 
there exists a joint unique decoding scheme that performs reliably under those decoding constraints. More explicitly,
the proposed joint unique decoding scheme would be
joint unique decoding of $m$ and $l_0$, if $R_0+T_0<I(W;Y_1)$; and joint unique decoding of $m$, $l_0$ and $l_1$, otherwise.}

\subsection{Three-user deterministic interference channel}
In \cite{BandemerElGamal11}, an inner-bound to the capacity region of a
class of deterministic interference channels with three user pairs is
derived. The key idea is to simultaneously decode the combined interference
signal and the intended message at each receiver and this is done by a non-unique decoding scheme. 
{ We focus on Theorem $1$ of \cite{BandemerElGamal11} and
to have the paper self contained we briefly mention the encoding and decoding scheme. The deterministic interference channels that are considered here are described by the following deterministic relations between the inputs and the outputs\footnote{All results easily generalize to interference channels with noisy observations (e.g.,  \cite[Theorem 4]{BandemerElGamal11}).}:  $Y_k=f_k(X_{kk},S_k)$ where $S_1=h_1(X_{21},X_{31})$, $S_2=h_2(X_{12},X_{32})$, an $S_3=h_3(X_{23},X_{13})$ and $X_{lk}=g_{lk}(X_l)$ for every $l,k\in\{1,2,3\}$. It is assumed that functions $h_k$ and $f_k$ are one-to-one mappings when either of their arguments is fixed.
\paragraph*{Codebook generation}
Fix the probability mass function (pmf) $p(q)p(x_1|q)p(x_2|q)p(x_3|q)$. Sequence $Q^n$ is generated according to $\prod_i p_Q(q_i)$. For each $k=1,2,3$,  sequences $X_k^n(m_k)$, $m_k\in\{1,\ldots,2^{nR_k}\}$, are generated randomly and conditionally independently according to $\prod_i p_{X_k|Q}(x_{k,i}|q_i)$. 
\paragraph*{Encoding} To send message $m_k$, transmitter $k$ transmits $X_k^n(m_k)$. 
\paragraph*{Decoding}  Upon receiving $Y^n_1$, 
decoder $1$ declares that $m_1$
is sent if it is the unique message such that
$$(Q^n,X^n_1(m_1),S^n_1(m_2,m_3),X^n_{21}(m_2),X^n_{31}(m_3),Y^n_1)\in\mathcal{A}^n_\epsilon$$  for some $m_2\in[1:2^{nR_2}]$ and $m_3\in[1:2^{nR_3}]$. Decoders $2$ and $3$ work similarly.}

Here, we use the proof technique of Section~\ref{onlinedecoder} to prove
that a code design that employs joint unique decoders achieves the same
inner-bound.
  
Define the auxiliary decoder (at receiver $Y_1$) to have access to four
component (joint unique) decoders: one jointly uniquely decoding $X^n(m_1)$, one
jointly uniquely decoding $X^n_1(m_1)$ and $X^n_{21}(m_2)$, one jointly uniquely
decoding $X^n_1(m_1)$ and $X^n_{31}(m_3)$ and finally one jointly uniquely
decoding all sequences $X^n(m_1)$, $X^n_{21}(m_{2})$, $X^n_{31}(m_3)$, and
$S_1^n(m_2,m_3)$. The auxiliary decoder declares an error if either (a) all
component decoders declare error, or (b) not all of the decoders that
decode without declaring an error agree on the decoded index $m_0$ (i.e.,
among those component decoders that do not declare an error, there is not a
common agreement on the decoded index $m_0$).

We now analyze the error probability of the auxiliary decoder. We assume without any loss of generality that $(m_1,m_2,m_3)=(1,1,1)$ was sent. Proceeding as in Section \ref{onlinedecoder}, the error probability of the auxiliary decoder is bounded by
inequality \eqref{jrnl1-peexample3} as follows.

\begin{align}
&\Pr(\text{error})\nonumber\\
&\leq \epsilon + \Pr \left(          \left.\begin{array}{c}(Q^n ,X^n_1(\bar{m}_1),Y^n_1)\in\mathcal{A}^n_\epsilon\\{\text{ for some }\bar{m}_1\neq 1},\text{ and}\\ (Q^n ,X^n_1(\breve{m}_1),X^n_{21}(\breve{m}_2),Y^n_1)\in\mathcal{A}^n_\epsilon\\{\text{ for some }(\breve{m}_1,\breve{m}_2)\neq (1,1)},\text{ and}\\ (Q^n ,X^n_1(\dot{m}_1),X^n_{31}(\dot{m}_3),Y^n_1)\in\mathcal{A}^n_\epsilon\\{\text{ for some }(\dot{m}_1,\dot{m}_3)\neq (1,1)},\text{ and}\\\left( Q^n ,X^n_1(\hat{m}_1),S^n_1(\hat{m}_2,\hat{m}_3)\right. \hspace{1cm}\ \\\  \hspace{.4cm}\left.,X^n_{21}(\hat{m}_2),X^n_{31}(\hat{m}_3),Y^n_1 \right) \in \mathcal{A}^n_\epsilon\\{\text{ for some}\,(\hat{m}_1,\hat{m}_2,\hat{m}_3) \neq  (1,1,1)}\end{array}    \right| \mathcal{I} = \mathbf{1}   \right)\nonumber\\
&\leq \epsilon + \Pr \left(          \left.\begin{array}{c}(Q^n ,X^n_1(\bar{m}_1),Y^n_1)\in\mathcal{A}^n_\epsilon\\{\text{ for some }\bar{m}_1\neq 1},\text{ and}\\ (Q^n ,X^n_1(\breve{m}_1),X^n_{21}(\breve{m}_2),Y^n_1)\in\mathcal{A}^n_\epsilon\\{\text{ for some }(\breve{m}_1,\breve{m}_2)\neq (1,1)},\text{ and}\\ (Q^n ,X^n_1(\dot{m}_1),X^n_{31}(\dot{m}_3),Y^n_1)\in\mathcal{A}^n_\epsilon\\{\text{ for some }(\dot{m}_1,\dot{m}_3)\neq (1,1)},\text{ and}\\\left( Q^n ,X^n_1(\hat{m}_1),S^n_1(\hat{m}_2,\hat{m}_3)\right. \hspace{1cm}\ \\\  \hspace{.4cm}\left.,X^n_{21}(\hat{m}_2),X^n_{31}(\hat{m}_3),Y^n_1 \right) \in \mathcal{A}^n_\epsilon\\{\text{ for some}\,(\hat{m}_2,\hat{m}_3),\ \hat{m}_1\neq1}    \end{array}    \right| \mathcal{I} = \mathbf{1}   \right)\nonumber\\
&\quad+\Pr \left(          \left.\begin{array}{c}(Q^n ,X^n_1(\bar{m}_1),Y^n_1)\in\mathcal{A}^n_\epsilon\\{\text{ for some }\bar{m}_1\neq 1},\text{ and}\\ (Q^n ,X^n_1(\breve{m}_1),X^n_{21}(\breve{m}_2),Y^n_1)\in\mathcal{A}^n_\epsilon\\{\text{ for some }(\breve{m}_1,\breve{m}_2)\neq (1,1)},\text{ and}\\ (Q^n ,X^n_1(\dot{m}_1),X^n_{31}(\dot{m}_3),Y^n_1)\in\mathcal{A}^n_\epsilon\\{\text{ for some }(\dot{m}_1,\dot{m}_3)\neq (1,1)},\text{ and}\\\left(Q^n ,X^n_1(\hat{m}_1),S^n_1(\hat{m}_2,\hat{m}_3)\right. \hspace{1cm}\ \\\  \hspace{.4cm}\left.,X^n_{21}(\hat{m}_2)X^n_{31}(\hat{m}_3),Y^n_1\right) \in \mathcal{A}^n_\epsilon\\{\text{ for some}\,(\hat{m}_2,\hat{m}_3) \neq  (1,1),\, \hat{m}_1 = 1}\end{array}    \right| \mathcal{I} = \mathbf{1}   \right)\label{jrnl1-peexample3}
\end{align}

As before, the first probability term of inequality
\eqref{jrnl1-peexample3} is upper-bounded by the probability of an indirect
decoder making an error; { i.e.}, by the expression below.
\begin{align}
\Pr  \left(     \begin{array}{l|l}( Q^n\hspace{-.1cm}, X^n_1( \hat{m}_1 ), S^n_1( \hat{m}_2, \hat{m}_3 ) \hspace{1cm}\ \\\  \hspace{1cm} , X^n_{21}( \hat{m}_2 ), X^n_{31}( \hat{m}_3 ), Y^n_1 ) \in \mathcal{A}^n_\epsilon   \\ \text{
for some }(\hat{m}_2,\hat{m}_3)\text{ and }
\hat{m}_1\neq1\end{array}\, \mathcal{I}  =  \mathbf{1}   \right)\label{jrnl1-indirecterrorexample3}
\end{align}
In \cite{BandemerElGamal11}, constraints on rates have been derived
under which this error probability approaches $0$ as $n$ grows large and an
achievable rate region has been characterized. We refer to these
constraints as the \textit{non-unique decoding constraints} of
\cite{BandemerElGamal11}.  One can show that under these decoding
constraints, the second probability term can also be made arbitrarily small
by choosing a sufficiently large $n$ (Appendix \ref{jrnl1-appendixlast}).  It then becomes clear that the auxiliary
decoder succeeds with high probability if the non-unique decoding
constraints of \cite{BandemerElGamal11} are satisfied. So, analogous to
Section~\ref{onlinedecoder}, we conclude that there exists a joint unique decoding
scheme that achieves the same inner-bound of Theorem $1$ in
\cite{BandemerElGamal11}.

\subsection{Three-receiver broadcast channel with common and confidential
messages}
\label{jrnl1-example1}

In \cite{ChiaElGamal11} a general 3-receiver broadcast channel with one
common and one confidential message set is studied. Inner-bounds and
outer-bounds are derived for the capacity regions under two setups of this
problem: when the confidential message is intended for one of the receivers
and when the confidential message is intended for two of the receivers.  We
only address the first setup here, and in particular Theorem~2 of
\cite{ChiaElGamal11}. The other inner-bounds can be similarly dealt with.
In Theorem~2, the authors establish an inner-bound to the secrecy capacity
region using the ideas of superposition coding, Wyner wiretap channel
coding, and non-unique decoding. We briefly explain the achievable scheme.
{ {\paragraph*{Codebook construction} Fix $p_{UV_0V_1V_2X}(u,v_0,v_1,v_2,x)$. Choose $R_r\geq 0$ such that $R_1-R_e+R_r\geq I(V_0;Z|U)+\delta(\epsilon)$. Randomly and independently generate $2^{nR_0}$ sequences $u^n(m_0)$,  each according to $\prod_i P_U(u_i)$. For each $m_0$, randomly and conditionally independently generate sequences $v_0^n(m_0,m_1,m_r)$, $(m_1,m_r)\in[1:2^{n(R_1+R_r)}]$, each according to $\prod_i P_{V_0|U(v_{0i}|u_i)}$. For each $(m_0,m_1,m_r)$: (i) generate sequences $v_1^n(m_0,m_1,m_r,t_1)$, $t_1\in\{1,\ldots,2^{nT_1}\}$, each according to $\prod_i p_{V_1|V_0(v_{1i}|v_{0i})}$, and partition the set $\{1,\ldots,2^{nT_1}\}$ into $2^{n\tilde{R}_1}$ equal size bins $\mathcal{B}(m_0,m_1,m_r,l_1)$, (ii) generate sequences $v_2^n(m_0,m_1m_r,t_2)$, $t_2\in\{1,\ldots,2^{nT_2}\}$, each according to the product distribution $\prod_i p_{V_2|V_0}(v_{2i}|v_{0i})$ and partition the set $\{1,\ldots,2^{nT_2}\}$ into $2^{n\tilde{R}_2}$ equal size bins $\mathcal{B}(m_0,m_1,m_r,l_2)$. For each product bin $\mathcal{B}(l_1)\times\mathcal{B}(l_2)$, find a jointly typical sequence pair $(v_1^n(m_0,m_1,m_r,t_1(l_1)),v_2^n(m_0,m_1,m_r,t_2(l_2))$, and associate it to the product bin. If there is more than one pair, one of the jointly typical pairs is picked uniformly at random. If there is no such pair, one pair is picked uniformly at random from the set of all possible pairs. Finally, for all $(m_0,m_1,m_r)$ and all their associated sequence pairs $(v_1^n(m_0,m_1,m_r,t_1(l_1)),v_2^n(m_0,m_1,m_r,t_2(l_2))$ a codeword $X^n(m_0,m_1,m_r, t_1(l_1),t_2(l_2))$ is generated according to $\prod_i p_{X|V_0V_1V_2}(x_i|v_{0i},v_{1i},v_{2i})$.
\paragraph*{Encoding} To send the message pair $(m_0,m_1)$, the encoder  chooses a random index $m_r\in\{1,\ldots, 2^{nR_r}\}$ and thus the sequence pair $(u^n(m_0),v_0^n(m_0,m_1,m_r))$. It then chooses a product bin index $(L_1,L_2)$ at random and selects the corresponding jointly typical pair  $(v_1^n(m_0,m_1,m_r,t_1(L_1)),v_2^n(m_0,m_1,m_r,t_2(L_2))$ in it. Finally the corresponding codeword $X^n(m_0,m_1,m_r, t_1(L_1),t_2(L_2))$ is sent.
\paragraph*{Decoding} Both legitimate
receivers $Y_1$ and $Y_2$ decode their messages of interest, $M_0$ and
$M_1$, by non-unique decoding schemes. More precisely, receiver $Y_1$ looks for the unique
triple $(m_0,m_1,m_r)$ such that the tuple
$(U^n(m_0),V^n_0(m_0.m_1,m_r),V^n_1(m_0,m_1,m_r,t_1),Y^n_1)$ is jointly
typical for some $t_1\in[1:2^{nT_1}]$. Receiver $Y_2$ follows a similar
scheme. Receiver $Z$ decodes $m_0$ directly by finding the jointly typical pair $(U^n(m_0),Z^n)$.}}

We use the proof technique of Subsection \ref{onlinedecoder}
to show that a code design that employs joint unique decoders achieves
the same inner-bound. To do so, we first present an auxiliary decoder which
succeeds with high probability under the decoding constraints of
\cite{ChiaElGamal11}, and then conclude that there exists a joint unique decoding
scheme that succeeds with high probability.

Define the auxiliary decoder (at receiver $Y_1$) to have access to two
component (joint unique) decoders, one jointly uniquely decoding indices
$m_0,m_1,m_r$ and the other jointly uniquely decoding indices
$m_0,m_1,m_r,t_1$. The auxiliary decoder declares an error if either (a)
both component decoders declare errors, or (b) if both of them decode and
their declared $(m_0,m_1,m_r)$ indices do not match. In all other cases it
declares the index triple $(m_0,m_1,m_r)$ according to the output of the
component decoder which did not declare an error. Proceeding as in
Section~\ref{onlinedecoder}, the error probability of
the auxiliary decoder can be bounded  by \eqref{jrnl1-boundauxiliary} as follows.{  As before, we assume without any loss of generality that the all-1-indices are chosen at the encoding stage, and we denote this event by $\mathcal{I}=\mathbf{1}$.
\begin{align}
&\Pr(\text{error}\;|\;\mathcal{I}=\mathbf{1})\nonumber\\
&  \leq    \epsilon  +  
\Pr\left( \!\!\!\!   
\left.    \begin{array}{c}\left(U^n(1),V_0^n(1,1,1),V_1^n(1,1,1,1)\right. \hspace{.8cm}\\ \hspace{2.2cm}\left.,V_2^n(1,1,1,1),Y^n_1\right) \!\in\! \mathcal{A}_\epsilon^n\\\text{ and}\\
        (U^n(\tilde{m}_0),V^n_0(\tilde{m}_0,\tilde{m}_1,\tilde{m}_r),Y^n_1) \!\in\!  A^n_\epsilon
           \\ {\text{for some }(\tilde{m}_0,\tilde{m}_1,\tilde{m}_r) \!\neq\! (1,1,1)}\\\text{ and}\\
        \left(U^n(\hat{m}_0),V^n_0(\hat{m}_0,\hat{m}_1,\hat{m}_r)\right. \hspace{.8cm}\\ \hspace{.8cm}\left.,V^n_1(\hat{m}_0,\hat{m}_1,\hat{m}_r,\hat{t}_1),Y^n_1\right)
             \!\in\!  A^n_\epsilon\\ {\text{for some }(\hat{m}_0,\hat{m}_1,\hat{m}_r,\hat{t}_1)
             \!\neq\! (1,1,1,1)}
    \end{array} \!\!  \right| \mathcal{I}\! =\! \mathbf{1} \! \right)\nonumber\\
&  \leq    \epsilon  + \Pr\left(    \!\!\!\!
   \left. \begin{array}{c} \left(U^n(1),V_0^n(1,1,1),V_1^n(1,1,1,1)\right. \hspace{.8cm}\\ \hspace{2.2cm}\left.,V_2^n(1,1,1,1),Y^n_1\right)\!\in\!\mathcal{A}_\epsilon^n\\\text{ and}\\
        (U^n(\tilde{m}_0),V^n_0(\tilde{m}_0,\tilde{m}_1,\tilde{m}_r),Y^n_1) \!\in\!  A^n_\epsilon
             \\ {\text{for some }(\tilde{m}_0,\tilde{m}_1,\tilde{m}_r) \!\neq\! (1,1,1)}\\\text{ and}\\
       \left(U^n(\hat{m}_0),V^n_0(\hat{m}_0,\hat{m}_1,\hat{m}_r)\right. \hspace{.8cm}\\ \hspace{.8cm}\left.,V^n_1(\hat{m}_0,\hat{m}_1,\hat{m}_r,\hat{t}_1),Y^n_1\right)
             \!\in\!  A^n_\epsilon \\ {\text{for some }(\hat{m}_0,\hat{m}_1,\hat{m}_r) \!\neq\! (1,1,1),\ \hat{t}_1}
    \end{array}\!\!\right| \mathcal{I}\!=\!\mathbf{1} \! \right)\nonumber\\
&       \hspace{.05cm}\; \; + \Pr\left(     \!\!\!\!
\left.        \begin{array}{c} \left(U^n(1),V_0^n(1,1,1),V_1^n(1,1,1,1)\right. \hspace{.8cm}\\ \hspace{2.2cm}\left.,V_2^n(1,1,1,1),Y^n_1\right)\!\in\!\mathcal{A}_\epsilon^n\\\text{ and}\\
            (U^n(\tilde{m}_0),V^n_0(\tilde{m}_0,\tilde{m}_1,\tilde{m}_r),Y^n_1) \!\in\!  A^n_\epsilon
                 \\ {\text{for some }(\tilde{m}_0,\tilde{m}_1,\tilde{m}_r) \!\neq\! (1,1,1)}\\\text{ and} \\
            \left(U^n(\hat{m}_0),V^n_0(\hat{m}_0,\hat{m}_1,\hat{m}_r)\right. \hspace{.8cm}\\ \hspace{.8cm}\left.,V^n_1(\hat{m}_0,\hat{m}_1,\hat{m}_r,\hat{t}_1),Y^n_1\right)
                 \!\in\!  A^n_\epsilon \\ {\text{for }(\hat{m}_0,\hat{m}_1,\hat{m}_r)\!=\!(1,1,1),\ 
                \hat{t}_1 \neq  1}\end{array} \!\!\right| \mathcal{I}\!=\!\mathbf{1}\!\right)\nonumber\\ 
&  \stackrel{(a)}{\leq}     \epsilon+2^{n(R_0+R_1+T_1+R_r-I(UV_0,V_1;Y_1)+\gamma_1(\epsilon))}\nonumber\\
     & \quad  +2^{n(R_1+T_1+R_r-I(V_0V_1;Y_1|U)+\gamma^\prime_1(\epsilon))}\nonumber\\
&    \quad   +\;  2^{n(R_0+R_1+T_1+R_r-I(UV_0,V_1;Y_1)+\gamma_2(\epsilon)+\delta(\epsilon))}\nonumber\\
      & \quad +\;2^{n(R_1+T_1+R_r-I(V_0V_1;Y_1|U)+\gamma^\prime_2(\epsilon)+\delta(\epsilon))}\label{jrnl1-boundauxiliary}
\end{align}
Here, $\gamma_1(\epsilon),\gamma^\prime_1(\epsilon),\gamma_2(\epsilon),\gamma^\prime_2(\epsilon),\delta(\epsilon)$ all go to zero as $\epsilon\to 0$.
To prove the inequality in step $(a)$, we bound each probability
term separately. 

The first probability term above is upper-bounded by the probability of a
non-unique decoder making an error (i.e., $2^{n(R_0+R_1+T_1+R_r-I(UV_0V_1;Y_1)+\gamma_1(\epsilon))}
        +2^{n(R_1+T_1+R_r-I(V_0V_1;Y_1|U)+\gamma_1^\prime(\epsilon))}$). This non-unique decoder is analyzed in \cite{ChiaElGamal11}
and shown to be reliable under the following two constraints to which we refer as the
\textit{non-unique decoding constraints} of \cite{ChiaElGamal11}.
\begin{eqnarray}
&&R_0+R_1+T_1+R_r<I(UV_0V_1;Y_1)-\gamma_1(\epsilon)\\
&&R_1+T_1+R_r<I(V_0V_1;Y_1|U)-\gamma^\prime_1(\epsilon)
\end{eqnarray}

The second term is upper-bounded by
further splitting the event and following steps similar to that of Subsection~\ref{onlinedecoder}. 

{{
\begin{align}
&\Pr  \left(\left. \begin{array}{c} \left( U^n ( 1 ), V_0^n ( 1, 1, 1 ), V_1^n ( 1, 1, 1, 1 )\right. \hspace{1cm}\\ \hspace{2.2cm}\left., V_2^n ( 1, 1, 1, 1 ), Y^n_1 \right) \!\in\! \mathcal{A}_\epsilon^n ,\\\text{ and}\\(U^n(\tilde{m}_0),V^n_0(\tilde{m}_0,\tilde{m}_1,\tilde{m}_r),Y^n_1) \!\in\!  A^n_\epsilon \\ {\text{for some }(\tilde{m}_0,\tilde{m}_1,\tilde{m}_r) \!\neq\! (1,1,1)}\\\text{ and} \\\left( U^n ( \hat{m}_0 ), V^n_0 ( \hat{m}_0 , \hat{m}_1 , \hat{m}_r ) \right. \hspace{1cm}\\ \hspace{1cm}\left., V^n_1 ( \hat{m}_0 , \hat{m}_1 , \hat{m}_r , \hat{t}_1 ), Y^n_1 \right)  \!\in\!   A^n_\epsilon \\ {\text{for some }(\hat{m}_0,\hat{m}_1,\hat{m}_r) \!=\! (1,1,1),\ \hat{t}_1 \!\neq\!  1}\end{array}    \right|\hspace{-.04cm}\mathcal{I} \!=\! \mathbf{1} \right)\nonumber\\
&       \leq        \Pr  \left(\!\! \left.\begin{array}{c} \left( U^n ( 1 ), V_0^n ( 1, 1, 1 ), V_1^n ( 1, 1, 1, 1 )\right. \hspace{1.2cm}\\ \hspace{2cm}\left., V_2^n ( 1, 1, 1, 1 ), Y^n_1 \right)  \!\in\!  \mathcal{A}_\epsilon^n ,\\\text{ and}\\(U^n(\tilde{m}_0),V^n_0(\tilde{m}_0,\tilde{m}_1,\tilde{m}_r),Y^n_1) \!\in\!  A^n_\epsilon \\ {\text{for some }(\tilde{m}_0,\tilde{m}_1,\tilde{m}_r ) \!\neq\! (1,1,1),\ \tilde{m}_0 \!\neq\! 1}\\ \text{ and}\\\left( U^n ( \hat{m}_0 ), V^n_0 ( \hat{m}_0 , \hat{m}_1 , \hat{m}_r )\right. \hspace{1cm}\\ \hspace{1cm}\left.,V^n_1 ( \hat{m}_0 , \hat{m}_1 , \hat{m}_r , \hat{t}_1 ), Y^n_1 \right) \! \in\!   A^n_\epsilon \\ {\text{for some }(\hat{m}_0,\hat{m}_1,\hat{m}_r) \!=\! (1,1,1),\ \hat{t}_1 \!\neq\!  1}\end{array} \!   \right| \mathcal{I} \!=\! \mathbf{1} \!\right)\nonumber\\
&  +      \Pr  \left( \!\!\left.\begin{array}{c} \left( U^n ( 1 ), V_0^n ( 1, 1, 1 ), V_1^n ( 1, 1, 1, 1 )\right. \hspace{1.2cm}\\ \hspace{2cm}\left., V_2^n ( 1, 1, 1, 1 ), Y^n_1 \right) \! \in\!  \mathcal{A}_\epsilon^n ,\\\text{ and}\\(U^n(\tilde{m}_0),V^n_0(\tilde{m}_0,\tilde{m}_1,\tilde{m}_r),Y^n_1)\! \in\!  A^n_\epsilon \\ {\text{for some }(\tilde{m}_0,\tilde{m}_1,\tilde{m}_r ) \!\neq\! (1,1,1),\ \tilde{m}_0\!=\!1}\\ \text{ and}\\\left( U^n ( \hat{m}_0 ), V^n_0 ( \hat{m}_0 , \hat{m}_1 , \hat{m}_r )\right. \hspace{1cm}\\ \hspace{1cm}\left.,V^n_1 ( \hat{m}_0 , \hat{m}_1 , \hat{m}_r , \hat{t}_1 ), Y^n_1 \right) \! \in\!   A^n_\epsilon \\ {\text{for some }(\hat{m}_0,\hat{m}_1,\hat{m}_r) \!=\! (1,1,1),\ \hat{t}_1 \!\neq\!  1}\end{array} \!   \right| \mathcal{I}\! =\! \mathbf{1}\! \right)\nonumber\\\label{forlastapp}\\
 &  \leq   2^{n(R_0+R_1+T_1+R_r-I(UV_0V_1;Y_1)\nonumber+\gamma_2(\epsilon)+\delta(\epsilon))}\nonumber\\&\quad+2^{n(R_1+T_1+R_r-I(V_0V_1;Y_1|U)+\gamma_2^\prime(\epsilon)+\delta(\epsilon))}\label{jrnl1-bound2ex2}
\end{align}}}
In the last step, the first probability term is bounded by $2^{n(R_0+R_1+T_1+R_r-I(UV_0V_1;Y_1)+\gamma_2(\epsilon)+\delta(\epsilon))}$ based on the derivation in Section \ref{onlinedecoder}, and the second probability term is bounded by  $2^{n(R_1+T_1+R_r-I(V_0V_1;Y_1|U)+\gamma_2^\prime(\epsilon)+\delta(\epsilon))}$ for similar reasons (in the conditional form), the details of which are presented in Appendix \ref{app-last-example3}.

It becomes clear from
\eqref{jrnl1-boundauxiliary}, that the auxiliary decoder also succeeds with high
probability under the non-unique decoding constraints of
\cite{ChiaElGamal11}. Similar to Subsection \ref{onlinedecoder}, one can
conclude that if for an operating point the non-unique decoder succeeds with
high probability, then there also exists a joint unique decoding scheme that
succeeds with high probability. 

One can also use the auxiliary decoder to (explicitly) devise the joint unique decoding scheme. Analogous to Subsection \ref{onlinedecoder}, the decoding scheme could be joint unique decoding of $m_0,m_1,m_r$ in the regime 
where it succeeds (with high probability) and joint unique decoding of $m_0,m_1,m_r,t_1$ otherwise. 
To express the two regimes, we analyze the error probability of the component (joint unique) decoder that decodes $m_0$, $m_1$ and $m_r$. 
\begin{align*}
\Pr(\text{error})\leq&\, \epsilon+ 2^{n( R_0+R_1+R_r-I( UV_0;Y_1 )+\sigma(\epsilon) )} \\& + 2^{n( R_1+R_r-I( V_0;Y_1|U )+\sigma(\epsilon) )}
\end{align*} 
where $\sigma(\epsilon)\to 0$ if $\epsilon\to 0$. Therefore, joint unique 
decoding of $m_0$, $m_1$ and $m_r$ succeeds with high probability if the
following two inequalities hold 
in addition to the indirect
decoding constraints of~\cite{ChiaElGamal11}.
\begin{eqnarray}
&&R_0+R_1+R_r<I(UV_0;Y_1)\label{jrnl1-r1}\\
&&R_1+R_r<I(V_0;Y_1|U)\label{jrnl1-r2}
\end{eqnarray}
If either of the above inequalities does not hold, then joint unique decoding of $m_0$, $m_1$, $m_r$ fails with high probability (see Appendix \ref{jrnl1-appendixexample1etc}).  
Nonetheless, while the non-unique decoding constraint
of \cite{ChiaElGamal11} is satisfied, since the auxiliary decoder succeeds with high probability, we conclude that joint unique decoding of $m_0$ ,$m_1$ , $m_r$, $t_1$ succeeds with high probability.
So the following joint unique decoding scheme achieves the inner-bound of \cite{ChiaElGamal11}: If inequalities \eqref{jrnl1-r1} and \eqref{jrnl1-r2} hold, jointly uniquely decode indices $m_0$, $m_1$, and $m_r$, and 
otherwise, jointly uniquely decode all four indices $m_0$, $m_1$, $m_r$, $t_1$.}}


\section{Conclusion}
{{We examined the non-unique decoding strategy of \cite {NairElGamal09}
where messages of interest are decoded jointly with other messages 
even when the decoder is unable to
disambiguate uniquely some of the messages which are not of interest to it.
We showed that  in all known cases where it has been employed, non-unique decoding can be replaced by the classic joint unique decoding strategy without any penalty from a rate region viewpoint.
We believe that this technique
may be applicable more generally to show the equivalence
of rate regions achievable using random coding employing
non-unique decoders and joint unique decoders.}}

\appendices

\appendices
\section{ For any $\delta>0$ and jointly typical triples $(u^n,v_2^n,v_3^n)$  $p_{\hat{V}^n_2|U^nV_2^nV_3^n\mathcal{I}}(\hat{v}^n_2|u^n,v_2^n,v_3^n,\mathbf{1})\leq 2^{n\delta}p_{\hat{V}^n_2|U^n}(\hat{v}^n_2|u^n)$ for $n$ large enough} 
\label{app:typicalitylemma}
We proceed along the  lines of \cite[Lemma 1]{MineroLimKim13}. 
Recall the codebook structure, where (i) $V_2^n(m_0,s_0,t_2)$ and $V^n_3(m_0,s_0,t_3)$ are superposed on $U^n(m_0,s_0)$, (ii) $V_2^n(m_0,s_0,t_2)$ and $V_3(m_0,s_0,t_3)$ are distributed into bins $\mathcal{B}_2(m_0,s_0,s_2)$ and $\mathcal{B}_3(m_0,s_0,s_3)$ and (iii) that a jointly typical pair $(V_2^n(m_0,s_0,t_2),V_3^n(m,s_0,t_3))$ is chosen randomly in each product bin. In the error analysis of Section \ref{onlinedecoder}, we assumed all sent indices to be $1$, and we considered the event of decoding a wrong index $\hat{t}_2$ (and thus an incorrect sequence $V_2^n(m_0,s_0,\hat{t}_2)$). 
We denote $U^n(1,1), V^n_2(1,1,1)$, $V^n_3(1,1,1)$, ${V}^n_2(1,1,\hat{t}_2)$ by $U^n$, $V_2^n$, $V_3^n$, $\hat{V}_2^n$, respectively. If $V^n_2(1,1,1)$ and $V_2^n(1,1,\hat{t}_2)$ belong to two different bins $\mathcal{B}_1(1,1,s_2)$ and $\mathcal{B}_1(1,1,s^\prime_2)$, $s_2\neq s^\prime_2$, then it is easy to see that the relation $p_{\hat{V}^n_2|U^nV_2^nV_3^n\mathcal{I}}(\hat{v}^n_2|u^n,v_2^n,v_3^n,\mathbf{1})=p_{\hat{V}^n_2|U^n}(\hat{v}^n_2|u^n)$ holds. Here we only need to consider the case where $\hat{t}_2$ is such that $V^n_2(1,1,1)$ and $V_2^n(1,1,\hat{t}_2)$ belong to the same bin, i.e., $\mathcal{B}_2(1,1,1)$. We assume without any loss of generality that $\hat{t}_2=2$.

Define the random ensemble ${\mathbf{C}}^\prime\in\mathcal{C}^\prime$ as the overall collection of all  sequences $(V_2^n(1,1,t_2))$ and $(V_3^n(1,1,t_3))$, where ${t_2}\in\{3,\ldots,2^{n(T_2-S_2)}\}$ and ${t_3}\in\{2,\ldots,2^{n(T_3-S_3)}\}$. 
For a given $\mathbf{c}^\prime$, define $N_1(v_2^n,v_3^n,\mathbf{c}^\prime)$ to be the number of jointly typical pairs $\left(v_2^n(1,1,t_2),v_3^n(1,1,t_3)\right)$ for all $t_2\neq 2, t_3$. Similarly, given $\mathbf{c}^\prime$ and ${v}_2^n(1,1,2)$, let $N_2(v_2^n(1,1,2),v_3^n,\mathbf{c}^\prime)$  be the number of jointly typical pairs $\left(v_2^n(1,1,2),v_3^n(1,1,t_3)\right)$ for all $t_3$.

 We now write
 \begin{align}
 &p_{\hat{V}_2^n|U^nV_2^nV_3^n\mathcal{I}}(\hat{v}_2^n|u^n,v_2^n,v_3^n,\mathbf{1})\nonumber\\
 &=p_{\hat{V}_2^n|U^nV_2^nV_3^n\mathcal{I}_{t_2}\mathcal{I}_{t_3}}(\hat{v}_2^n|u^n,v_2^n,v_3^n,1,1)\nonumber\\
 &=\sum_{\mathbf{c}^\prime\in\mathcal{{C}}^\prime_{}} p_{\hat{V}_2^n\mathbf{C}^\prime|U^nV_2^nV_3^n\mathcal{I}_{t_2},\mathcal{I}_{t_3}}(\hat{v}_2^n,\mathbf{c}^\prime|u^n,v_2^n,v_3^n,1,1)\nonumber\\
  &= \sum_{\mathbf{c}^\prime\in{\mathcal{C}}^\prime_{}} \!\!\left[p( \mathbf{c}^\prime|u^n , v_2^n, v_3^n, 1, 1 )p_{\hat{V}^n_2|U^n}\!(\hat{v}_{2}^n|u^n , v_2^n, v_3^n, \mathbf{c}^\prime )\vphantom{\times\frac{p(\mathcal{I}_{t_2}=1,\mathcal{I}_{t_3}=1|u^n, v_2^n, v_3^n, \hat{v}^n_2, \mathbf{c}^\prime )}{p(\mathcal{I}_{t_2}=1,\mathcal{I}_{t_3}=1|u^n,v_2^n,v_3^n,\mathbf{c}^\prime)}}\right.\nonumber\\&\hspace{1.1cm}\left.\times\frac{p(\mathcal{I}_{t_2}\!\!\!=\!\!1,\mathcal{I}_{t_3}\!\!=\!\!1|u^n, v_2^n, v_3^n, \hat{v}^n_2, \mathbf{c}^\prime )}{p(\mathcal{I}_{t_2}\!\!=\!\!1,\mathcal{I}_{t_3}\!\!=\!\!1|u^n,v_2^n,v_3^n,\mathbf{c}^\prime)}\right]\nonumber\\
   &=p_{\hat{V}^n_2|U^n}(\hat{v}_{2}^n|u^n)\!\sum_{\mathbf{c}^\prime\in{\mathcal{C}}^\prime_{}} \!\!\left[p( \mathbf{c}^\prime|u^n , v_2^n, v_3^n, 1, 1 )\vphantom{\times\frac{p(\mathcal{I}_{t_2}\!\!=\!\!1,\mathcal{I}_{t_3}\!=\!1|u^n, v_2^n, v_3^n, \hat{v}^n_2, \mathbf{c}^\prime )}{p(\mathcal{I}_{t_2}\!=\!1,\mathcal{I}_{t_3}\!=\!1|u^n,v_2^n,v_3^n,\mathbf{c}^\prime)}}\right.\nonumber\\&\hspace{3.25cm}\left.\times\frac{p(\mathcal{I}_{t_2}\!\!=\!\!1,\mathcal{I}_{t_3}\!\!=\!\!1|u^n\!, v_2^n, v_3^n, \hat{v}^n_2, \mathbf{c}^\prime )}{p(\mathcal{I}_{t_2}\!\!=\!\!1,\mathcal{I}_{t_3}\!\!=\!\!1|u^n\!,v_2^n,v_3^n,\mathbf{c}^\prime)}\!\right]\label{tocontinue}
  \end{align}

To continue bounding \eqref{tocontinue}, we consider two cases.
 \begin{enumerate}
 \item $T_3-S_3-I(V_2;V_3|U)<0$: We bound the fraction in \eqref{tocontinue}. The numerator is bounded from above by disregarding $\hat{v}_2^n$.
\begin{eqnarray}
 &&p(\mathcal{I}_{t_2}=1,\mathcal{I}_{t_3}=1|u^n,v_2^n,v_3^n,\hat{v}_2^n,\mathbf{c}^\prime)\nonumber\\
 &&=\frac{1}{N_1(v_2^n,v_3^n,\mathbf{c}^\prime)+N_2(\hat{v}_2^n,v_3^n,\mathbf{c}^\prime)}\nonumber\\
&&\leq \frac{1}{N_1(v_2^n,v_3^n,\mathbf{c}^\prime)}.\label{likebefore1}
 \end{eqnarray}
 
 The denominator is bounded from below by the expression in \eqref{likebefore2}.
\begin{align}
 &p(\mathcal{I}_{t_2}=1,\mathcal{I}_{t_3}=1|u^n,v_2^n,v_3^n,\mathbf{c}^\prime)\nonumber\\
 &\geq p(\mathcal{I}_{t_2}\!\!=\!\!1,\mathcal{I}_{t_3}\!\!=\!\!1, N_2(\hat{V}_2^n,v_3^n,\mathbf{c}^\prime)\!\!=\!\!0|u^n , v_2^n, v_3^n, \mathbf{c}^\prime )\nonumber\\
 &\geq p(N_2(\hat{V}_2^n,v_3^n,\mathbf{c}^\prime)=0|u^n,v_2^n,v_3^n,\mathbf{c}^\prime)\nonumber\\
 &\hspace{.35cm}\times p(\mathcal{I}_{t_2}\!\!=\!\!1,\mathcal{I}_{t_3}\!\!=\!\!1| N_2( \hat{V}_2^n, v_3^n, \mathbf{c}^\prime )\!\!=\!\!0, u^n , v_2^n, v_3^n, \mathbf{c}^\prime )\nonumber\\
  &= p(N_2(\hat{V}_2^n,v_3^n,\mathbf{c}^\prime)=0|u^n,v_2^n,v_3^n,\mathbf{c}^\prime)\nonumber\\
  &\hspace{.35cm}\times\frac{1}{N_1(v_2^n,v_3^n,\mathbf{c}^\prime)}\nonumber\\
  &\geq \left(1-2^{n(T_3-S_3)}2^{-n(I(V_2;V_3|U)-\sigma(\epsilon))}\right)\nonumber\\
  &\hspace{.35cm} \times \frac{1}{N_1(v_2^n,v_3^n,\mathbf{c}^\prime)}\label{likebefore2}
 \end{align}
 Comparing \eqref{likebefore1} and \eqref{likebefore2} (under the assumption  that {$T_3-S_3-I(V_2;V_3|U)<0$}), it becomes clear that we have  $p(\hat{v}_2^n|u^n,v_2^n,v_3^n)\leq 2^{n\delta}p(\hat{v}_2^n|u^n)$ for every $\delta>0$ and $n$ large enough.
 \item  $T_3-S_3-I(V_2;V_3|U)> 0$: in this case, we first re-write  expression  \eqref{tocontinue} as follows.

\begin{align}
&p_{\hat{V}^n_2|U^n}\!(\hat{v}_{2}^n|u^n)\!\! \sum_{\mathbf{c}^\prime\in{\mathcal{C}}^\prime_{}}\!\! \left[p(\mathbf{c}^\prime|u^n\!,v_2^n,v_3^n,1,1)\vphantom{\times\frac{p(\mathcal{I}_{t_2}=1,\mathcal{I}_{t_3}=1|u^n\!,v_2^n,v_3^n,\hat{v}^n_2,\mathbf{c}^\prime)}{p(\mathcal{I}_{t_2}\!\!=\!\!1,\mathcal{I}_{t_3}\!\!=\!\!1|u^n\!,v_2^n,v_3^n,\mathbf{c}^\prime)}}\right.\nonumber\\&\hspace{2.7cm}\left.\times\frac{p(\mathcal{I}_{t_2}\!\!=\!\!1,\mathcal{I}_{t_3}\!\!=\!\!1|u^n\!,v_2^n,v_3^n,\hat{v}^n_2,\mathbf{c}^\prime)}{p(\mathcal{I}_{t_2}\!\!=\!\!1,\mathcal{I}_{t_3}\!\!=\!\!1|u^n\!,v_2^n,v_3^n,\mathbf{c}^\prime)}\!\right]\nonumber\\ 
&=p(\hat{v}_{2}^n|u^n)\!\!\sum_{\mathbf{c}^\prime\in{\mathcal{C}}^\prime_{}}\!\!\left[ \frac{p(\mathcal{I}_{t_2}\!\!=\!\!1,\mathcal{I}_{t_3}\!\!=\!\!1,\mathbf{c}^\prime|u^n\!,v_2^n,v_3^n)}{p(\mathcal{I}_{t_2}\!\!=\!\!1,\mathcal{I}_{t_3}\!\!=\!\!1|u^n\!,v_2^n,v_3^n)}\right.\nonumber\\&\hspace{2.3cm}\left.\vphantom{}\times\frac{p(\mathcal{I}_{t_2}\!\!=\!\!1,\mathcal{I}_{t_3}\!\!=\!\!1|u^n\!,v_2^n,v_3^n,\hat{v}^n_2,\mathbf{c}^\prime)}{p(\mathcal{I}_{t_2}\!\!=\!\!1,\mathcal{I}_{t_3}\!\!=\!\!1|u^n\!,v_2^n,v_3^n,\mathbf{c}^\prime)}\right]\nonumber\\
   &=p(\hat{v}_{2}^n|u^n)\!\!\sum_{\mathbf{c}^\prime\in{\mathcal{C}}^\prime_{}}\!\! \left[\frac{p(\mathbf{c}^\prime|u^n\!,v_2^n,v_3^n)}{p(\mathcal{I}_{t_2}\!\!=\!\!1,\mathcal{I}_{t_3}\!\!=\!\!1|u^n\!,v_2^n,v_3^n)}\right.\nonumber\\&\hspace{2.3cm}\left.{ \vphantom{\frac{p(\mathbf{c}^\prime|u^n\!,v_2^n,v_3^n)}{p(\mathcal{I}_{t_2}\!\!=\!\!1,\mathcal{I}_{t_3}\!\!=\!\!1|u^n\!,v_2^n,v_3^n)}}\times p(\mathcal{I}_{t_2}\!\!=\!\!1,\mathcal{I}_{t_3}\!\!=\!\!1|u^n\!,v_2^n,v_3^n,\hat{v}^n_2,\mathbf{c}^\prime)}\right]\nonumber\\
      &=p(\hat{v}_{2}^n|u^n)\!\!\sum_{\mathbf{c}^\prime\in{\mathcal{C}}^\prime_{}}\!\! \left[\frac{p(\mathbf{c}^\prime|u^n\!,v_2^n,v_3^n,\hat{v}_2^n)}{p(\mathcal{I}_{t_2}\!\!=\!\!1,\mathcal{I}_{t_3}\!\!=\!\!1|u^n\!,v_2^n,v_3^n)}\right.\nonumber\\&\hspace{2.3cm}\left.\vphantom{\frac{p(\mathbf{c}^\prime|u^n\!,v_2^n,v_3^n,\hat{v}_2^n)}{p(\mathcal{I}_{t_2}\!\!=\!\!1,\mathcal{I}_{t_3}\!\!=\!\!1|u^n\!,v_2^n,v_3^n)}} \times p(\mathcal{I}_{t_2}\!\!=\!\!1,\mathcal{I}_{t_3}\!\!=\!\!1|u^n\!,v_2^n,v_3^n,\hat{v}^n_2,\mathbf{c}^\prime)\right]\nonumber\\
  &= p(\hat{v}_{2}^n|u^n)\frac{p(\mathcal{I}_{t_2}\!=\!1,\mathcal{I}_{t_3}\!=\!1|u^n\!,v_2^n,v_3^n,\hat{v}^n_2)}{p(\mathcal{I}_{t_2}\!=\!1,\mathcal{I}_{t_3}\!=\!1|u^n\!,v_2^n,v_3^n)}\label{fraction}
%
 \end{align}

The following claim will be the key in bounding the fraction in \eqref{fraction}. 
 \begin{claim}
 \label{dblexp}
Let all sequences $V_2^n(1,1,t_2)$, $t_2\neq1$, and $V_3^n(1,1,t_3)$, $t_3\neq 1$, be picked randomly in the product bin of interest. The event where the number of jointly typical pairs in a row is much larger than the total remaining number of jointly typical pairs in the bin has a probability which decays to zero double exponentially fast with $n$ (under the assumption $T_3-S_3>I(V_2;V_3|U)$); i.e., for some constant $\alpha,\beta>0$,
 \begin{align}
&\Pr\!\left(\!\!\!\begin{array}{l|l}N_2( \hat{V}_2^n, v_3^n, \mathbf{C}^\prime )\!>\!2^{2+2n\delta(\epsilon)}\!N_1( v_2^n, v_3^n, \mathbf{C}^\prime )\!\!&\!U^n\!\!=\!u^n\end{array}\!\!\!\!\right)\nonumber\\&\leq \beta\exp\left(\!-\alpha2^{ n(T_3-S_3-I(V_2;V_3|U)-\delta(\epsilon))}\!\right)\!
 \end{align}
 \end{claim}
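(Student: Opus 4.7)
The plan is to bound $N_2$ from above and $N_1$ from below via independent multiplicative Chernoff bounds, and then combine the two tails by a union bound. The hypothesis $T_3 - S_3 > I(V_2;V_3|U)$ guarantees that the mean of each sum grows exponentially in $n$, which is precisely what drives the double-exponential decay asserted in the statement.

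I would start by conditioning on $U^n = u^n$, under which the sequences $V_2^n(1,1,t_2)$ for $t_2 \geq 2$ and $V_3^n(1,1,t_3)$ for $t_3 \geq 2$ are mutually independent and drawn i.i.d.\ from $\prod_i p_{V_2|U}(\cdot|u_i)$ and $\prod_i p_{V_3|U}(\cdot|u_i)$, respectively. For the upper bound on $N_2$, given $U^n = u^n$ and any $\hat{V}_2^n$ jointly typical with $u^n$, the indicators $\mathbf{1}\{(\hat{V}_2^n, V_3^n(1,1,t_3)) \in \mathcal{A}_\epsilon^n\}$ are i.i.d.\ Bernoulli with mean at most $2^{-n(I(V_2;V_3|U) - \delta(\epsilon))}$ by the conditional joint typicality lemma, so $\mathbb{E}[N_2] \leq 2^{n(T_3 - S_3 - I(V_2;V_3|U) + \delta(\epsilon))} + O(1)$. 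A standard multiplicative Chernoff bound then yields
\begin{equation}
\Pr\!\left( N_2 \geq 2 \cdot 2^{n(T_3 - S_3 - I(V_2;V_3|U) + \delta(\epsilon))} \,\big|\, U^n = u^n \right) \leq \beta_1 \exp\!\left( -\alpha_1 \cdot 2^{n(T_3 - S_3 - I(V_2;V_3|U) - \delta(\epsilon))} \right).
\end{equation}
For the lower bound on $N_1$, I would retain only the $t_2 = 1$ row $N_1^{(1)} := \sum_{t_3} \mathbf{1}\{(v_2^n, V_3^n(1,1,t_3)) \in \mathcal{A}_\epsilon^n\}$, since $(u^n, v_2^n, v_3^n) \in \mathcal{A}_\epsilon^n$ by construction of the chosen pair in the product bin. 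The conditional joint typicality lemma lower-bounds each indicator's mean, giving $\mathbb{E}[N_1^{(1)}] \geq 2^{n(T_3 - S_3 - I(V_2;V_3|U) - \delta(\epsilon))}$, and a lower-tail Chernoff bound produces $N_1 \geq N_1^{(1)} \geq \tfrac{1}{2}\,\mathbb{E}[N_1^{(1)}]$ with matching double-exponential failure probability. Taking the ratio on the intersection of the two good events yields $N_2/N_1 \leq 4 \cdot 2^{2n\delta(\epsilon)} = 2^{2 + 2n\delta(\epsilon)}$, and a union bound over the complementary events gives the stated upper bound.

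The main subtlety I expect concerns the independence structure under conditioning. The joint distribution of the codebook is nominally biased by the event $\mathcal{I} = \mathbf{1}$, but the claim is deliberately stated conditioned only on $U^n = u^n$: the outer argument leading to~\eqref{fraction} has already absorbed $\mathcal{I} = \mathbf{1}$ into the ratio of conditional probabilities, so here $\mathbf{C}'$ may be treated as the i.i.d.\ codebook given $u^n$, with $v_2^n$ and $v_3^n$ being fixed realizations of the chosen typical pair. A secondary care-point is that $\hat{V}_2^n$ itself is random; the event that $\hat{V}_2^n$ is atypical with $u^n$ has vanishing probability by the AEP and absorbs into $\beta_1$. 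With these points in place, the remaining calculation is a routine large-deviation estimate.
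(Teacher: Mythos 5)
Your proof follows essentially the same route as the paper's (Appendix~\ref{LASTAPP}): a multiplicative Chernoff upper tail for $N_2$ against its mean of order $2^{n(T_3-S_3-I(V_2;V_3|U))}$, a matching lower-tail Chernoff bound for a single row of $N_1$, and a union bound --- the only difference being that you lower-bound $N_1$ by the row of the fixed typical $v_2^n$ (i.e., $t_2=1$), whereas the paper uses the fresh random row $t_2=3$ via $N_3$, an immaterial variation (if anything yours is slightly cleaner, since your row's Bernoulli parameter is deterministically at least $2^{-n(I(V_2;V_3|U)+\delta(\epsilon))}$). Your caveat about an atypical $\hat V_2^n$ needing to be ``absorbed into $\beta_1$'' is actually unnecessary: for such $\hat v_2^n$ the joint-typicality probability is zero, so the upper bound $p_{\hat v_2^n}\leq 2^{-n(I(V_2;V_3|U)-\delta(\epsilon))}$ holds unconditionally and no additional AEP union bound (which would only decay exponentially, not doubly exponentially) is required.
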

 \begin{proof}
 Let $N_3(v_3^n,\mathbf{C}^\prime)$ be the number of jointly typical pairs $(V_2^n(1,1,3),V_3^n(1,1,t_3))$, where $t_3=1,\ldots,2^{n(T_3-S_3)}$. Obviously, $N_1(v_2^n,v_3^n,\mathbf{C}^\prime)\geq N_3(v_3^n,\mathbf{C}^\prime)$. To prove the claim, it is sufficient to show that
 \begin{align}
 &\Pr \!\left(\! N_2( \hat{V}_2^n\!\hspace{-.025cm}, \hspace{-.025cm}v_3^n\! , \hspace{-.025cm}\mathbf{C}^\prime)\!\hspace{-.05cm} >\! \hspace{-.05cm} 2^{1 + n\delta(\hspace{-.05cm}\epsilon\hspace{-.05cm})} 2^{n( T_3 \hspace{-.025cm}-\hspace{-.025cm} S_3\hspace{-.025cm}-\hspace{-.025cm}I(\! V_2;V_3|U\! )\!)} \!\right|\!\!\left.\vphantom{N_2( \hat{V}_2^n , v_3^n , \mathbf{C}^\prime )\! >\!  2^{1 + n\delta(\epsilon)} 2^{n( T_3 - S_3 -I(\! V_2;V_3|U\! ))} }U^n\!\!=\! u^n\!\!\right)\nonumber\\
 &\leq \beta_1\exp\!\left(\! -\alpha_12^{ n(T_3-S_3-I(V_2;V_3|U)-\delta(\epsilon)\!)}\! \right)\label{lastineq1}\end{align}
for some $\alpha_1,\beta_1>0$, and that
  \begin{align}
  &\Pr\! \left(\! N_3 ( v_3^n , \mathbf{C}^\prime )\! <\!  2^{-1 - n\delta(\hspace{-.05cm}\epsilon\hspace{-.05cm})} 2^{n( T_3 - S_3 -I(\! V_2;V_3|U\!)\! )} \!\right|\!\!\left.\vphantom{N_3 ( v_3^n , \mathbf{C}^\prime )\! <\!  2^{-1 - n\delta(\epsilon)} 2^{n( T_3 - S_3 -I(\! V_2;V_3|U \!))}}U^n\!\!\!=\! u^n\!\!\right)\nonumber\\
  &\leq \beta_2\exp\!\left(\! -\alpha_22^{ n(T_3-S_3-I(V_2;V_3|U)-\delta(\epsilon)\!)}\! \right)\label{lastineq2}
  \end{align}
for some $\alpha_2,\beta_2>0$.
Both of the above  inequalities can be shown using standard Chernoff bounding techniques. We defer the interested reader to Appendix \ref{LASTAPP}.
 \end{proof}
The numerator of \eqref{fraction} is bounded from above by disregarding $\hat{v}_2^n$.
 \begin{align}
\Pr&(\mathcal{I}_{t_2}=1,\mathcal{I}_{t_3}=1|u^n,v_2^n,\hat{v}_2^n,v_3^n)\nonumber\\
&\qquad\qquad\leq \sum_{\mathbf{c}^\prime}p(\mathbf{c}^\prime|u^n)\frac{1}{N_1(v_2^n,v_3^n,\mathbf{c}^\prime)}\nonumber\\
&\qquad\qquad= \mathbb{E}\left[\left.\frac{1}{N_1(v_2^n,v_3^n,\mathbf{C}^\prime)}\right|U^n=u^n\right]\label{final1num}
 \end{align}
 Similarly for the denominator we have the lower bound in \eqref{final1denom}. 
 \allowdisplaybreaks
 \begin{align}
&\Pr(\mathcal{I}_{t_2}=1,\mathcal{I}_{t_3}=1|u^n,v_2^n,v_3^n)\nonumber\\
&=\sum_{\mathbf{c}^\prime,\tilde{v}_2^n}\left[p_{\mathbf{C}^\prime\hat{V}_2^n|U^n}(\mathbf{c}^\prime,\tilde{v}_2^n|u^n)\right.\nonumber\\&\hspace{1.25cm}\left.\times p_{\mathcal{I}_{t_2}\mathcal{I}_{t_3}|U^nV_2^nV_3^n\hat{V}_2^n\mathbf{C}^\prime}(1,1|u^n\!,v_2^n,v_3^n,\tilde{v}_2^n,\mathbf{c}^\prime)\right]\nonumber\\
&=\sum_{\substack{\mathbf{c}^\prime,\tilde{v}_2^n}}\left[p_{\mathbf{C}^\prime\hat{V}_2^n|U^n}(\mathbf{c}^\prime,\tilde{v}_2^n|u^n)\vphantom{\frac{1}{N_1(v_2^n,v_3^n,\mathbf{c}^\prime)+N_2(\tilde{v}_2^n,v_3^n,\mathbf{c}^\prime)}}\right.\nonumber\\&\hspace{1.25cm}\left.\times \frac{1}{N_1(v_2^n,v_3^n,\mathbf{c}^\prime)+N_2(\tilde{v}_2^n,v_3^n,\mathbf{c}^\prime)}\right]\nonumber\\
&\geq\hspace{-3.9cm}\sum_{\substack{\mathbf{c}^\prime,\tilde{v}_2^n:\\\hspace{3.9cm}N_2(\tilde{v}_2^n,v_3^n,\mathbf{c}^\prime)\leq 2^{2+2n\delta(\epsilon)}N_1(v_2^n,v_3^n,\mathbf{c}^\prime)}}\hspace{-3.9cm}\left[p_{\mathbf{C}^\prime\hat{V}_2^n|U^n}(\mathbf{c}^\prime,\tilde{v}_2^n|u^n)\vphantom{\frac{1}{N_1(v_2^n,v_3^n,\mathbf{c}^\prime)+N_2(\tilde{v}_2^n,v_3^n,\mathbf{c}^\prime)}}\right.\nonumber\\&\left.\hspace{1.25cm}\times\frac{1}{N_1(v_2^n,v_3^n,\mathbf{c}^\prime)+N_2(\tilde{v}_2^n,v_3^n,\mathbf{c}^\prime)}\right]\nonumber\\
&\geq\hspace{-3.9cm}\sum_{\substack{\mathbf{c}^\prime,\tilde{v}_2^n:\\\hspace{3.9cm}N_2(\tilde{v}_2^n,v_3^n,\mathbf{c}^\prime)\leq 2^{2+2n\delta(\epsilon)}N_1(v_2^n,v_3^n,\mathbf{c}^\prime)}}\hspace{-3.9cm}\left[p_{\mathbf{C}^\prime\hat{V}_2^n|U^n}(\hspace{-.025cm}\mathbf{c}^\prime\!,\tilde{v}_2^n|u^n\hspace{-.025cm})2^{-3-2n\delta(\epsilon)}\hspace{-.05cm}\frac{1}{N_1(\hspace{-.025cm}v_2^n,v_3^n,\mathbf{c}^\prime\hspace{-.025cm})}\!\right]\nonumber\\
&=2^{-3-2n\delta(\hspace{-.025cm}\epsilon\hspace{-.025cm})}\!\!\!\left[\!\sum_{\substack{\mathbf{c}^\prime,\tilde{v}_2^n}}p_{\mathbf{C}^\prime\hat{V}_2^n|U^n}\!(\mathbf{c}^\prime\!,\tilde{v}_2^n|u^n)\frac{1}{N_1(v_2^n,v_3^n,\mathbf{c}^\prime)}\vphantom{\sum_{\substack{\mathbf{c}^\prime,\tilde{v}_2^n:\\\hspace{3.9cm}N_2(\tilde{v}_2^n,v_3^n,\mathbf{c}^\prime)>2^{2+2n\delta(\epsilon)}N_1(v_2^n,v_3^n,\mathbf{c}^\prime)}}}\right.\nonumber\\&\hspace{2cm}-\!\!\!\!\left.\hspace{-3.8cm}\sum_{\substack{\mathbf{c}^\prime\!,\tilde{v}_2^n:\\\hspace{3.9cm}N_2(\tilde{v}_2^n,v_3^n,\mathbf{c}^\prime)>2^{2+2n\delta(\epsilon)}N_1(v_2^n,v_3^n,\mathbf{c}^\prime)}}\hspace{-2.1cm}\hspace{-1.9cm}p_{\mathbf{C}^\prime\hat{V}_2^n|U^n}\!(\mathbf{c}^\prime\!,\tilde{v}_2^n|u^n)\frac{1}{N_1(v_2^n\!,v_3^n\!,\mathbf{c}^\prime)}\!\right]\nonumber\\
&\stackrel{(a)}{\geq}2^{-3-2n\delta(\epsilon)}\mathbb{E}\left[\begin{array}{l|l}\frac{1}{N_1(v_2^n,v_3^n,\mathbf{C}^\prime)}&U^n=u^n\end{array}\right]\nonumber\\&\quad-2^{-3-2n\delta(\epsilon)}\Pr\!\left(\!\!\!\!\!\!\!\begin{array}{l|l}\begin{array}{l}N_2(\hat{V}_2^n\!,v_3^n\!,\mathbf{C}^\prime)>\\\hspace{.25cm}2^{2+2n\delta(\epsilon)}N_1(v_2^n\!,v_3^n\!,\mathbf{C}^\prime)\end{array}\!\!\!\!\!\!&\!\!U^n\!=\!u^n\end{array}\hspace{-.3cm}\right)\nonumber\\
&\stackrel{(b)}{\geq}2^{-3-2n\delta(\epsilon)}(1-c)\mathbb{E}\left[\!\!\!\!\begin{array}{l|l}\frac{1}{N_1(v_2^n\!,v_3^n\!,\mathbf{C}^\prime)}\!\!&\!\!U^n\!=\!u^n\end{array}\!\!\!\!\right],\quad c\!>\!0\label{final1denom}
 \end{align}

In the above, (a) holds because $N_1(v_2^n,v_3^n,\mathbf{C}^\prime)\geq 1$ (ensured by the assumption that $(v_2^n,v_3^n)\in\mathcal{A}_\epsilon^n$). Step (b) holds, for any constant $c>0$ and large enough $n$, by Claim \ref{dblexp} as follows.
\allowdisplaybreaks
\begin{align*}
&\Pr(N_2(\hat{V}_2^n\!,v_3^n\!,\mathbf{C}^\prime)\!>\!2^{2+2n\delta(\epsilon)}N_1(v_2^n\!,v_3^n\!,\mathbf{C}^\prime)|U^n\!=\!u^n)\\
&\leq \beta\exp\left(-\alpha2^{ n(T_3-S_3-I(V_2;V_3|U)-\delta(\epsilon))}\right)\\
&\stackrel{(a)}{\leq} \frac{c}{2}2^{-n(T_2-S_2+T_3-S_3-I(V_2;V_3|U)+\delta(\epsilon))}\\
&\leq c\frac{1}{\mathbb{E}\left[N_1(v_2^n,v_3^n,\mathbf{C}^\prime)|U^n=u^n\right]}\\
&\leq c\mathbb{E}\left[\begin{array}{l|l}\frac{1}{N_1(v_2^n,v_3^n,\mathbf{C}^\prime)}&U^n=u^n\end{array}\right]
\end{align*}
In step (a) above, we have used the fact that $T_3-S_3>I(V_2;V_3|U)-\delta(\epsilon)$, $T_2\geq S_2$, and that $n$ is large enough.

Finally, upper bounding the numerator of \eqref{fraction} by \eqref{final1num} and lower bounding its denominator by \eqref{final1denom}, we reach to a factor with an exponent of order ${n\delta(\epsilon)}$. Inserting this back into \eqref{fraction}, we conclude that for every $\delta>0$ and $n$ large enough $p(\hat{v}_2^n|u^n,v_2^n,v_3^n)\leq 2^{n\delta}p(\hat{v}_2^n|u^n)$.

\end{enumerate}

\section{Proof to Lemma \ref{jrnl1-Lemmaexistence}}
\label{app:Lemmaexistence}
We start by proving the following claim. 
\begin{claim}
 \label{jrnl1-claimconverse}
 Component decoder $1$ succeeds with high probability (averaged over
codebooks) if $R_0+S_0< I(U;Y_2)$, and fails with high probability, if
$R_0+S_0>I(U;Y_2)$.
 \end{claim}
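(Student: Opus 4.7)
The plan is to handle the two halves of the claim separately, since the ``succeeds'' part is a standard packing argument and the ``fails'' part needs a second-moment (or Paley--Zygmund) lower bound. Throughout I assume, without loss of generality (by symmetry of the codebook), that $(m_0,s_0)=(1,1)$ was transmitted.

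For the first half ($R_0+S_0<I(U;Y_2)$ implies success), I would first use the joint AEP on the encoding to note that $(U^n(1,1),V_2^n(1,1,t_2),V_3^n(1,1,t_3),Y_2^n)\in\mathcal{A}_\epsilon^n$ with probability at least $1-\epsilon$; in particular $(U^n(1,1),Y_2^n)\in\mathcal{A}_\epsilon^n$. For any alternative pair $(\tilde m_0,\tilde s_0)\neq(1,1)$, the codeword $U^n(\tilde m_0,\tilde s_0)$ was generated independently of $U^n(1,1)$ and of the encoding variables, hence is independent of $Y_2^n$, and the joint typicality lemma gives
\begin{equation*}
\Pr\bigl((U^n(\tilde m_0,\tilde s_0),Y_2^n)\in\mathcal{A}_\epsilon^n\bigr)\leq 2^{-n(I(U;Y_2)-\delta(\epsilon))}.
\end{equation*}
A union bound over the at most $2^{n(R_0+S_0)}$ such pairs yields an error probability bounded by $\epsilon+2^{n(R_0+S_0-I(U;Y_2)+\delta(\epsilon))}$, which vanishes when $R_0+S_0<I(U;Y_2)$.

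For the second half ($R_0+S_0>I(U;Y_2)$ implies failure), I would use a second-moment argument. Let
\begin{equation*}
N=\bigl|\{(\tilde m_0,\tilde s_0)\neq(1,1):(U^n(\tilde m_0,\tilde s_0),Y_2^n)\in\mathcal{A}_\epsilon^n\}\bigr|.
\end{equation*}
Condition on $Y_2^n=y_2^n$ typical (which holds with probability $\geq 1-\epsilon$). Since the codewords $U^n(\tilde m_0,\tilde s_0)$ for distinct $(\tilde m_0,\tilde s_0)\neq(1,1)$ are i.i.d.\ and independent of $Y_2^n$, a lower bound on the typical-set probability gives $\mathbb{E}[N\mid y_2^n]\geq(2^{n(R_0+S_0)}-1)\cdot 2^{-n(I(U;Y_2)+\delta(\epsilon))}$, which grows exponentially under the hypothesis $R_0+S_0>I(U;Y_2)$. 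Moreover, by pairwise independence of the codewords, $\mathrm{Var}(N\mid y_2^n)\leq\mathbb{E}[N\mid y_2^n]$. The Paley--Zygmund inequality (or Chebyshev) then yields
\begin{equation*}
\Pr(N\geq 1\mid y_2^n)\geq\frac{(\mathbb{E}[N\mid y_2^n])^2}{\mathbb{E}[N\mid y_2^n]+(\mathbb{E}[N\mid y_2^n])^2}\to 1.
\end{equation*}
Averaging over $y_2^n\in\mathcal{A}_\epsilon^n$ shows that, with high probability, there is some $(\tilde m_0,\tilde s_0)\neq(1,1)$ whose codeword is also jointly typical with $Y_2^n$, so the joint unique decoder must declare an error.

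The main obstacle is a subtlety noted already in the paper: because the encoder uses random binning and joint-typicality selection, the distribution of $U^n(1,1)$ conditioned on the event $\mathcal{I}=\mathbf{1}$ is not exactly the product distribution, and one must verify that codewords $U^n(\tilde m_0,\tilde s_0)$ with $(\tilde m_0,\tilde s_0)\neq(1,1)$ remain genuinely independent of $Y_2^n$ under this conditioning. Fortunately the encoding only touches codewords indexed by the true message, so the independence used in both halves (with the same distribution $\prod_i p_U(u_i)$) is preserved; this is the analogue of the observation used in Appendix~\ref{app:typicalitylemma} and obviates any further technical correction. The rest of the argument then goes through as outlined.
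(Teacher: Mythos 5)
Your proof is correct, and the first half (the packing/union-bound argument for $R_0+S_0<I(U;Y_2)$) is essentially identical to the paper's. The second half takes a genuinely different route. The paper does not use a second-moment or Paley--Zygmund argument at all: it bounds the probability of the \emph{complementary} event $\{(U^n(\hat m_0,\hat s_0),Y_2^n)\notin\mathcal{A}_\epsilon^n \text{ for all } (\hat m_0,\hat s_0)\neq(1,1)\}$ directly, conditioning on $Y_2^n=y_2^n$ and exploiting the fact that the wrong codewords are \emph{mutually} (not just pairwise) independent of each other and of $Y_2^n$ given $\mathcal{I}=\mathbf{1}$, so that this probability factorizes exactly as $\bigl(1-(1-\epsilon)2^{-n(I(U;Y_2)+2\epsilon)}\bigr)^{2^{n(R_0+S_0)}-1}$, which tends to $0$ via $(1-x)^K\approx e^{-Kx}$ when $R_0+S_0>I(U;Y_2)+2\epsilon$. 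Your Paley--Zygmund argument reaches the same threshold but only invokes pairwise independence and the first two moments of the count $N$; that makes it more robust (it would survive in codebook ensembles where wrong codewords share structure and are only pairwise independent), at the cost of a variance computation that is unnecessary here given full independence. Your closing remark about the conditioning on $\mathcal{I}=\mathbf{1}$ is also the right thing to worry about, and your resolution is the one the paper implicitly uses: the encoder's typicality-based index selection only involves codewords indexed by the transmitted $(m_0,s_0)=(1,1)$, so the competing $U^n(\tilde m_0,\tilde s_0)$ retain their product distribution and their independence from $Y_2^n$.
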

\begin{proof}[Proof of Claim \ref{jrnl1-claimconverse}]
Component decoder $1$ makes an error in decoding only if one of the following events occur:
\begin{enumerate}
\item[({\em i})] $(U^n(1,1),Y^n_2)$ is not jointly typical. The probability of this event can be made arbitrarily small by choosing a large enough $n$.
\item[({\em ii})] There exists a pair of indices $(\hat{m}_0,\hat{s}_0)\neq(1,1)$ such that $(U^n(\hat{m}_0,\hat{s}_0),Y^n_2)$ is jointly typical.
\end{enumerate}
To analyze the error probability, we assume without any loss of generality that the originally sent indices are $m_0=1$ and $s_0=1$. The error probability is thus upper-bounded by
\begin{eqnarray*}
&&\hspace{-.7cm}\Pr(\text{error at component decoder }1)\nonumber\\
&\leq&    \epsilon + \Pr\left(\!\!\!\!\!\!\left.\begin{array}{c}( U^n (\hat{m}_0,\hat{s}_0),Y^n_2 ) \in  \mathcal{A}^n_\epsilon\\\text{ for some }  (\hat{m}_0,\hat{s}_0)  \neq   (1,1)\end{array}\right|m_0=1, s_0=1\!\!\right)\nonumber\\
&\leq&    \epsilon + 2^{n(R_0+S_0-I(U;Y_2)+\delta(\epsilon))},
\end{eqnarray*}
where $\delta(\epsilon)\to 0$ if $\epsilon\to 0$. This proves that for large enough $n$, the error probability of component decoder $1$ could be made arbitrary small if $R_0+S_0< I(U;Y_2)$.

On the other hand, decoder $1$ makes an error if there exists an index pair $(\hat{m}_0,\hat{s}_0)\neq (1,1)$ such that $(U^n(\hat{m}_0,\hat{s}_0),Y^n_2)$ is jointly typical. The probability of error at decoder $1$ is, therefore, lower-bounded by 
\begin{eqnarray}
\Pr\left(\begin{array}{l|l}\!\!\!\!\begin{array}{c}(U^n(\hat{m}_0,\hat{s}_0),Y^n_2)\in \mathcal{A}^n_\epsilon\\\text{ for some } (\hat{m}_0,\hat{s}_0)\neq (1,1)\end{array}&m_0=1, s_0=1\end{array}\right),
\end{eqnarray} and we want to show that it is arbitrarily close to $1$ if $R_0+S_0>I(U;Y_2)$. We instead look at the complementary event, $\{(U^n(\hat{m}_0,\hat{s}_0),Y^n_2)\notin \mathcal{A}^n_\epsilon\text{ for all } (\hat{m}_0,\hat{s}_0)\neq (1,1)\}$, and show that its probability can be made arbitrarily small.
\allowdisplaybreaks
\begin{align}
&\Pr\left(\!\!\!\!\begin{array}{l|l}\begin{array}{c}(U^n(\hat{m}_0,\hat{s}_0),Y^n_2) \!\notin\!  \mathcal{A}^n_\epsilon\\\text{ for all } (\hat{m}_0,\hat{s}_0) \!\neq\!  (1,1)\end{array}&m_0\!=\!1, s_0\!=\!1\end{array}\!\!\!\!\right)\nonumber\\
  &=   \sum_{y_2^n}\left[\Pr(Y^n_2 = y_2^n| m_0\!=\!1, s_0\!=\!1)\vphantom{\times\Pr \left( \!\!\!\!\!\!\!\!\begin{array}{c|c}\begin{array}{c}(U^n(\hat{m}_0,\hat{s}_0),y_2^n) \notin  \mathcal{A}^n_\epsilon\\\text{ for all } (\hat{m}_0,\hat{s}_0) \neq  (1,1)\end{array}    &    \begin{array}{l}Y^n_2\! =\! y_2^n,\\ m_0\!=\!1,\\ s_0\!=\!1\end{array}\end{array}   \!\!\!\!\!\!\!\!  \right)}\right.\nonumber\\&\hspace{1.1cm}\left.\times\Pr \left( \!\!\!\!\!\!\!\!\begin{array}{c|c}\begin{array}{c}(U^n(\hat{m}_0,\hat{s}_0),y_2^n) \notin  \mathcal{A}^n_\epsilon\\\text{ for all } (\hat{m}_0,\hat{s}_0) \neq  (1,1)\end{array}     \!\!\!\!  & \!\!    \begin{array}{l}Y^n_2\! =\! y_2^n,\\ m_0\!=\!1,\\ s_0\!=\!1\end{array}\end{array}   \!\!\!\!\!\!  \right)\right]\nonumber\\
&\leq  \epsilon+   \sum_{y_2^n \in \mathcal{A}^n_\epsilon} \left[  \Pr(Y^n_2 = y_2^n| m_0\!=\!1, s_0\!=\!1)\vphantom{\times\Pr \left(\!\!\!\!\!\!\!\!\!\!\begin{array}{c|c}\begin{array}{c}(U^n(\hat{m}_0,\hat{s}_0),y_2^n) \notin  \mathcal{A}^n_\epsilon\\\text{ for all } (\hat{m}_0,\hat{s}_0) \neq  (1,1)\end{array}  \!\!\!\!  & \!\!  \begin{array}{l}Y^n_2\! =\! y_2^n,\\ m_0\!=\!1,\\ s_0\!=\!1\end{array}\end{array}  \!\!\!\! \!\!\!\!  \right)}\right.\nonumber\\&\hspace{2.1cm}\left. \times\Pr \left(\!\!\!\!\!\!\!\!\begin{array}{c|c}\begin{array}{c}(U^n(\hat{m}_0,\hat{s}_0),y_2^n) \notin  \mathcal{A}^n_\epsilon\\\text{ for all } (\hat{m}_0,\hat{s}_0) \neq  (1,1)\end{array}  \!\!\!\!  & \!\!  \begin{array}{l}Y^n_2\! =\! y_2^n,\\ m_0\!=\!1,\\ s_0\!=\!1\end{array}\end{array}  \!\!\!\! \!\!  \right)\!\right]\nonumber\\
& =  \epsilon+ \sum_{y_2^n \in \mathcal{A}^n_\epsilon} \!\left[  \Pr(Y^n_2 = y_2^n| m_0\!=\!1, s_0\!=\!1) \vphantom{\times\!\prod_{\!\!\!\!\substack{\\\ \\(\hat{m}_0,\hat{s}_0)\neq (1,1)}} \!\! \!\!\!\!\!\!\!   \Pr\left( \begin{array}{c|c}\hspace{-.2cm}(U^n (\hat{m}_0,\hat{s}_0),y_2^n\! ) \!\notin\!  \mathcal{A}^n_\epsilon\!\!\!&\!\!\!\!\! \begin{array}{l}Y^n_2\!\! =\!\! y_2^n,\\ m_0\!=\!1,\\ s_0\!=\!1\end{array}\end{array} \!\!\!\!\!\!\!\right)}\right.\nonumber\\&\hspace{2cm}\left. \times\!\!\!\!\!\prod_{\!\!\!\!\substack{\\\ \\(\hat{m}_0,\hat{s}_0)\neq (1,1)}} \!\! \!\!\!\!\!\!\!   \Pr\left( \begin{array}{c|c}\!\!\!\!\!(U^n (\hat{m}_0,\hat{s}_0),y_2^n\! ) \!\notin\!  \mathcal{A}^n_\epsilon\!\!&\!\!\!\! \begin{array}{l}Y^n_2\!\! =\!\! y_2^n,\\ m_0\!=\!1,\\ s_0\!=\!1\end{array}\end{array} \!\!\!\!\!\!\right)\!\!\right]\nonumber\\
&\leq   \epsilon+ \!\!\!\!\!  \sum_{y_2^n \in \mathcal{A}^n_\epsilon}\left[ \Pr(Y^n_2 = y_2^n|m_0\!=\!1,s_0\!=\!1)\vphantom{\times\left(1 - (1 - \epsilon)2^{-n(I(U;Y_2)+2\epsilon)}\right)^{\left(2^{n(R_0+S_0)}-1\right)}}\right.\nonumber\\&\hspace{1.75cm}\left.\times\left(1 - (1 - \epsilon)2^{-n(I(U;Y_2)+2\epsilon)}\right)^{\left(2^{n(R_0+S_0)}-1\right)}\right]\nonumber\\
&\leq\epsilon+\left(1 - (1 - \epsilon)2^{-n(I(U;Y_2)+2\epsilon)}\right)^{\left(2^{n(R_0+S_0)}-1\right)}\nonumber.
\end{align}
In the limit of $n\rightarrow\infty$, we have
\begin{eqnarray*}
&&\lim_{n\to \infty}\left(1 - (1 - \epsilon)2^{-n(I(U;Y_2)+2\epsilon)}\right)^{\left(2^{n(R_0+S_0)}-1\right)}\\&&=\lim_{n\to \infty}\exp\left\{-\left(2^{n(R_0+S_0)}(1 - \epsilon)2^{-n(I(U;Y_2)+2\epsilon)}\right)\right\},
\end{eqnarray*}
which (for any $0< \epsilon<1$) goes to $0$ as $n$ grows large, if $R_0+S_0>I(U;Y_2)+2\epsilon$.
\end{proof}

From Claim \ref{jrnl1-claimconverse}, it becomes clear that for each
operating point, averaged over codebooks, component decoder~1 either
succeeds with high probability if $R_0+S_0< I(U;Y_2)$ or fails with high
probability if $R_0+S_0> I(U;Y_2)$. 
In the former case, we let the joint unique decoding scheme be that of decoder
$1$, and in the latter, we let the joint unique decoding scheme be that of decoder
$2$. We prove in the following that this joint unique decoding scheme is reliable
(averaged over the codebooks) since the auxiliary decoder is reliable.

Consider an operating point for which decoder $1$ fails with high
probability. In such cases, we assumed the decoding scheme to be
joint unique decoding of messages $M_0$, $M_{10}$, and $M_{12}$. 
For this operating point, the probability of error of our joint unique decoder is 
\begin{eqnarray}
&& \hspace{-.7cm} \Pr( \text{error at component decoder $2$})\nonumber\\
&\leq
    &\Pr\left(\begin{array}{l} \text{error at component decoder $2$ }\\\text{ and component decoder $1$ succeeds}\end{array}\right)\nonumber\\
       && + \Pr\left( \begin{array}{l}\text{error at component decoder $2$}\\ \text{and component decoder $1$ fails}\end{array}\right)\nonumber\\
&\stackrel{(a)}{\leq}& \delta+\Pr\left(\begin{array}{l} \text{error at component decoder $2$ }\\\text{ and component decoder $1$ fails}\end{array}\right)\nonumber\\
&\leq&\delta+\Pr\left(\text{error at the auxiliary decoder}\right).\nonumber
\end{eqnarray}
In the above chain of inequalities, $(a)$ follows from the assumption on
the operating point. Also, $\delta$ and $\epsilon$ can both be taken
arbitrarily close to $0$ for large enough $n$. It is now easy to see that
given an operating point for which component decoder~1 fails, component
decoder~2 succeeds with high probability if the auxiliary decoder succeeds
with high probability.

\section{The error probability analysis of \eqref{jrnl-example2} (Section~\ref{statenoncausally})}
\label{appendixnoncausally}
{We proceed as in Section \ref{onlinedecoder}. We start by splitting the error event into two events:
\allowdisplaybreaks
\begin{align}
&\Pr\!\left( \left.\begin{array}{c} (W^n(1,1),S^n,U^n(1,1,1)\hspace{1.8cm}\\\hspace{1cm},V^n(1,1,1),Y_1^n)\in\mathcal{A}_\epsilon^n,\\\text{ and }\\(W^n(\tilde{m},\tilde{l}_0),Y^n_1)\in A^n_\epsilon \\  \text{for some }(\tilde{m},\tilde{l}_0)\neq(1,1),\\\text{ and } \\(W^n(\hat{m},\hat{l}_0),U^n(\hat{m},\hat{l}_0,\hat{l}_1),Y^n_1)\!\in\! A^n_\epsilon \\ \text{for some }(\hat{m},\hat{l}_0,\hat{l}_1)\neq(1,1,1)\end{array}\right|\mathcal{I}=\mathbf{1} \right)\nonumber\\
&\leq\epsilon+\Pr\!\left(\!\!\left.\begin{array}{c} (W^n(1,1),S^n,U^n(1,1,1)\hspace{1.8cm}\\\hspace{1cm},V^n(1,1,1),Y_1^n)\in\mathcal{A}_\epsilon^n,\\\text{ and }\\(W^n(\tilde{m},\tilde{l}_0),Y^n_1)\in A^n_\epsilon \\ \text{for some }(\tilde{m},\tilde{l}_0)\neq(1,1),\\\text{ and } \\(W^n(\hat{m},\hat{l}_0),U^n(\hat{m},\hat{l}_0,\hat{l}_1),Y^n_1)\!\in\! A^n_\epsilon \\ \text{for some }(\hat{m},\hat{l}_0)\neq(1,1)\text{ and }\ \hat{l}_1\end{array}\!\!\right|\mathcal{I}=\mathbf{1}\! \right)
\nonumber\\
&\quad+\Pr\!\left(\left.\begin{array}{c}(W^n(1,1),S^n,U^n(1,1,1)\hspace{1.8cm}\\\hspace{1cm},V^n(1,1,1),Y_1^n)\in\mathcal{A}_\epsilon^n,\\\text{ and }\\
(W^n(\tilde{m},\tilde{l}_0),Y^n_1)\in A^n_\epsilon \\ \text{for some
}(\tilde{m}_0,\tilde{l}_0)\neq(1,1),\\\text{ and
}\\(W^n(1,1),U^n(1,1,\hat{l}_1),Y^n_1)\!\in\!
A^n_\epsilon  \\\text{ for some } \hat{l}_1\!\neq\! 1\end{array}\hspace{-.1cm}\right|\mathcal{I}=\mathbf{1} \right)\label{term1new}
\end{align}

The first term in \eqref{term1new} is bounded by 
\begin{eqnarray}
\Pr\!\hspace{-.02cm}\left( \begin{array}{l|l}\begin{array}{c}(W^n(\bar{m},\bar{l}_0),U^n(\bar{m},\bar{l}_0,\bar{l}_1),Y^n_1)\!\in\! A^n_\epsilon \\ \text{for some }(\bar{m},\bar{l}_0)\!\neq\!(1,1)\text{ and } \bar{l}_1\end{array} &\mathcal{I}\!=\!\mathbf{1}  \end{array}\right).
\end{eqnarray} If $W^n(\bar{m},\bar{l}_0)$ was independent of $Y^n_1$ (for $(\bar{m},\bar{l}_0)\neq (1,1)$), this would have been the non-unique decoding error probability. However, the conditioning on $\mathcal{I}=\mathbf{1}$ makes this not exactly true. Nonetheless, this probability term is still ``almost" the non-unique decoding error probability. We make this statement more precise. Let $W^n$,  $U^n$,  $V^n$, $\bar{W}^n$, and $\bar{U}^n$ denote respectively $W^n(1,1)$, $U^n(1,1,1)$, $V^n(1,1,1)$, $W^n(\bar{m},\bar{l}_0)$, and $U^n(\bar{m},\bar{l}_0,\bar{l}_1)$. The above probability term is upper-bounded by $$2^{n(R+T_0+T_1)}\sum_{(\bar{w}^n,\bar{u}^n,y_1^n)\in\mathcal{A}^n_\epsilon} p_{Y_1^n\bar{W}^n\bar{U}^n|\mathcal{I}}(y_1^n,\bar{w}^n,\bar{u}^n|\mathbf{1}),$$ and the inner pmf may be written as follows.
\begin{align*}
& p_{Y_1^n\bar{W}^n\bar{U}^n|\mathcal{I}}(y_1^n,\bar{w}^n,\bar{u}^n|\mathbf{1})\\&=  p_{Y_1^n|\mathcal{I}}(y_1^n|\mathbf{1})p_{\bar{W}^n|Y_1^n\mathcal{I}}(\bar{w}^n\!|y_1^n,\mathbf{1})  p_{\bar{U}^n|\bar{W}^nY_1^n\mathcal{I}}(\bar{u}^n\!|\bar{w}^n\!,y_1^n,\mathbf{1})\\&\stackrel{(a)}{\leq} p_{W^n}(\bar{w}^n)(1+c(\epsilon)) p_{Y_1^n|\mathcal{I}}(y_1^n|\mathbf{1})  p_{U^n|W^n}(\bar{u}^n|\bar{w}^n)
\end{align*}
In the above inequality, $(a)$ follows by  $p_{\bar{W}^n|Y_1^n\mathcal{I}}(\bar{w}^n|y_1^n,\mathbf{1})\leq (1+c(\epsilon))p_{{W}^n}(\bar{w}^n)$ (see \cite[Lemma 1]{MineroLimKim13}), and $c(\epsilon)\to0$ as $n$ grows large. Standard typicality arguments then bound this term (for every $\epsilon>0$ and for some large enough $n$) by $ (1+c(\epsilon))2^{n(R+T_0+T_1)}2^{-n(I(WU;Y)-\gamma_1(\epsilon))}.$

To analyze te second probability term, let $\tilde{W}^n$ and $\hat{U}^n$ denote $W^n(\tilde{m},\tilde{l}_0)$ and $U^n(1,1,\hat{l}_1)$ respectively. The second term of \eqref{term1new} is bounded from above by 
\begin{align*}
&2^{n(R+T_0+T_1)}\nonumber\\&\times\!\!\!\!\!\sum_{(\! w^n \!\!, s^n \!\!, u^n\!\! , v^n\!\! , y_1^n\! )\in\mathcal{A}^n_\epsilon}\!\!\sum_{\substack{\tilde{w}^n:\\(\! \tilde{w}^n \!\!, y_1^n \!)\in\mathcal{A}^n_\epsilon}}\!\sum_{\substack{\hat{u}^n:\\(\! w^n \!\!, \hat{u}^n\!\! , y_1^n\! )\in\mathcal{A}^n}} \!\!\!\!\!\!\!\!\! p(\!w^n\!\!,s^n\!\!,u^n\!\!,v^n\!\!,y_1^n\!,\tilde{w}^n\!\!,\hat{u}^n\!|\hspace{-0.025cm}\mathbf{1}\!),\end{align*} 
and we treat the inner pmf in a similar way as in Section \ref{onlinedecoder}.
\begin{align*}
&p_{W^nS^nU^nV^nY^n_1\tilde{W}^n\hat{U}^n|\mathcal{I}}(w^n,s^n,u^n,v^n,y_1^n,\tilde{w}^n,\hat{u}^n|\mathbf{1})\\
&=  p({w}^n,s^n,u^n,v^n,y_1^n|\mathbf{1})  p(\tilde{w}^n|w^n,s^n,u^n,v^n,y_1^n,\mathbf{1})\\&\quad\times  p(\hat{u}^n|\tilde{w}^n,w^n,s^n,u^n,v^n,y_1^n,\mathbf{1})
\end{align*}
It is now easy to see that (e.g., see \cite{MineroLimKim13}) \begin{eqnarray*} p(\tilde{w}^n|w^n,s^n,u^n,v^n,y_1^n,\mathbf{1})&=&p(\tilde{w}^n|w^n,s^n,\mathbf{1})\\&\leq&  (1+c(\epsilon))p_{\tilde{W}^n}(\tilde{w}^n).\end{eqnarray*}  Similarly, it turns out that (see Appendix \ref{app:typicalitylemma} and follow a similar line of argument)
\begin{eqnarray*} p(\hat{u}^n|\tilde{w}^n,w^n,s^n,u^n,v^n,y_1^n,\mathbf{1})&=&p(\hat{u}^n|w^n,s^n,u^n,v^n,\mathbf{1})\\&\leq&  2^{n\delta(\epsilon)} p_{\hat{U}^n|W^n}(\hat{u}^n|w^n).\end{eqnarray*} Therefore, the second term of \eqref{term1new} is bounded by $ 2^{n(R+T_0+T_1)}2^{-n(I(WU;Y)-\gamma_2(\epsilon)-\delta(\epsilon))}$.

One sees that the non-unique decoding constraints are sufficient to drive both terms of \eqref{term1new} to zero, as $n$ goes large.}

\section{The second probability term of inequality
\eqref{jrnl1-peexample3} can be made arbitrarily small by choosing
sufficiently large $n$ under the non-unique decoding constraints in
\cite{BandemerElGamal11}}
\label{jrnl1-appendixlast}
To upper-bound the second probability term of inequality
\eqref{jrnl1-peexample3}, we use union bound and inclusion of events to
obtain the expression in \eqref{jrnl1-peexample3term1}. We then show that each probability
term of inequality \eqref{jrnl1-peexample3term1} can be made arbitrarily
small by choosing a sufficiently large $n$, if the non-unique decoding constraints of
\cite{BandemerElGamal11} hold.
\allowdisplaybreaks
{{\begin{align}
&\Pr\!\! \left(\!\!\!\!\!\left.\begin{array}{c}(Q^n\!,X^n_1(\bar{m}_1),Y^n_1) \in \mathcal{A}^n_\epsilon\\\text{ for some }\bar{m}_1 \neq  1,\text{ and}\\ (Q^n\!,X^n_1(\breve{m}_1),X^n_{21}(\breve{m}_2),Y^n_1) \in \mathcal{A}^n_\epsilon\\\text{ for some }(\breve{m}_1,\breve{m}_2) \neq  (1,1),\text{ and}\\ (Q^n\!,X^n_1(\dot{m}_1),X^n_{31}(\dot{m}_3),Y^n_1) \in \mathcal{A}^n_\epsilon\\\text{ for some }(\dot{m}_1,\dot{m}_3) \neq  (1,1),\text{ and}\\(Q^n\!,X^n_1(\hat{m}_1),S^n_1(\hat{m}_2,\hat{m}_3)\hspace{1.8cm}\\\hspace{1cm},X^n_{21}(\hat{m}_2),X^n_{31}(\hat{m}_3),Y^n_1) \!\in\! \mathcal{A}^n_\epsilon\\\text{ for some }(\hat{m}_2,\hat{m}_3) \!\neq\!  (1,1),\,  \hat{m}_1 \!=\! 1\end{array}\!\right|\mathcal{I}=\mathbf{1}\!\!\right)\nonumber\\
&\leq\Pr\!\! \left(\!\!\!\!\!\left.\begin{array}{c}(Q^n\!,X^n_1(\bar{m}_1),Y^n_1) \in \mathcal{A}^n_\epsilon\\\text{ for some }\bar{m}_1 \neq  1,\text{ and}\\ (Q^n\!,X^n_1(\dot{m}_1),X^n_{31}(\dot{m}_3),Y^n_1) \in \mathcal{A}^n_\epsilon\\\text{ for some }(\dot{m}_1,\dot{m}_3) \neq  (1,1),\text{ and}\\(Q^n\!,X^n_1(\hat{m}_1),S^n_1(\hat{m}_2,\hat{m}_3)\hspace{1.8cm}\\\hspace{1cm},X^n_{21}(\hat{m}_2),X^n_{31}(\hat{m}_3),Y^n_1) \!\in\! \mathcal{A}^n_\epsilon\\\text{ for some }\hat{m}_2\! \neq\!  1,\, \hat{m}_3 \!=\! 1,\,  \hat{m}_1\! =\! 1\end{array}\!\!\!\right|\mathcal{I}=\mathbf{1}\!\!\right)\nonumber \\
&\quad+\Pr\!\! \left(\!\!\!\!\!\left.\begin{array}{c}(Q^n\!,X^n_1(\bar{m}_1),Y^n_1) \in \mathcal{A}^n_\epsilon\\\text{ for some }\bar{m}_1 \neq  1,\text{ and}\\ (Q^n\!,X^n_1(\breve{m}_1),X^n_{21}(\breve{m}_2),Y^n_1) \in \mathcal{A}^n_\epsilon\\\text{ for some }(\breve{m}_1,\breve{m}_2) \neq  (1,1),\text{ and}\\ (Q^n\!,X^n_1(\hat{m}_1),S^n_1(\hat{m}_2,\hat{m}_3)\hspace{1.8cm}\\\hspace{1cm},X^n_{21}(\hat{m}_2),X^n_{31}(\hat{m}_3),Y^n_1) \!\in\! \mathcal{A}^n_\epsilon\\\text{ for some }\hat{m}_2 \!=\! 1,\, \hat{m}_3 \!\neq\!  1,\,  \hat{m}_1 \!=\! 1\end{array}\!\!\!\right|\mathcal{I}=\mathbf{1}\!\!\right)\nonumber \\
&\quad+\Pr\!\! \left(\!\!\!\!\!\left.\begin{array}{c}(Q^n\!,X^n_1(\bar{m}_1),Y^n_1) \in \mathcal{A}^n_\epsilon\\\text{ for some }\bar{m}_1 \neq  1,\text{ and}\\(Q^n\!,X^n_1(\hat{m}_1),S^n_1(\hat{m}_2,\hat{m}_3)\hspace{1.8cm}\\\hspace{1cm},X^n_{21}(\hat{m}_2),X^n_{31}(\hat{m}_3),Y^n_1) \!\in\! \mathcal{A}^n_\epsilon\\\text{ for some }\hat{m}_2 \!\neq\!  1,\, \hat{m}_3 \!\neq\!  1,\,  \hat{m}_1 \!=\! 1\end{array}\!\!\!\right|\!\mathcal{I}=\mathbf{1}\!\!\right)
\label{jrnl1-peexample3term1}
\end{align}}}

The first  probability term of
 \eqref{jrnl1-peexample3term1} (and similary the second term) is analyzed below. 
 \begin{align}
&\Pr \left(\!\!\!\left.\begin{array}{c}(Q^n\!,X^n_1(\bar{m}_1),Y^n_1)\in\mathcal{A}^n_\epsilon\\\text{ for some }\bar{m}_1 \neq  1,\text{ and}\\ (Q^n\!,X^n_1(\dot{m}_1),X^n_{31}(\dot{m}_3),Y^n_1)\in\mathcal{A}^n_\epsilon\\\text{ for some }(\dot{m}_1,\dot{m}_3) \neq  (1,1),\text{ and}\\(Q^n\!,X^n_1(\hat{m}_1),S^n_1(\hat{m}_2,\hat{m}_3)\hspace{1.8cm}\\\hspace{1cm},X^n_{21}(\hat{m}_2),X^n_{31}(\hat{m}_3),Y^n_1)\!\in\!\mathcal{A}^n_\epsilon\\\text{ for some }\hat{m}_2 \!\neq\!  1,\ \hat{m}_3 \!=\! 1,\  \hat{m}_1\! = \!1\end{array}\!\!\!\right|\!\mathcal{I}\!=\!\mathbf{1}\right)\label{jrnl1-derivationexample21}\\
&\leq \Pr\! \left(\!\!\!\left.\begin{array}{c} (Q^n\!,X^n_1(\dot{m}_1),X^n_{31}(\dot{m}_3),Y^n_1)\!\in\!\mathcal{A}^n_\epsilon\\\text{ for some }\dot{m}_1 \neq  1,\ \dot{m}_3 = 1,\text{ and}\\(Q^n\!,X^n_1(\hat{m}_1),S^n_1(\hat{m}_2,\hat{m}_3)\hspace{1.8cm}\\\hspace{1cm},X^n_{21}(\hat{m}_2),X^n_{31}(\hat{m}_3), Y^n_1)\!\in\!\mathcal{A}^n_\epsilon\\\text{ for some }\hat{m}_2 \!\neq\!  1,\, \hat{m}_3 \!= \!1,\,  \hat{m}_1\! =\! 1\end{array}\hspace{0cm}\!\!\right|\!\mathcal{I}\!=\!\mathbf{1}\right)\label{57}\\
& +\Pr\! \left(\!\!\!\left.\begin{array}{c} ( Q^n\!,X^n_1(\dot{m}_1),X^n_{31}(\dot{m}_3),Y^n_1 )\!\in\!\mathcal{A}^n_\epsilon\\\text{ for some }\ \dot{m}_1 \neq  1,\ \dot{m}_3 \neq 1,\text{ and}\\( Q^n\!,X^n_1(\hat{m}_1),S^n_1(\hat{m}_2,\hat{m}_3)\hspace{1.8cm}\\\hspace{1cm},X^n_{21}(\hat{m}_2),X^n_{31}(\hat{m}_3), Y^n_1 )\!\in\!\mathcal{A}^n_\epsilon\\\text{ for some }\hat{m}_2 \!\neq  \!1,\, \hat{m}_3 \!=\! 1,\, \hat{m}_1 \!=\! 1\end{array}\!\!\!\right|\!\mathcal{I}\!=\!\mathbf{1}\right)\label{58}\\
& +\Pr\! \left(\!\!\!\left.\begin{array}{c}( Q^n\!,X^n_1(\bar{m}_1),Y^n_1)\!\in\!\mathcal{A}^n_\epsilon\\\text{ for some }\bar{m}_1 \neq  1,\text{ and}\\ ( Q^n\!,X^n_1(\dot{m}_1),X^n_{31}(\dot{m}_3),Y^n_1 )\!\in\!\mathcal{A}^n_\epsilon\\\text{ for some }\ \dot{m}_1 = 1,\ \dot{m}_3 \neq 1,\text{ and}\\( Q^n\!,X^n_1(\hat{m}_1),S^n_1(\hat{m}_2,\hat{m}_3)\hspace{1.8cm}\\\hspace{1cm},X^n_{21}(\hat{m}_2),X^n_{31}(\hat{m}_3), Y^n_1 )\!\in\!\mathcal{A}^n_\epsilon\\\text{ for some }\hat{m}_2 \!\neq \! 1,\, \hat{m}_3\! =\! 1,\,  \hat{m}_1 \!=\! 1\end{array}\!\!\!\right|\!\mathcal{I}\!=\!\mathbf{1}\right)\label{59}\\
&\leq\!\! 2^{nR_1}2^{n\min\{R_2,H(X_{21}|Q)\}}2^{-nI(X_1X_{21};Y_1|QX_{31})+2n\delta(\epsilon)}\label{60}\\
&+2^{nR_1}2^{n\min\{R_3,H(X_{31}|Q)\}}2^{n\min\{R_2,H(X_{21}|Q)\}}\nonumber\\&\hspace{2cm}\times2^{-nI(X_1X_{21}X_{31};Y_1|Q)+2n\delta(\epsilon)}\label{61}\\
&+2^{nR_1}2^{n\min\{R_3,H(X_{31}|Q)\}}2^{n\min\{R_2,H(X_{21}|Q),H(S_1|X_{31}Q)\}}\nonumber\\&\hspace{2cm}\times2^{-nI(X_1X_{21}X_{31};Y_1|Q)+2n\delta(\epsilon)}\label{jrnl1-derivationexample2last}
\end{align}
where $\delta(\epsilon)$ vanishes to zero as $\epsilon\to 0$.
The first inequality above is obtained by considering the different cases of $(\dot{m}_1,\dot{m}_2)$, and using inclusion of events. In the second inequality, the probability terms in \eqref{57}, \eqref{58},  and \eqref{59} are bounded by \eqref{60}, \eqref{61}, and \eqref{jrnl1-derivationexample2last}, respectively. Essentially, the derivation follows from an analysis similar to that of
\eqref{jrnl1-pe-term2} in Section \ref{onlinedecoder}, together with the bounding techniques of \cite{BandemerElGamal11} (where the key is in that depending on the input pmfs and the message rates, the number of possible combined interference sequences can be equal to the number of interfering message pairs, the number of typical combined interference sequences, or some combination of the two-- see \cite[Lemma 2 and Lemma~3]{BandemerElGamal11}).
 Here, we briefly outline how \eqref{57} is bounded by \eqref{60}, and we leave the derivation of the other two terms to the interested reader. 

 We start with the following bound.
\begin{align}
&\Pr\left(\hspace{-.2cm}\left.\begin{array}{c} ( Q^n\!,X^n_1(\dot{m}_1),X^n_{31}(\dot{m}_3), Y^n_1 )\!\in\!\mathcal{A}^n_\epsilon\\\text{ for some }\dot{m}_1 \neq  1,\ \dot{m}_3 = 1,\text{ and}\\( Q^n\!,X^n_1(\hat{m}_1),S^n_1(\hat{m}_2,\hat{m}_3)\hspace{1cm}\\\hspace{1cm},X^n_{21}(\hat{m}_2),X^n_{31}(\hat{m}_3), Y^n_1 )\!\in\!\mathcal{A}^n_\epsilon\\\text{ for some }\hat{m}_2 \!\neq\!  1,\, \hat{m}_3 \!=\! 1,\,  \hat{m}_1 \!=\! 1\end{array}\!\!\!\right| \mathcal{I}=\mathbf{1}\right)\nonumber\\
&\leq2^{nR_1}\!\Pr\!\! \left(\left.\hspace{-.35cm}\begin{array}{c} (Q^n,\dot{X}^n_1,X^n_{31},Y^n_1)\!\in\!\mathcal{A}^n_\epsilon,\text{ and}\\(Q^n,X^n_1,X^n_{21}(\hat{m}_2),X^n_{31},Y^n_1)\!\in\!\mathcal{A}^n_\epsilon\\\text{ for some }\hat{m}_2 \neq  1\end{array}\!\!\!\right|\mathcal{I}\!=\!\mathbf{1}\hspace{-.1cm}\right)\label{tired0}
\end{align}
Using the bounding technique of \cite{BandemerElGamal11}, the above probability term can be upper bounded in two different manners. By counting the number of different messages $\hat{m}_2$, we find\begin{align}
&\Pr \left(\left.\hspace{-.2cm}\begin{array}{c} (Q^n,\dot{X}^n_1,X^n_{31},Y^n_1)\!\in\!\mathcal{A}^n_\epsilon,\text{ and}\\(Q^n,X^n_1,X^n_{21}(\hat{m}_2),X^n_{31},Y^n_1)\!\in\!\mathcal{A}^n_\epsilon\\\text{ for some }\hat{m}_2 \neq  1\end{array}\!\!\right|\mathcal{I}=\mathbf{1}\hspace{0cm}\right)\nonumber\\
&\leq 2^{nR_2}\Pr \left(\left.\hspace{-.2cm}\begin{array}{c} (Q^n,\dot{X}^n_1,X^n_{31},Y^n_1)\!\in\!\mathcal{A}^n_\epsilon,\text{ and}\\(Q^n,X^n_1,\hat{X}^n_{21},X^n_{31},Y^n_1)\!\in\!\mathcal{A}^n_\epsilon\end{array}\right|\mathcal{I}=\mathbf{1}\hspace{-0cm}\right)\nonumber\\
&\leq 2^{nR_2}2^{-nI(X_1X_{21};Y_1|QX_{31})+2n\delta(\epsilon)}.\label{tired1}
\end{align}
Furthermore, by counting the number of typical sequences $X_{21}^n$, we find
\begin{align}
&\Pr \left(\left.\hspace{-.2cm}\begin{array}{c} (Q^n,\dot{X}^n_1,X^n_{31},Y^n_1)\!\in\!\mathcal{A}^n_\epsilon,\text{ and}\\(Q^n,X^n_1,X^n_{21}(\hat{m}_2),X^n_{31},Y^n_1)\!\in\!\mathcal{A}^n_\epsilon\\\text{ for some }\hat{m}_2 \neq  1\end{array}\!\!\right|\mathcal{I}=\mathbf{1}\hspace{0cm}\right)\nonumber\\
&\leq \!\!\sum_{(q^n,x_{31}^n)\in\mathcal{A}_\epsilon^n}\!\!\left[p(q^n,x^n_{31})\vphantom{\times\Pr \left(\left.\hspace{-.2cm}\begin{array}{c} (q^n,\dot{X}^n_1,x^n_{31},Y^n_1)\!\in\!\mathcal{A}^n_\epsilon\\(q^n,X^n_1,\hat{X}^n_{21}(\hat{m}_2)\hspace{.8cm}\\\hspace{1cm},x^n_{31},Y^n_1)\!\in\!\mathcal{A}^n_\epsilon\\\text{ for some }\hat{m}_2\neq 1\end{array}\!\!\!\!\right|\!\begin{array}{l}Q^n\!=\!q^n,\\ X_3^n\!=\!x_3^n,\\\mathcal{I}\!=\!\mathbf{1}\end{array}\hspace{-.2cm}\right)}\right.\nonumber\\&\hspace{1.9cm}\left.\times\Pr \left(\left.\hspace{-.2cm}\begin{array}{c} (q^n,\dot{X}^n_1,x^n_{31},Y^n_1)\!\in\!\mathcal{A}^n_\epsilon\\(q^n,X^n_1,\hat{X}^n_{21}(\hat{m}_2)\hspace{.8cm}\\\hspace{1cm},x^n_{31},Y^n_1)\!\in\!\mathcal{A}^n_\epsilon\\\text{ for some }\hat{m}_2\neq 1\end{array}\!\!\!\!\right|\!\begin{array}{l}Q^n\!=\!q^n,\\ X_3^n\!=\!x_3^n,\\\mathcal{I}\!=\!\mathbf{1}\end{array}\hspace{-.2cm}\right)\!\!\right]\nonumber\\
&\leq2^{nH(X_{21}|Q)}2^{-nI(X_1X_{21};Y_1|QX_{31})+2n\delta(\epsilon)}.\label{tired2}
\end{align}
Putting together \eqref{tired0}, \eqref{tired1}, and \eqref{tired2} results in the bound \eqref{60}.

Finally, the third probability term of  \eqref{jrnl1-peexample3term1} is
bounded as follows.
\begin{align}
& \Pr\!\!\left(\!\left.\hspace{-.3cm}\begin{array}{c}( Q^n\!,X^n_1(\bar{m}_1), Y^n_1 )\in\mathcal{A}^n_\epsilon\\\text{ for some }\bar{m}_1\neq ,\text{ and}\\(\!Q^n\! , X^n_1 ( \hat{m}_1 ) , S^n_1 ( \hat{m}_2 , \hat{m}_3 ) \hspace{1.8cm}\\\hspace{1cm}, X^n_{21} ( \hat{m}_2 ) , X^n_{31} ( \hat{m}_3 ) , Y^n_1\!)\!\in\!\mathcal{A}^n_\epsilon\\\text{ for some }\hat{m}_2\!\neq\! 1,\, \hat{m}_3\!\neq\! 1,\,  \hat{m}_1\!=\!1\end{array}\hspace{-.2cm}\right| \mathcal{I}\!=\!\mathbf{1}\hspace{-.15cm}\right) \label{jrnl1-onetolastderivation}\\
&\leq\Pr\!\left(\!\!\left.\hspace{-.2cm}\begin{array}{c}(Q^n,X^n_1(\bar{m}_1),Y^n_1)\in\mathcal{A}^n_\epsilon\\\text{ for some }\tilde{m}_1\neq 1,\text{ and}\\(Q^n,X^n_1(\hat{m}_1),S^n_1(\hat{m}_2,\hat{m}_3),Y^n_1)\in\mathcal{A}^n_\epsilon\\\text{ for some }\hat{m}_2\neq 1,\ \hat{m}_3\neq 1,\  \hat{m}_1=1\end{array}\right|\mathcal{I}=\mathbf{1}\hspace{0cm}\right) \\
&\leq 2^{nR_1}2^{n\min\{R_2+R_3,R_2+H(X_{31}|Q),H(X_{21}|Q)+R_3,H(S_1|Q)\}}\nonumber\\&\quad\times2^{-nI(X_1S_1;Y_1|Q)+\delta(\epsilon)}\label{jrnl1-lastderivation},
\end{align}
where $\delta(\epsilon)\to 0$ when $\epsilon\to 0$.

\section{{The second probability term in \eqref{forlastapp} is bounded by $2^{n(R_1+T_1+R_r-I(V_0V_1;Y_1|U)+\gamma_2(\epsilon)+\delta(\epsilon))}$}}
\label{app-last-example3}
We now show that for any $\epsilon>0$, the second probability term in \eqref{forlastapp} is bounded (for a large enough $n$) by $2^{n(R_1+T_1+R_r-I(V_0V_1;Y_1|U)+\gamma_2(\epsilon)+\delta(\epsilon))}$. Let us denote $U^n(1)$, $V_0^n(1,1,1)$, $V_0^n(1,\tilde{m}_1,\tilde{m}_r)$, and $V_1^n(1,1,1,\hat{t}_1)$  by $U^n$, $V_0^n$, $\tilde{V}_0^n$, and $\hat{V}^n_1$ respectively. Proceeding as in Section \ref{onlinedecoder}, we bound \eqref{forlastapp} in \eqref{firstderiv}-\eqref{lastderiv} at the top of Page \pageref{lastderiv}. Note that inequality \eqref{tired3} follows for the same reasons as in Appendix~\ref{app:typicalitylemma}.

\section{The probability that the joint unique decoder of $m_0$, $m_1$ and $m_r$ in Subsection \ref{jrnl1-example1} fails}
\label{jrnl1-appendixexample1etc}
We analyze the probability that a joint unique decoder fails to uniquely decode indices $m_0$, $m_1$, $m_r$ and show that it fails with high probability if either \eqref{jrnl1-r1} or \eqref{jrnl1-r2} is violated. 
Note that
\begin{align}
&\hspace{-.2cm}\Pr\!\left(\!\!\!\!\!   \left.\begin{array}{l}(U^n(\hat{m}_0),V^n_0(\hat{m}_0,\hat{m}_1,\hat{m}_r),Y^n_1)\!\in\! \mathcal{A}^n_\epsilon\\\text{ for some } (\hat{m}_0,\hat{m}_1,\hat{m}_r) \!\neq\!  (1,1,1)\end{array}\!\! \!\right|\mathcal{I}=\mathbf{1}\!\!  \right)\nonumber\\&\geq\Pr\!\left(\!\!\!\!\!\!\left.   \begin{array}{l}(U^n(\hat{m}_0),V^n_0(\hat{m}_0,\hat{m}_1,\hat{m}_r),Y^n_1)\!\in\! \mathcal{A}^n_\epsilon\\\text{ for some } (\hat{m}_0,\hat{m}_1,\hat{m}_r)\! \neq\!  (1,1,1),\, \hat{m}_0 \!=\! 1\end{array}\!\!\!\right|\mathcal{I}=\mathbf{1}\!\!   \right),\label{jrnl1-term1}
\end{align}
and
\begin{align}
&\hspace{-.2cm}\Pr\!\left(\!\!\!\!\!  \left.\begin{array}{l}(U^n(\hat{m}_0),V^n_0(\hat{m}_0,\hat{m}_1,\hat{m}_r),Y^n_1)\!\in\! \mathcal{A}^n_\epsilon\\\text{ for some } (\hat{m}_0,\hat{m}_1,\hat{m}_r) \!\neq\!  (1,1,1)\end{array}\!\!\!  \right|\mathcal{I}=\mathbf{1} \!\!\right)\nonumber\\&\geq\Pr\!\left(\!\!\!\!\!\!\left.   \begin{array}{l}(U^n(\hat{m}_0),V^n_0(\hat{m}_0,\hat{m}_1,\hat{m}_r),Y^n_1)\!\in\! \mathcal{A}^n_\epsilon\\\text{ for some } (\hat{m}_0,\hat{m}_1,\hat{m}_r) \!\neq\!  (1,1,1),\, \hat{m}_0 \!\neq\! 1\end{array} \!\!\!\right|\mathcal{I}=\mathbf{1}\!\!  \right).\label{jrnl1-term2}
\end{align}
It is now not hard to see that the probability term on the right hand side of inequality \eqref{jrnl1-term1} is arbitrarily close to $1$ if $R_1+R_r>I(V_0;Y_1|U)$ and  the probability term on the right hand side of inequality \eqref{jrnl1-term2} is arbitrarily close to $1$ if $R_1+R_r>I(UV_0;Y_1)$.

\begin{widetext}
\begin{align}
&\Pr\!\left( \left.\begin{array}{c} (U^n(1),V_0^n(1,1,1),V_1^n(1,1,1,1),V_2^n(1,1,1,1),Y^n_1)\in\mathcal{A}_\epsilon^n\text{ and}\\(U^n(\tilde{m}_0),V^n_0(\tilde{m}_0,\tilde{m}_1,\tilde{m}_r),Y^n_1)\!\in\! A^n_\epsilon \\ {\text{for some }(\tilde{m}_0,\tilde{m}_1,\tilde{m}_r)\!\neq\!(1,1,1),\ \tilde{m_0}\!=\!1}\text{ and} \\(U^n(\hat{m}_0),V^n_0(\hat{m}_0,\hat{m}_1,\hat{m}_r),V^n_1(\hat{m}_0,\hat{m}_1,\hat{m}_r,\hat{t}_1),Y^n_1)\!\in\! A^n_\epsilon \\ {\text{for some }(\hat{m}_0,\hat{m}_1,\hat{m}_r)\!=\!(1,1,1),\ \!\hat{t}_1\!\neq\! 1}\end{array}\right|\mathcal{I}=\mathbf{1} \right)\label{firstderiv}\\
&\leq2^{n(R_1+R_r+T_1)}\Pr\left(\left.\begin{array}{c}(U^n(1),V_0^n(1,1,1),V_1^n(1,1,1,1),V_2^n(1,1,1,1),Y^n_1)\in\mathcal{A}_\epsilon^n\text{ and}\\(U^n(1),V^n_0(1,\tilde{m}_1,\tilde{m}_r),Y^n_1)\!\in\! A^n_\epsilon\text{ and }\\ (U^n(1),V^n_0(1,1,1),V^n_1(1,1,1,\hat{t}_1),Y^n_1)\!\in\! A^n_\epsilon\end{array}\right|\mathcal{I}=\mathbf{1}\right)\\
&\leq2^{n(R_1+R_r+T_1)}\hspace{-.75cm}\sum_{(u^n\!,v_0^n,v_1^n,v_2^n,y_1^n)\in\mathcal{A}^n_\epsilon}\sum_{\substack{\tilde{v}_0^n:\\(u^n\!,\tilde{v}_0^n,y_1^n)\in\mathcal{A}^n_\epsilon}}\sum_{\substack{\hat{v}_1^n:\\(u^n\!,v_0^n,\hat{v}_1^n,y_1^n)\in\mathcal{A}^n_\epsilon}}\hspace{-.75cm}p_{U^nV_0^nV_1^nV_2^nY_1^n\tilde{V}^n_0\hat{V}_1^n|\mathcal{I}}(u^n\!,v_0^n,v_1^nv_2^n,y^n_1,\tilde{v}_0^n,\hat{v}_1^n|\mathbf{1})\\
&\leq2^{n(R_1+R_r+T_1)}\hspace{-.75cm}\sum_{(u^n\!,v_0^n,v_1^n,v_2^n,y_1^n)\in\mathcal{A}^n_\epsilon}\sum_{\substack{\tilde{v}_0^n:\\(u^n\!,\tilde{v}_0^n,y_1^n)\in\mathcal{A}^n_\epsilon}}\sum_{\substack{\hat{v}_1^n:\\(u^n\!,v_0^n,\hat{v}_1^n,y_1^n)\in\mathcal{A}^n_\epsilon}}\hspace{-.75cm}\substack{p(u^n\!,v_0^n,v_1^n,v_2^n,y^n_1|\mathbf{1})p(\tilde{v}_0^n|u^n\!,v_0^n,v_1^n,v_2^n,y^n_1,\mathbf{1})\\\times p(\hat{v}_1^n|u^n\!,v_0^n,v_1^n,v_2^n,y^n_1,\tilde{v}_0^n,\mathbf{1})}\\
&=2^{n(R_1+R_r+T_1)}\hspace{-.75cm}\sum_{(u^n\!,v_0^n,v_1^n,v_2^n,y_1^n)\in\mathcal{A}^n_\epsilon}\sum_{\substack{\tilde{v}_0^n:\\(u^n\!,\tilde{v}_0^n,y_1^n)\in\mathcal{A}^n_\epsilon}}\sum_{\substack{\hat{v}_1^n:\\(u^n\!,v_0^n,\hat{v}_1^n,y_1^n)\in\mathcal{A}^n_\epsilon}}\hspace{-.75cm}p(u^n\!,v_0^n,v_1^n,v_2^n,y^n_1|\mathbf{1})p(\tilde{v}_0^n|u^n)p(\hat{v}_1^n|u^n\!,v_0^n,v_1^n,v_2^n,\mathbf{1})\\
&\leq2^{n(R_1+R_r+T_1)}\hspace{-.75cm}\sum_{(u^n\!,v_0^n,v_1^n,v_2^n,y_1^n)\in\mathcal{A}^n_\epsilon}\sum_{\substack{\tilde{v}_0^n:\\(u^n\!,\tilde{v}_0^n,y_1^n)\in\mathcal{A}^n_\epsilon}}\label{tired3}\sum_{\substack{\hat{v}_1^n:\\(u^n\!,v_0^n,\hat{v}_1^n,y_1^n)\in\mathcal{A}^n_\epsilon}}\hspace{-.75cm}2^{n\delta(\epsilon)}p(u^n\!,v_0^n,v_1^n,v_2^n,y^n_1|\mathbf{1})p(\tilde{v}_0^n|u^n)p(\hat{v}_1^n|u^n\!,v_0^n)\\
&\leq2^{n(R_1+R_r+T_1)}2^{-n(I(V_0V_1;Y|U)-\gamma_2(\epsilon)-\delta(\epsilon))}\label{lastderiv}
\end{align}
\end{widetext}

\section{Chernoff Bounds and inequalities \eqref{lastineq1} and \eqref{lastineq2}}
\label{LASTAPP}
Let $N=2^{n(T_3-S_3)}$, $M=2^{n(T_2-S_2)}$.
To simplify notation, we define $X_{i,j}$ to be a binary random variable which takes value $0$ when $(V_2^n(1,1,i),V_3^n(1,1,j))\in\mathcal{A}_\epsilon^n$. For example, $X_{1,1}=1$ by the assumption that $(v_2^n,v_3^n)\in\mathcal{A}_\epsilon^n$.  Also, $N_2(\hat{V}_2^n,v_2^n,v_3^n,\mathbf{C}^\prime)=\sum_{t_3=1}^NX_{2,t_3}$ and $N_3(v_3^n,\mathbf{C}^\prime)=\sum_{t_3=1}^NX_{3,t_3}$. Furthermore, we define $p_{\tilde{v}_2^n}=\Pr((\tilde{v}_2^n,V_3^n(1,1,2))\in\mathcal{A}_\epsilon^n|V_2^n(1,1,2)=\tilde{v}_2^n,U^n=u^n)$. For $\epsilon_1$-typical sequences $\tilde{v}_2^n$ (where $\epsilon_1<\epsilon$), we have $2^{-n(I(V_2;V_3|U)+\delta(\epsilon))}\leq p_{\tilde{v}_2^n}\leq 2^{-n(I(V_2;V_3|U)-\delta(\epsilon))}$. We let $p_l=2^{-n(I(V_2;V_3|U)+\delta(\epsilon))}$ and $p_u=2^{-n(I(V_2;V_3|U)-\delta(\epsilon))}$.
To prove Claim \ref{dblexp}, we show that
\begin{align}
&\Pr\left(\!\!\!\!\begin{array}{l|l}N_2( \hat{V}_2^n, v_2^n, v_3^n, \mathbf{C}^\prime )\!>\!2Np_u\!\!&\!\!U^n\!=\!u^n\end{array}\!\!\!\!\right)\!\leq\! \beta_1\exp\!\left(-\alpha_1\! Np_l\right)\label{chernoff1}
\end{align}
for some $\alpha_1,\beta_1>0$, and
\begin{align}
&\Pr\!\left(\!\!\!\!\begin{array}{l|l}N_3(v_3^n,\mathbf{C}^\prime)\!<\!\frac{1}{2}Np_l\!\!&\!\!U^n\!=\!u^n\end{array}\!\!\!\!\right)\leq \beta_2\exp\left(-\alpha_2 Np_l\right)\label{chernoff2}
\end{align}
for some $\alpha_2,\beta_2>0$.

We start with \eqref{chernoff1}.
\allowdisplaybreaks
\begin{eqnarray*}
&&\Pr\left(N_2(\hat{V}_2^n,v_2^n,v_3^n,\mathbf{C}^\prime)>2Np_u\,\right|\left.\vphantom{N_2(\hat{V}_2^n,v_2^n,v_3^n,\mathbf{C}^\prime)>2Np_u}U^n=u^n\right)\\
&&=\Pr\left(\sum_{t_3=1}^NX_{2,t_3}>2Np_u\,\right|\left.\vphantom{\sum_{t_3=1}^NX_{2,t_3}>2Np_u}U^n=u^n\right)\\
&&\leq \Pr\left(\sum_{t_3= 2}^NX_{2,t_3}>2Np_u-1\right|\left.\vphantom{\sum_{t_3= 2}^NX_{2,t_3}>2Np_u-1}U^n=u^n\right)\\
&&\leq\frac{\mathbb{E}\left[e^{t\sum_{t_3=2}^NX_{2,t_3}}\right|\left.\vphantom{e^{t\sum_{t_3=2}^NX_{2,t_3}}}U^n=u^n\right]}{e^{t2Np_u-t}},\ t>0\\
&&=\frac{\mathbb{E}\left[\mathbb{E}\left[e^{t\sum_{t_3=2}^NX_{2,t_3}}\right|\left.\vphantom{e^{t\sum_{t_3=2}^NX_{2,t_3}}}\hat{V}_2^n,U^n=u^n\right]\right|\left.\vphantom{\mathbb{E}[e^{t\sum_{t_3=2}^NX_{2,t_3}}|\hat{V}_2^n,U^n=u^n]}U^n=u^n\right]}{e^{t2Np_u-t}}\\
&&=\frac{\mathbb{E}\left[\prod_{t_3=2}^N\mathbb{E}\left[e^{tX_{2,t_3}}\right|\left.\vphantom{e^{tX_{2,t_3}}}\hat{V}_2^n,U^n=u^n\right]\right|\left.\vphantom{\prod_{t_3=2}^N\mathbb{E}\left[e^{tX_{2,t_3}}|\hat{V}_2^n,U^n=u^n\right]}U^n=u^n\right]}{e^{t2Np_u-t}}\\
&&=\frac{\mathbb{E}\left[\prod_{t_3=2}^N\left(p_{\hat{V}_2^n}e^{t}+(1-p_{\hat{V}_2^n})\right)\right|\left.\vphantom{\prod_{t_3=2}^N\left(p_{\hat{V}_2^n}e^{t}+(1-p_{\hat{V}_2^n})\right)}U^n=u^n\right]}{e^{t2Np_u-t}}\\
&&\leq\frac{\left(1+p_{u}(e^{t}-1)\right)^N}{e^{t2Np_u-t}}
\end{eqnarray*}
Set $t=\frac{1}{2}$. Then
\begin{eqnarray*}
&&\Pr\left(N_2(\hat{V}_2^n,v_2^n,v_3^n,\mathbf{C}^\prime)>2Np_u\right|\left.\vphantom{N_2(\hat{V}_2^n,v_2^n,v_3^n,\mathbf{C}^\prime)>2Np_u}U^n=u^n\right)\\
&&\leq\frac{\left(1+p_{u}(e^{\frac{1}{2}}-1)\right)^N}{e^{-\frac{1}{2}+Np_u}}\\
&&= e^{\frac{1}{2}}\left(\frac{1+p_{u}(e^{\frac{1}{2}}-1)}{e^{p_u}}\right)^N\\
&&\leq e^{\frac{1}{2}}{e^{-Np_{u}(2-e^{\frac{1}{2}})}}\\
&&\leq \beta_1 e^{-\alpha_1 Np_{l}},\text { for  }\alpha_1=2-e^{\frac{1}{2}},\ \beta_1=e^{\frac{1}{2}}.\\
\end{eqnarray*}
Similarly, to show \eqref{chernoff2} we proceed as follows.
\allowdisplaybreaks
\begin{eqnarray*}
&&\Pr\left(N_3(v_3^n,\mathbf{C}^\prime)<\frac{1}{2}Np_l\,\right|\left.\vphantom{N_3(v_3^n,\mathbf{C}^\prime)<\frac{1}{2}Np_l}U^n=u^n\right)\\
&&=\Pr\left(\sum_{t_3=1}^NX_{3,t_3}<\frac{1}{2}Np_l\,\right|\left.\vphantom{\sum_{t_3=1}^NX_{3,t_3}<\frac{1}{2}Np_l}U^n=u^n\right)\\
&&\leq\Pr\left(\sum_{t_3= 2}^NX_{3,t_3}<\frac{1}{2}Np_l\,\right|\left.\vphantom{\sum_{t_3= 2}^NX_{3,t_3}<\frac{1}{2}Np_l}U^n=u^n\right)\\
&&\leq\frac{\mathbb{E}\left[e^{-t\sum_{t_3=2}^NX_{3,t_3}}\right|\left.\vphantom{e^{-t\sum_{t_3=2}^NX_{3,t_3}}}U^n=u^n\right]}{e^{-t\frac{1}{2}Np_l}},\quad t>0\\
&&=\frac{\mathbb{E}\!\left[\mathbb{E}\!\left[e^{-t\sum_{t_3= 2}^NX_{3,t_3}}\right|\left.\!\vphantom{e^{-t\sum_{t_3= 2}^NX_{3,t_3}}}{V}_2^n(1,1,3),U^n\!=\!u^n\right]\!\right|\left.\!\vphantom{\mathbb{E}\!\left[e^{-t\sum_{t_3= 2}^NX_{3,t_3}}\right|\left.\vphantom{e^{-t\sum_{t_3= 2}^NX_{3,t_3}}}{V}_2^n(1,1,3),U^n\!=\!u^n\right]\!\!}U^n\!=\!u^n\right]}{e^{-t\frac{1}{2}Np_l}}\\
&&=\frac{\mathbb{E}\!\left[\prod_{t_3= 2}^N\mathbb{E}\!\left[e^{-tX_{3,t_3}}\right|\left.\!\vphantom{e^{-tX_{3,t_3}}}{V}_2^n(1,1,3),U^n\!=\!u^n\right]\!\right|\left.\vphantom{\prod_{t_3= 2}^N\mathbb{E}\!\left[e^{-tX_{3,t_3}}\right|\left.\vphantom{e^{-tX_{3,t_3}}}{V}_2^n(1,1,3),U^n\!=\!u^n\right]\!}\!U^n\!=\!u^n\right]}{e^{-t\frac{1}{2}Np_l}}\\
&&=\frac{\mathbb{E}\left[\prod_{t_3=2}^N\left(1-p_{V_2^n(1,1,3)}(1-e^{-t})\right)\right|\left.\vphantom{\prod_{t_3=2}^N\left(1-p_{V_2^n(1,1,3)}(1-e^{-t})\right)}U^n=u^n\right]}{e^{-t\frac{1}{2}Np_l}}\\
&&\leq\frac{\left(1-p_{l}(1-e^{-t})\right)^N}{e^{-t\frac{1}{2}Np_l}}\\
\end{eqnarray*}
Set $t=1$. Then
\allowdisplaybreaks
\begin{eqnarray*}
&&\Pr\left(N_3(v_3^n,\mathbf{C}^\prime)<\frac{1}{2}Np_l\,\right|\left.\vphantom{N_3(v_3^n,\mathbf{C}^\prime)<\frac{1}{2}Np_l}U^n=u^n\right)\\
&&\leq\left(\frac{1-p_{l}(1-e^{-1})}{e^{-\frac{1}{2}p_l}}\right)^N\\
&&\leq e^{-p_{l}(\frac{1}{2}-e^{-1})N}\\
&&\leq \beta_2e^{-\alpha_2N p_{l}},\quad \text{ for  }\alpha_2=\frac{1}{2}-e^{-1},\ \beta_2=1.
\end{eqnarray*}

\section*{Acknowledgement}
The authors would like to thank Suhas Diggavi for helpful comments and discussions. The authors would also like to thank Aaron Wagner and the anonymous reviewers for their comments that helped improve the manuscript.



%




\ifCLASSOPTIONcaptionsoff
  \newpage
\fi



\bibliographystyle{IEEEtran}
\bibliography{jrnl1.bib}

\begin{thebibliography}{10}
\providecommand{\url}[1]{#1}
\csname url@samestyle\endcsname
\providecommand{\newblock}{\relax}
\providecommand{\bibinfo}[2]{#2}
\providecommand{\BIBentrySTDinterwordspacing}{\spaceskip=0pt\relax}
\providecommand{\BIBentryALTinterwordstretchfactor}{4}
\providecommand{\BIBentryALTinterwordspacing}{\spaceskip=\fontdimen2\font plus
\BIBentryALTinterwordstretchfactor\fontdimen3\font minus
  \fontdimen4\font\relax}
\providecommand{\BIBforeignlanguage}[2]{{%
\expandafter\ifx\csname l@#1\endcsname\relax
\typeout{** WARNING: IEEEtran.bst: No hyphenation pattern has been}%
\typeout{** loaded for the language `#1'. Using the pattern for}%
\typeout{** the default language instead.}%
\else
\language=\csname l@#1\endcsname
\fi
#2}}
\providecommand{\BIBdecl}{\relax}
\BIBdecl

\bibitem{ElGamalKim}
A.~{El Gamal} and Y.~H. Kim, \emph{Network Information Theory}.\hskip 1em plus
  0.5em minus 0.4em\relax Cambridge University Press, 2011.

\bibitem{Marton79}
K.~Marton, ``A coding theorem for the discrete memoryless broadcast channel,''
  \emph{IEEE Trans. Inf. Theory}, vol.~25, pp. 306--311, May 1979.

\bibitem{HanKobayashi81}
T.~Han and K.~Kobayashi, ``A new achievable rate region for the interference
  channel,'' \emph{IEEE Trans. Inf. Theory}, vol.~27, no.~1, pp. 49--60, Jan.
  1981.

\bibitem{NairElGamal09}
C.~Nair and A.~{El Gamal}, ``The capacity region of a class of 3-receiver
  broadcast channels with degraded message sets,'' \emph{IEEE Trans. Inf.
  Theory}, vol.~55, no.~10, pp. 4479--4493, Oct. 2009.

\bibitem{ChongMotaniGargElGamal08}
H.~F. Chong, M.~Motani, H.~K. Garg, and H.~{El Gamal}, ``On the {Han-Kobayashi}
  region for the interference channel,'' \emph{IEEE Trans. Inf. Theory},
  vol.~57, no.~7, pp. 3188--3195, Jul. 2008.

\bibitem{NNC11}
S.~Lim, Y.-H. Kim, A.~El-Gamal, and S-Y.Chung, ``Noisy network coding,''
  \emph{IEEE Trans. Inf. Theory}, vol.~57, no.~5, p. 3132–3152, May 2011.

\bibitem{ZaidiPiantanidaShamai12}
\BIBentryALTinterwordspacing
A.~Zaidi, P.~Piantanida, and S.~Shamai, ``Capacity region of multiple access
  channel with states known noncausally at one encoder and only strictly
  causally at the other encoder,'' Jan. 2012. [Online]. Available:
  \url{http://adsabs.harvard.edu/abs/2012arXiv1201.3278Z}
\BIBentrySTDinterwordspacing

\bibitem{NairElGamal10}
\BIBentryALTinterwordspacing
C.~Nair, A.~{El Gamal}, and Y.~K. Chia, ``An achievability scheme for the
  compound channel with state noncausally available at the encoder,'' Apr.
  2010. [Online]. Available: \url{http://arxiv.org/abs/1004.3427}
\BIBentrySTDinterwordspacing

\bibitem{BandemerElGamal11}
B.~Bandemer and A.~{El Gamal}, ``Interference decoding for deterministic
  channels,'' \emph{IEEE Trans. Inf. Theory}, vol.~57, no.~5, pp. 2966--2975,
  May 2011.

\bibitem{ChiaElGamal11}
\BIBentryALTinterwordspacing
Y.~K. Chia and A.~{El Gamal}, ``3-receiver broadcast channels with common and
  confidential messages,'' Jun. 2011. [Online]. Available:
  \url{http://arxiv.org/abs/0910.1407}
\BIBentrySTDinterwordspacing

\bibitem{WuXie10}
\BIBentryALTinterwordspacing
X.~{Wu} and L.-L. {Xie}, ``{On the optimal compressions in the
  compress-and-forward relay schemes},'' Sept. 2010. [Online]. Available:
  \url{http://adsabs.harvard.edu/abs/2010arXiv1009.5959W}
\BIBentrySTDinterwordspacing

\bibitem{WuXieAllerton10}
X.~Wu and L.-L. Xie, ``On the optimality of successive decoding in
  compress-and-forward relay schemes,'' in \emph{Proc. Allerton Conf. on
  Communication, Control and Computing}, Oct. 2010.

\bibitem{KramerITW11}
G.~Kramer and J.~Hou, ``On message lengths for noisy network coding,'' in
  \emph{Proc. IEEE Inf. Theory Workshop}, Oct. 2011.

\bibitem{HouKramerISIT12}
J.~Hou and G.~Kramer, ``Short message noisy network coding for multiple
  sources,'' in \emph{Proc. IEEE Int. Symp. Inf. Theory}, Jul. 2012.

\bibitem{SaeediPrabhakaranDiggavi13}
S.~{Saeedi Bidokhti}, V.~Prabhakaran, and S.~Diggavi, ``A block {M}arkov
  encoding scheme for broadcasting nested message sets,'' in \emph{Proc. IEEE
  Int. Symp. Inf. Theory}, Jul. 2013.

\bibitem{BandemerElGamalKim12}
B.~Bandemer, A.~Gamal, and Y.-H. Kim, ``Simultaneous nonunique decoding is
  rate-optimal,'' in \emph{Proc. Allerton Conf. on Communication, Control and
  Computing}, Oct. 2012.

\bibitem{MineroLimKim13}
\BIBentryALTinterwordspacing
P.~Minero, S.~H. Lim, and Y.-H. Kim, ``{Hybrid coding: an interface for joint
  source-channel coding and network communication},'' Jun. 2013. [Online].
  Available: \url{http://arxiv.org/abs/1306.0530}
\BIBentrySTDinterwordspacing

\bibitem{GroverWagnerSahai13}
\BIBentryALTinterwordspacing
P.~Grover, A.~B. Wagner, and A.~Sahai, ``Information embedding and the triple
  role of control,'' 2013. [Online]. Available:
  \url{http://arxiv.org/abs/1306.5018}
\BIBentrySTDinterwordspacing

\end{thebibliography}

%



%
\begin{IEEEbiographynophoto}{Shirin Saeedi Bidokhti}
received the B.Sc. degree in electrical engineering from the University of Tehran, Iran,  in 2005, and the M.Sc and Ph.D. degrees in Communication Systems from the Ecole Polytechnique F\'{e}d\'{e}rale de Lausanne (EPFL), Switzerland, in 2007 and 2012, respectively. In 2013, she was awarded a Swiss National Science Foundation Prospective Researcher Fellowship. 

Since 2012, she has been a  postdoctoral researcher at the Institute for Communication Engineering, Technische Universit\"{a}t M\"{u}nchen (TUM). Her research interests include information theory and coding, multi-user communications systems, and network coding.\end{IEEEbiographynophoto}

\begin{IEEEbiographynophoto}{Vinod M. Prabhakaran}
received his Ph.D. in 2007 from the EECS Department, University of California, Berkeley. He was a Postdoctoral Researcher at the Coordinated Science Laboratory, University of Illinois, Urbana-Champaign from 2008 to 2010 and at Ecole Polytechnique F\'ed\'erale de Lausanne, Switzerland in 2011. In Fall 2011, he joined the Tata Institute of Fundamental Research, Mumbai, where he currently holds the position of a Reader. His research interests are in information theory, wireless communication, cryptography, and signal processing.

He has received the Tong Leong Lim Pre-Doctoral Prize and the Demetri Angelakos Memorial Achievement Award from the EECS Department, University of California, Berkeley, and the Ramanujan Fellowship from the Department of Science and Technology, Government of India.
\end{IEEEbiographynophoto}






\end{document}